\DeclareMathAlphabet\EuRoman{U}{eur}{m}{n}
\SetMathAlphabet\EuRoman{bold}{U}{eur}{b}{n}
\crefname{assumption}{Assumption}{Assumptions}
\crefname{claim}{Claim}{Claims}
\let\reftagform@=\tagform@
\def\tagform@#1{\maketag@@@{\ignorespaces\textcolor{gray}{(#1)}\unskip\@@italiccorr}}
\renewcommand{\eqref}[1]{\textup{\reftagform@{\ref{#1}}}}
\definecolor{WowColor}{rgb}{.75,0,.75}
\definecolor{SubtleColor}{rgb}{0,0,.50}
\newcounter{margincounter}
\declaretheorem[style=plain,numberwithin=section,name=Theorem]{theorem}
\declaretheorem[style=plain,sibling=theorem,name=Lemma]{lemma}
\declaretheorem[style=plain,sibling=theorem,name=Claim]{claim}
\declaretheorem[style=definition,sibling=theorem,name=Definition]{definition}
\declaretheorem[style=definition,name=Assumption]{assumption}
\declaretheorem[style=remark,sibling=theorem,name=Remark]{remark}
\declaretheoremstyle[
    spaceabove=-6pt,
    spacebelow=6pt,
    headfont=\normalfont\bfseries,
    bodyfont = \normalfont,
    postheadspace=1em,
    qed=$\square$,
    headpunct={{}}]{myproofstyle}
\numberwithin{equation}{section}
\numberwithin{theorem}{section}
\def\[#1\]{\begin{align}#1\end{align}}
\def\*[#1\]{\begin{align*}#1\end{align*}}
\newcommand{\Reals}{\mathbb{R}}
\newcommand{\Nats}{\mathbb{N}}
\newcommand{\NNReals}{\Reals_{\ge 0}}
\newcommand{\PosReals}{\Reals_{> 0}}
\newcommand{\conv}{\textrm{conv}}
\newcommand{\dee}{\mathrm{d}}
\DeclareMathOperator*{\newlim}{\mathrm{lim}\vphantom{\mathrm{infsup}}}
\DeclareMathOperator*{\newmin}{\mathrm{min}\vphantom{\mathrm{infsup}}}
\DeclareMathOperator*{\newmax}{\mathrm{max}\vphantom{\mathrm{infsup}}}
\DeclareMathOperator*{\newinf}{\mathrm{inf}\vphantom{\mathrm{infsup}}}
\DeclareMathOperator*{\newsup}{\mathrm{sup}\vphantom{\mathrm{infsup}}}
\renewcommand{\lim}{\newlim}
\renewcommand{\min}{\newmin}
\renewcommand{\max}{\newmax}
\renewcommand{\inf}{\newinf}
\renewcommand{\sup}{\newsup}
\newcommand{\cF}{\mathcal F}
\newcommand{\proj}[1]{\mathrm{proj}_{#1}}
\newcommand{\BorelSets}[1]{\mathcal{B}[#1]}
\newcommand{\NSE}[1]{{^{*}#1}}
\newcommand{\ST}{\mathsf{st}}
\newcommand{\PowerSet}{\mathscr{P}}
\newcommand{\HReals}{\NSE{\Reals}}
\newcommand{\NS}[1]{\mathrm{NS}(#1)}
\newcommand{\cA}{\mathcal{A}}
\newcommand{\cC}{\mathcal{C}}
\newtheorem{open problem}{Open Problem}
\newcommand{\Loeb}[1]{\overline{#1}}
\newcommand{\interior}[1]{%
  {\kern0pt#1}^{\mathrm{o}}%
}
\newcommand{\refproof}[1]{See \cref{#1} for \IfSubStr{#1}{,}{proofs}{a proof}. }
\newif\iflongform
\renewcommand\thmcontinues[1]{Continued}
\providecommand*{\toclevel@definition}{0}
\providecommand*{\toclevel@theorem}{0}
\providecommand*{\toclevel@lemma}{0}
\@date \else {\vskip3ex \centering\footnotesize\@date\par\vskip1ex}\fi
\else \@footnotetext{\@setdate}\fi}
\newenvironment{customthm}[1]
{\innercustomthm}
{\endinnercustomthm}
\newenvironment{customexp}[1]
{\innercustomexp}
{\endinnercustomexp}
\newenvironment{customprop}[1]
{\innercustomprop}
{\endinnercustomprop}
\title[]{ \vspace{-2cm} Existence of Equilibria in  Large Competitive Markets \\ \vspace{0.5 em} with Bads, Production and Comprehensive Externalities}\thanks{This work is dedicated to the memory of Peter A. Loeb (July 3, 1937 -- November 20, 2024):  deeply-valued friend and colleague of the two older authors, and an important intellectual influence for all four. The authors thank Bernard Cornet, Elena del Mercato, Aniruddha Ghosh, Carlos Herves-Beloso, Peter Loeb, Rich McLean, Emma Garcia Moreno, Arthur Paul Pederson, John Quah, Kevin Reffett, Eddie Schlee, Chris Shannon, Max Stinchcombe, Yeneng Sun, Xinyang Wang, Nicholas Yannelis, and Bill Zame for helpful discussions. The authors appreciate financial support from the following sources: Anderson from Swiss Re through the Consortium for Data Analytics in Risk, Duanmu from National Science Foundation of China No.2460100122,
Khan from 2022 JHU Discovery Award ‘‘Deception and Bad-Faith Communication," and under a contract with {\it Blue Green Future $\&$ Rebalance Earth}.} 
\author[R.~M.~Anderson]{Robert M.~Anderson $^{1}$}
\address{$^{1}$ Institute for Advanced Study in Mathematics, Harbin Institute of Technology\\
and Department of Economics, University of California, Berkeley}
\email{robert.anderson@berkeley.edu}
\author[H.~Duanmu]{Haosui Duanmu $^{2}$}
\address{$^{2}$ Institute for Advanced Study in Mathematics, Harbin Institute of Technology}
\email{duanmuhaosui@hotmail.com}
\author[M.~A.~Khan]{M.~Ali Khan $^{3}$}
\address{$^{3}$ Department of Economics, The Johns Hopkins University}
\email{akhan@jhu.edu}
\author[M.~Uyanik]{Metin Uyanik $^{4}$}
\address{$^{4}$ School of Economics, University of Queensland}
\email{m.uyanik@uq.edu.au}
\date{April 30, 2026}
\newcommand{\cB}{\mathscr{B}}
\newcommand{\FM}[1]{\mathcal{M}(#1)}
\newcommand{\cK}{\mathcal{K}}
\newcommand{\topology}{\mathcal{T}}
\newcommand{\cto}{\twoheadrightarrow}
\newcommand{\cO}{\mathcal{O}}
\newcommand{\cT}{\mathcal{T}}
\newcommand{\NegReals}{\mathbb{R}_{\leq 0}}
\newcommand{\rc}{\mathbb{C}}
\newcommand{\RNum}[1]{\uppercase\expandafter{\romannumeral #1\relax}}
\newcommand*{\argmax}{\operatornamewithlimits{arg max}\limits}
\begin{document}




\maketitle

\begin{abstract}

\vspace{-1cm}

This paper establishes existence of  equilibrium in a measure-theoretic general equilibrium (MGE) model with production, bads, and comprehensive externalities. These features are jointly essential for modeling perfect competition in which emissions of production byproducts impose harm on agents. We show that, when bads and externalities are modeled in an economically natural way, equilibrium exists. This yields the first existence theorem with bads for MGE models, the benchmark for perfect competition, overcoming \citet{ha05}'s nonexistence example. The proof uses nonstandard analysis, which provides a systematic technique to extend results for finite to infinite models.
\hfill{(92~words)}

\end{abstract}

{\it Key Words:}  {bads, production, comprehensive externalities,  measure-theoretic economy, existence of equilibrium, 
nonstandard analysis}

%



\section{Introduction}
\setcounter{footnote}{4}
The Measure-theoretic General Equilibrium 
(MGE)\footnote{In this paper, we use the acronym MGE to refer to General Equilibrium models with a measure space of consumers, including exchange and production economies, with or without bads and externalities.} model introduced by \citet{au64} and \citet{vind64} is the canonical benchmark for large competitive markets.\footnote{Nonstandard analysis provides an alternative formulation of large competitive economies.  In particular, hyperfinite economies offer an important advantage, allowing consideration of allocations that are {\em not} integrable with respect to the underlying population measure, as in \cref{hypereqexist}. 
See \citet{br72-ns}, \citet{ali74}, \citet{nsexchange} and the survey article \citet{an91}.}  Its central economic appeal is that each consumer is negligible and therefore takes prices as given, providing the standard justification for consumer price-taking behavior. 
A natural question is whether this benchmark remains internally consistent once one incorporates economically important features associated with bads, i.e. undesirable commodities.  \citet{ha05} shows that the standard measure-theoretic exchange model fails this test: when bads are present, non-free-disposal equilibrium may fail to exist because candidate equilibrium allocations involve nonintegrable consumption of bads.\footnote{Moreover, in finite approximations the corresponding equilibrium allocations need not be uniformly integrable, so that an asymptotically negligible group of agents absorbs a nonnegligible share of the bads.  Hara explicitly singles out \citet{ma91ecma} on the failure of uniform integrability in sequences of core allocations with nonmonotonic preferences.\label{fn_manelli}} 
Thus, the standard continuum model cannot, as it stands, accommodate bads.

This paper establishes the first existence theorem for an MGE model that simultaneously incorporates bads, production, and comprehensive externalities. These three features belong together economically. Bads that matter in applications are typically generated as byproducts of production rather than from endowments, and the relevant social harms arise through external effects on agents. A competitive model that takes the control of bads seriously must therefore allow for production-generated bads, externalities, and non-free-disposal equilibrium. We show that equilibrium existence can be restored once the MGE model is formulated in a way that respects these economic considerations.

   We argue below that the primary economic motivation for controlling bads depends on the presence of externalities, that economically important bads arise as byproducts of production, and that the disposal of bads generates externalities that can usefully be controlled through pricing.  In other words, we need an MGE model in which bads generated as byproducts of the production process can be consumed by consumers or sequestered by firms, but {\em not}  disposed.  
   This leads us to consider an MGE model that differs 
   from \citet{ha05}'s in several economic respects.   We give economically natural conditions under which equilibrium exists in our model. In particular, we argue in this setting that an integrable bound on individuals' consumption of bads is a natural consequence of their physical limitations.  
Our result also requires rethinking several standard assumptions of the MGE literature. Monotonic preferences and free disposal in production are technically convenient, but they are economically problematic when bads are to be controlled.\footnote{The assumptions of monotonic preferences, free disposal in production, and free disposal in equilibrium, have been relaxed in finite economies;  see \citet{Mck59, McKenzie81},  \citet{Hart75}, \citet{sh76}, \citet{Berg76}, \citet{ga79}, \citet{pole93}, and others.} Monotonicity forces nonnegative prices, while free disposal in production makes it possible for byproducts such as emissions to disappear from the accounting, thereby ruling out the negative prices that provide incentives for control. More generally, if externalities are allowed to depend only on others’ consumption, the model misses the economically central case in which harms arise from what firms produce but society does {\em not} consume. For this reason, the relevant continuum benchmark must allow preferences to depend not only on other consumers’ allocations but also on firms’ production plans\footnote{See
the discussion in \cref{section:Problematic}.}

The theorem has important economic content: it establishes existence of competitive equilibrium in the large-market benchmark needed to analyze production-generated bads and their external effects. The proof is technically demanding, but the statement is standard in economic form. What is new is that bads, production, and comprehensive externalities can be incorporated simultaneously into an MGE economy without sacrificing existence. To make these ideas precise, our model has eight key economic ingredients:
\begin{enumerate}[leftmargin=*]
\item {\em Categorization of commodities into goods and 
bads:} 
 We categorize the commodities into {\it goods} and {\it bads}.
Preferences need not be monotonic over  bads.\footnote{This classification into goods and bads need not be universally agreed upon by consumers.  A particular individual may consider a commodity to be a bad when another individual may consider it to be a good, or may be indifferent to it. In other words, commodities are ``mixed manna" as defined by \citet{bmsy17ecma}.
This partitioning of the commodity space is due originally to  \citet{fo70}  in the context of public and private commodities, and was followed in subsequent work as in \citet{kv85} and their followers. 
} 
\item \label{item-Simultaneous} {\em Simultaneous consideration of externalities and bads:}
In the absence of externalities, every free-disposal equilibrium with nonnegative prices is Pareto optimal;\footnote{See \cref{theorem-First_No_Externality} in \cref{section-First_No_Externality}.} bads are freely disposed at zero price. In the absence of externalities, non-free-disposal equilibrium prices of bads are typically negative and non-free-disposal equilibria are often Pareto dominated by free-disposal equilibria.\footnote{There is no externality in \citet{ha05}'s example. If free disposal of the bad is allowed in that example, the bad can be disposed at zero price and the free-disposal equilibrium avoids the nonintegrability problem. However, when disposal generates external effects, free disposal is no longer innocuous.} 
Thus, in the absence of externalities, \emph{free-disposal equilibrium} rather than non-free-disposal equilibrium is the \emph{correct} equilibrium notion. 
In the presence of externalities, negative prices provide incentives to control bads.  
For this reason, an MGE model that is compatible with externalities and the control of bads must consider non-free-disposal equilibrium.
\item {\em Comprehensive externalities:}\footnote{
For recent work on general equilibrium theory with externalities, see \citet{mercato17}, \citet{sico23}, and \citet{mercato23}.}
  In the equilibrium concept that we present (\cref{defmeasureeco}), consumers’ preferences may depend not only on their own consumption bundles but also on the allocation, firms’ production plans, and equilibrium prices. This allows the model to capture both global externalities (e.g. ${\rm CO}_2$ affecting climate) and local ones (e.g. wastewater affecting nearby households), whereas earlier MGE models restricted externalities to others’ consumption.\footnote{See for example \citet{hammond89}; \citet{cornet05}; \citet{Noguchi05}; \citet{bal08}; and \citet{nieto21} which treat only exchange economies.  The model in \citet{Noguchi2006} has a production sector, but consumers' preferences do not depend on the production.} Our framework instead acknowledges that most bads originate as byproducts of production, and thus extends the literature to cover the externalities that matter most.  Crucially, consumers' preferences may depend on emissions, commodity bundles that are produced but {\em not} consumed.\footnote{Our main result, \cref{standardmain}, establishes the existence of non-free-disposal equilibrium: all byproducts of production must be either eliminated by further production activity (e.g. sequestration of ${\rm CO}_2$)  or consumed; emissions are zero at equilibrium.  Equilibrium prices give firms incentives to eliminate bads and consumers the incentive to absorb them. \cref{standardmainqt} in the Supplementary Material extends our existence theorem to quota equilibrium, which allows for emissions equal to a positive regulator-specified cap.}  Since there are multiple technologies for generating electricity, the $\mathrm{CO}_{2}$ emissions cannot be recovered from the total electricity consumption alone.
\item {\em No free disposal:} 
Free disposal can occur at three stages in the Arrow-Debreu model,\footnote{In free-disposal equilibrium, the market clearing condition is that demand is no greater than supply; the excess is disposed freely.  Free disposal in production is a standard assumption in production economies; the difference between the amount produced by a firm and the amount sold to consumers or other firms is disposed freely, and disappears from the accounting. Finally, consumers are not required to use up all the commodities they purchase; the excess is disposed freely, and disappears from the accounting.} so we exclude free disposal at each of the three stages. This paper focuses on non-free-disposal equilibrium.  No previous paper has established the existence of non-free-disposal equilibrium in a measure-theoretic economy with both bads and externalities.\footnote{\label{fn:nz}\citet{cornet03} consider production economies with possibly satiated consumers, but no externalities; they establish the existence of individualistic free-disposal equilibrium. \citet{cornet05} consider exchange economies with externalities.  Their Theorem 4 establishes the existence of free-disposal individualistic equilibrium with nonnegative prices, so negative prices cannot provide incentives to control the externalities arising from bads.
Their other results require monotonic preferences. \citet{Noguchi2006} consider production economies with externalities. They prove the existence of distributional equilibrium, but they require monotonic preferences and free disposal in production.
\citet{ma06wp} and \citet{in22} consider exchange economies with no externalities.  They impose  conditions to imply that the candidate non-free-disposal equilibrium allocation is integrable.\label{fn:mmi}}

\item \em{Consumption sets integrably bounded:}  
\citet{ha05} considers an exchange economy without externalities; the nonexistence is driven by the fact that  the marginal rate of substitution from the good to the bad goes to infinity.\footnote{\citet{ma06wp} and \citet{in22} explore conditions on marginal utility that resolve Hara's example by making the candidate equilibrium integrable.} 
However, as we argued in \cref{item-Simultaneous} above, in the absence of externalities, free-disposal equilibrium is the correct equilibrium notion. As we demonstrate in \cref{frdiseqremark}, Hara's sequential example has a sequence of free-disposal equilibria with transfers that Pareto dominate the non-free-disposal equilibria.

 In the presence of externalities, the focus should be on non-free-disposal equilibrium. Typically, it is the {\em emission} of the bad, such as $\mathrm{CO}_{2}$, rather than its consumption, that creates a negative externality. 
We rule out free disposal in consumption and require that a consumer who purchases a unit of the bad must absorb or eliminate it.  But the capacity of a given consumer to absorb the bad is clearly limited.
{For example, 
   a consumer endowed with land may agree to accept garbage in return for payment, but the amount of garbage consumers may accept is limited by their integrable endowment of land.}\footnote{Alternatively, ${\rm CO}_2$ produced as a byproduct of electricity generation can potentially be captured and sequestered.  If so, this is a production process carried out by firms, not by consumers.  While a consumer may ingest ${\rm CO}_2$ by breathing in air or drinking a carbonated beverage, it is impossible to convert the carbon into body mass; the ${\rm CO}_2$ is promptly re-emitted.  Thus, the capacity of a {\em consumer} to absorb ${\rm CO}_2$ is {\em zero}.}  This is a constraint on the physical abilities of the consumer, analogous to labor supply in the classical model: just as no consumer can supply more than 24 hours of labor per day, no consumer can absorb an unbounded amount of bads. {In order to control bads, we must rule out free disposal in consumption, and our integrable bound on consumption of bads is a natural assumption reflecting the physical {\em capabilities} of consumers, i.e. it should be imposed on their consumption sets, rather than their preferences.}

  \item {\em Individualistic approach:} Our model uses the individualistic approach  based on a measure space of consumers, rather than the distributional approach in which the distributions of preferences, agents and endowments are modeled rather than the 
space of consumers. 
\item {\em Convex preferences:}  \citet{au66} showed that in an exchange economy with an atomless measure space of consumers, convex preferences are not required for existence of equlibrium. 
However, it is well understood that this is no longer possible with comprehensive externalities.\footnote{See \citet{getal79jme}; \citet{cornet03}; \citet{cornet05}; \citet{bal08}; and \citet{nieto21}.
\citet{Noguchi2006} shows that individualistic equilibrium may fail to exist in a model related to ours, when preferences are not convex.} For this reason, we assume preferences are convex. 
\item {{\em Quota equilibrium:} \citet{ad25} define the notion of quota equilibrium, a generalization of cap-and-trade, in which disposal is allowed up to a positive cap.   They establish the existence of quota equilibrium in finite production economies. \cref{standardmainqt} in the Supplementary Material extends the MGE results of this paper to quota equilibrium.}
\end{enumerate}

{ Nonstandard analysis, originated in \citet{AR65}, is a powerful technique from mathematical logic.\footnote{Mathematical logic has recently found other applications in economics. For example, \citet{Hellman19}, \citet{hellman22} and \citet{yehuda25} applied infinitary logic and descriptive set theory to establish the existence of equilibrium in broad classes of games.} As opposed to earlier applications of nonstandard analysis in economics, which used results from infinite settings to derive asymptotic results in large finite settings,\footnote{Previous applications of nonstandard analysis to economics include \citet{nsexchange}, \citet{oligopoly}, \citet{strongcore}, \citet{Simon-Stinchcombe95}, \citet{ks01}, \citet{DS07}, \citet{ar08}, and \citet{DLS18}.} our use  provides a unified framework for extending  finite economy theorems to continuum economies.\footnote{\citet{kominer25} applied the compactness theorem from propositional logic to remove finiteness assumptions in various economic models.} In particular, nonstandard analysis allows for the construction of a \emph{hyperfinite probability space}, which satisfies all the first order logical properties of finite probability spaces, but can be simultaneously viewed as a measure-theoretic probability space via the Loeb measure. Thus, nonstandard analysis is particularly powerful in situations in which the 
desired result is known for finite objects, its {\it proof} depends heavily on finiteness, but its {\it statement} makes sense for infinite objects.\footnote{This particular technique in nonstandard analysis was developed by the second author in his thesis \citet{duanmuthesis}, and subsequently applied to Markov processes (\citet{drw21}), statistical decision theory (\citet{nsbayes} and \citet{nscredible}) and economics (\citet{adku2}, \citet{adku1} and \citet{berknash23}).}
Our argument involves the following steps:\footnote{A more detailed sketch of the proof is given in \cref{secmethod}.}}
\begin{enumerate}[leftmargin=*]
\item Adapt results of \citet{fl03bk}\footnote{Since Florenzano (2003), existence theory with a finite set of agents has advanced along several dimensions: continuity, order, price-dependence, the survival assumption, and free-disposal assumptions have been weakened; see for example \citet{HY2016,hy17} and \citet{py21}. 
However, the topological structure of the space of continuous preferences facilitates the definition and analysis of MGE economies, so we use the results in \citet{fl03bk} with continuous preferences.  
}  to prove the existence of equilibrium in weighted finite production economies with externalities (\cref{finitemainexpd});
\item Take any standard measure-theoretic production economy with externalities, and construct a \textit{lifting}, 
embedding our standard economy in a hyperfinite economy (\cref{secconshp});
\item Use the \textit{transfer principle} of nonstandard analysis to obtain the theorem for the hyperfinite economy, essentially for free (\cref{hypereqexist}); 
\item\textit{Push down} the theorem for the hyperfinite economy to obtain the existence of equilibrium in the corresponding Loeb measure economy (\cref{alocboundlemma}).\footnote{Many authors have shown the richness of the Loeb $\sigma$-algebra allows the existence of solutions when, due to a lack of measurable sets, no solution exists in the original measure.  See e.g., \citet{Keisler87}, \citet{emmons84}, \citet{KS96pnas}, 
\citet{DS07}.  
\citet{keisler09} 
and \citet{He18} established the necessity of using spaces with rich measure-theoretic structure to model economies with many agents.
See also \citet{CP09} in this connection.}  \cref{alocboundlemma} is of economic interest in its own right, because it allows a broader class of externalities. However, understanding it requires a detailed knowledge of nonstandard analysis.\footnote{When consumers' preferences maps are only continuous with respect to the $\mathcal{L}^{1}$ norm topology on $\mathcal{L}^{1}(\Omega, \NNReals^{\ell})$, equilibrium need not exist in $\mathcal{E}$. This failure is due to the lack of sufficiently many measurable sets in $\Omega$: the candidate equilibrium allocation may not be measurable. The Loeb measure space has a much richer collection of measurable sets, and this allows us, in \cref{alocboundlemma}, to show the existence of equilibrium of the Loeb production economy $\Loeb{\mathscr{E}}$ associated with $\mathcal{E}$. 
The allocation of the equilibrium of $\Loeb{\mathscr{E}}$ lies outside the domain of the preferences in the original economy $\mathcal{E}$, but the Loeb production economy $\Loeb{\mathscr{E}}$ extends the preferences from $\mathcal{E}$ in a way that preserves all of their essential properties. Thus, when consumers' preferences are only continuous with respect to the $\mathcal{L}^{1}$ norm on $\mathcal{L}^{1}(\Omega, \NNReals^{\ell})$, we view the Loeb production economy as a better modeling alternative, and its equilibrium as the most natural solution.}

\item If consumers' preferences are weakly continuous,\footnote{For previous applications of the weak topology, see  \citet{kv84} and \citet{cornet05}.} we {\em push down} the equilibrium of the Loeb economy to an equilibrium of the original measure space economy (\cref{standardmain}).\footnote{The push down makes use of a nonstandard characterization of the standard part with respect to the weak topology that, to our knowledge, has not previously appeared in the literature.  It involves taking a conditional expectation in the Loeb measure space with respect to the $\sigma$-algebra of inverse images of sets that are measurable in the original measure space.}
\end{enumerate} 
From this perspective, nonstandard analysis is not simply a technical device but a conceptual tool: it provides a bridge between finite general equilibrium models—where existence is well understood—and measure-theoretic economies, where direct arguments are often more involved. While alternative proofs based on standard methods may exist in principle, our approach offers a systematic route to the result that may be useful for studying other problems in large economies with similar structural features.

\section{Problematic Assumptions in the Existing MGE Literature}\label{section:Problematic}

 Assumptions relating to bads and externalities have been relaxed in finite economies, but are still imposed in state-of-the-art papers on MGE economies.  In this section, we explain why these familiar assumptions are problematic in the presence of bads.

\subsection{Monotonic Preferences}\label{secmonnfdiscuss}
All but one of the papers establishing existence of equilibrium in MGE  production economies assumes strong monotonicity of consumers' preferences.\footnote{The one exception, \citet{hilden70}, allows for nonmonotonic preferences but assumes free disposal in production. As in \cref{fdintro}, free disposal in production implies that the equilibrium prices are nonnegative.} Strong monotonicity implies that the candidate equilibrium prices are positive, and hence that the candidate equilibrium allocation is integrable. This defeats the goal of obtaining possibly negative equilibrium prices, providing incentives for controlling bads.


\subsection{Free Disposal in Consumption}\label{cointro}
In the presence of externalities, the imposition of non-free-disposal in equilibrium\footnote{The existence of non-free-disposal equilibrium in economies with a finite number of agents has a long history dating to \citet{mc55bkch} 
and culminating in the papers of \citet{Hart75}, \citet{Berg76}, \citet{ga79} and \citet{sh76}. \citet{fl03bk} offers a comprehensive treatment.   \citet{pole93} deals with bads, while the works  of Shafer and Florenzano cited above 
deal with bads and externalities.}
is necessary but not sufficient to control bads.  We need, in addition, to rule out free disposal in consumption. 
The classical general equilibrium model tacitly assumes free disposal in consumption: commodity ownership conveys the right, but not the obligation, to use up the commodity: a consumer might purchase a commodity, then leave it unconsumed.\footnote{No one forces you to eat out-of-date food rotting in the refrigerator.}  This will not happen if preferences are strictly monotonic.  However, if the price of a bad is negative, and a consumer is allowed free disposal in consumption, the consumer has a strong incentive to buy (but not consume) an unbounded amount of the bad to generate income to purchase other commodities.  As a result, the consumer's demand set is empty. Thus, free disposal in consumption is inconsistent with negative equilibrium prices.

In order to ensure that negative prices can exist in equilibrium and provide incentives for the proper stewardship of bads, we rule out free disposal in consumption for bads. 
This has an important implication for consumers' consumption sets:  the capacity of a consumer to render a bad harmless to others is limited, and the consumption set must reflect that limit.  

Note finally the distinction between the externalities generated by consumption and those generated by production. The climate change externality generated by ${\rm CO}_2$ emissions does not arise from the {\em consumption} of ${\rm CO}_2$.  Instead, it arises from the ${\rm CO}_2$ which is {\em produced but  not consumed}, and is emitted into the atmosphere as a result.\footnote{Carbon sequestration may emerge as a practical technology for eliminating ${\rm CO}_2 $ emissions.  If so, it will be an industrial production process, not a consumption activity.}  

\subsection{Free Disposal in Production}\label{fdintro}
Free disposal in production asserts that, if $y$ is a feasible production vector and $z<y$, then $z$ is also a feasible production vector. 
\begin{enumerate}[leftmargin=*]
    \item Free disposal in production is unrealistic. Suppose we have three commodities: $\mathrm{CO}_2$, coal and electricity. Suppose that $(1,-1,1)$ is a feasible production vector: burning one unit of coal generates one unit of electricity and one unit of $\mathrm{CO}_2$, as a byproduct. Under free disposal in production, $(0,-1,1)$ must also be a feasible production vector: one can burn one unit of coal to produce one unit of electricity and {\em zero} $\mathrm{CO}_2$. This is physically impossible; the unit of $\mathrm{CO}_{2}$ has to go somewhere, most likely the atmosphere. Under free disposal in production, it simply disappears from the accounting. 
    \item Free disposal in production implies that the equilibrium price is nonnegative, precluding taxes on bads to provide incentives for controlling emissions; see \cref{nnpricelemma};
    \item 
    \cref{freenofree} shows that, with free disposal in production, a free-disposal equilibrium can be disguised as a non-free-disposal equilibrium.
\end{enumerate}


\section{An Example of Equilibrium Non-existence}\label{secnonest}
In this section, we study an example by \citet{ha05} of the non-existence of  non-free-disposal equilibrium in MGE exchange economies with bads.

\begin{customexp}{1}\normalfont{\citep[][Example 1]{ha05}}\label{haexample}
Let $\mathcal{E}
=\{\big(X_{\omega},u_{\omega},e(\omega)\big)_{\omega\in \Omega}, (\Omega, \cB, \mu)\}$ be an exchange economy such that:
\begin{itemize}[leftmargin=*]
    \item there are two commodities, a good and a bad, and negative prices are allowed;
    \item the consumer space $(\Omega, \cB, \mu)$ is the Lebesgue measure space on $(0, 1)$;
    \item consumer $\omega\in \Omega$ has consumption set $X_{\omega}=\NNReals^{2}$, endowment $e(\omega)=(2,1)$, and utility function $u_{\omega}(x_1, x_2)=x_{1}-\omega(x_2)^{2}$.
\end{itemize}
 By the first-order conditions for utility maximization,  any non-free-disposal equilibrium allocation $f$ must satisfy $f_{2}(\omega)=\frac{|p_2|}{2\omega}$ for almost all $\omega\in \Omega$. As the function $f_{2}$ is not integrable, the exchange economy $\mathcal{E}$ has no non-free-disposal equilibrium.\footnote{If $f$ is a non-free-disposal equilibrium allocation, then  $\int_{\Omega}\big(f(\omega)-e(\omega)\big)\mu(\dee \omega)=0$, so $f$ must be integrable.}   
\end{customexp}

In the next example, we consider a sequence of finite economies that converges to the MGE economy in \cref{haexample}. 

\begin{customexp}{2}\normalfont{\citep[][Example 2]{ha05}}\label{haexample2}
Consider a sequence $\{\mathcal{E}_{n}\}_{n\in \Nats}$ of finite economies with $\mathcal{E}_{n}=\{\big(X_{\omega}, u^{n}_{\omega}, e(\omega)\big)_{\omega\in \Omega_{n}}, (\Omega_{n}, \cB_{n}, \mu_{n})\}$:
\begin{itemize}[leftmargin=*]
    \item there are two commodities, a good and a bad, and negative prices are allowed;
    \item the set of consumers is $\Omega_{n}= \{\frac{1}{n},\frac{2}{n},\dotsc,1\}$;
    \item consumer $\omega\in \Omega_{n}$ has consumption set $X_{\omega}=\NNReals^{2}$, endowment $e(\omega)=(2,1)$, and utility function $u^{n}_{\omega}(x_1, x_2)=x_1-\omega(x_2)^{2}$;
\end{itemize}
The sequence $\{\mathcal{E}_{n}\}_{n\in \Nats}$ of economies converges to the economy $\mathcal{E}$ in \cref{haexample}, in the sense of \citet{hilden74}.\footnote{Because the sequence of distributions of the economies converge weakly, and the sequence of endowments is uniformly integrable, the sequence is ``purely competitive'' and has limit $\mathcal{E}$.} However, the sequence of non-free-disposal equilibria of $\{\mathcal{E}_{n}\}_{n\in \Nats}$ does not converge to an allocation, much less an equilibrium, of $\mathcal{E}$. 
Let $S^{n}=\sum_{s=1}^{n}\frac{1}{s}$. For $n\in \Nats$,  $\mathcal{E}_{n}$ has a unique non-free-disposal equilibrium $(p^{n}, f^{n})$, where $p^{n}=\left(1, -\frac{2}{S^{n}}\right)$ and $f^{n}(\omega)=\left(2+\frac{2}{S^{n}}(\frac{1}{S^{n}\omega}-1), \frac{1}{S^{n}\omega}\right)$. The sequence $\{f_n\}$ is not uniformly integrable.\footnote{There exists a sequence 
$\{a^n\}_{n\in \Nats}$ of positive integers such that
$a^n\leq n$ for all $n\in \Nats$, $\lim_{n\to \infty}\frac{a^n}{n}=0$, and $\lim_{n\to \infty}\frac{1}{n}\sum_{s=1}^{a^n}f^{n}_{2}(\frac{s}{n})=1$.} 
Hence, an asymptotically negligible portion of the population consumes almost all of the bad. Economically, it is physically impossible for the group to absorb the bad. Mathematically, the sequence $\{f_n\}$ has no limit in the limit economy ${\mathcal{E}}$.
\end{customexp}

\begin{remark}\label{frdiseqremark}
Hara's examples have no externalities.  We argued above that, in the absence of externalities, the right notion is free-disposal equilibrium, rather than non-free-disposal equilibrium.  Indeed, we now show that there is a free-disposal equilibrium with transfers that Pareto dominates $(p^n, f^n)$. 
Let $\bar p = (1,0)$, $T^{n}(\omega)=\frac{2}{S^{n}}(\frac{1}{S^{n}\omega}-1)$ and $\tilde{f}^{n}(\omega)=\left(2+\frac{2}{S^{n}}(\frac{1}{S^{n}\omega}-1),0\right)$.
Since $\sum_{\omega\in \Omega_{n}}T^{n}(\omega)=0$, $(\bar{p}, \tilde{f}^{n})$ is a free-disposal equilibrium with transfers for $\mathcal{E}_n$. Since the first components of $\tilde{f}^{n}(\omega)$ and $f^{n}(\omega)$ are the same, $\tilde{f}^{n}$ Pareto dominates $f^n$. From the welfare perspective, free-disposal equilibrium is the right notion of equilibrium in this example.\footnote{Note that the value $\bar{p}\cdot \sum_{\omega\in \Omega}\bar{f}(\omega)$ of the free-disposal equilibrium allocation is the same as the value $p^{n}\cdot \sum_{\omega\in \Omega}f^{n}(\omega)$ of the non-free-disposal equilibrium. From a utilitarian perspective, the aggregate utility of $\bar{f}$ is strictly larger than the aggregate utility of $f^{n}$, even without transfers.}
\end{remark}


\section{The MGE Production Model}\label{secmodel}
In this section, we present an MGE model with production, bads and general externalities.
To characterize consumers' preferences, we introduce the following definition given in \citet{hilden74}. 

\begin{definition}\label{defpref} (\citet{hilden74})
The set $\mathcal{P}$ of strict preferences on $\Reals^{\ell}$ is the set of pairs $(X, \succ)$, where
the \emph{consumption set} $X\subset \NNReals^{\ell}$ is closed and convex; and $\succ$ is a continuous, irreflexive and acyclic\footnote{A preference $\succ$ on $X$ is \emph{continuous} if $\{(x,y) \in X\times X: x \succ y\}$ is relatively open in $X\times X$. Irreflexive means that $a\not\succ a$. Acyclic means that if $a\succ b$, then $b\not\succ a$. } binary relation defined on $X$. 
\end{definition}

Note that we require neither completeness nor transitivity of $\succ$ in \cref{defpref}. 
 $\mathcal{P}$ is a compact metric space in the closed convergence topology (\citet{hilden74}). For two elements $x_1, x_2\in \Reals^{\ell}$, we abuse notation and write $(x_1, x_2)\in (X, \succ)$ if $x_1,x_2\in X$ and $x_1\succ x_2$. A preference $P=(X, \succ)$ is \emph{convex} if $\{y\in X: y\succ x\}$ is convex for every $x\in X$. 
Let $\mathcal{P}_{H}\subset \mathcal{P}$ denote the set of convex preferences from $\mathcal{P}$.
Then $\mathcal{P}_{H}$ is a closed subset of $\mathcal{P}$ with respect to the closed convergence topology.  
Let $\Delta=\{p\in \Reals^{\ell}: \|p\|=\sum_{k=1}^{l}|p_k|=1\}$ be the set of all prices. Note that we allow negative prices. 
 $\cK(\NNReals^{\ell})$  denotes the set of all closed and convex subsets of $\NNReals^{\ell}$, which is a compact metric space under the closed convergence topology.

\begin{definition}\label{defmeasureeco}
A measure-theoretic production economy is a list
\[
\mathcal E\equiv \{(X, \succ_{\omega}, P_{\omega}, e_{\omega}, \theta)_{\omega\in \Omega}, (Y_j)_{j\in J}, (\Omega, \cB, \mu)\}\nonumber \mbox{ such that}
\] 
\begin{enumerate}[(i),leftmargin=*]
\item $(\Omega,\cB,\mu)$ is a probability space of consumers;

\item $J$ is a finite set of producers;

\item $X: \Omega\to \cK(\NNReals^{\ell})$ is a measurable function such that $X(\omega)\neq \emptyset$ for all $\omega\in \Omega$. $X(\omega)$ is the consumption set for consumer $\omega$. We sometimes write $X_{\omega}$ for $X(\omega)$;  \label{item-consumption_set}

\item A producer $j\in J$ has a non-empty production set $Y_j\subset \Reals^{\ell}$. Let $Y=\prod_{j\in J}Y_j$;

\item the set of allocations is $\mathcal{A}=\{x\in \mathcal{L}^{1}(\Omega, \NNReals^{\ell}): x(\omega)\in X_{\omega} \mbox{ almost surely}\}$, which is equipped with the $\mathcal{L}^{1}$ norm topology;

\item \label{chapreferglobal} Let $M_{\omega}=\mathcal{L}^{1}(\Omega, \NNReals^{\ell})\times Y\times \Delta\times X_{\omega}$. The \emph{global preference relation} of agent $\omega$ is $\succ_{\omega}\subset M_{\omega}\times M_{\omega}$. For $m,m'\in M_{\omega}$, $m\succ_{\omega} m'$ means that the agent $\omega$ strictly prefers $m$ over $m'$. $\succ_{\omega}$ represents the agent's preference on the other agents' consumption, production, prices, and own-consumption. $\succ_{\omega}$ is essential for studying welfare properties and Pareto rankings among equilibria; 

\item\label{chaprefer} The preference map $P_{\omega}: \mathcal{L}^{1}(\Omega, \NNReals^{\ell})\times Y\times \Delta\to \{X_{\omega}\}\times
\PowerSet(X_{\omega}\times X_{\omega})$\footnote{$\PowerSet(X_{\omega}\times X_{\omega})$ is the power set of $X_{\omega}\times X_{\omega}$.} is derived from the global preference relation $\succ_{\omega}$:
\[
P_{\omega}(x,y,p)=(X_{\omega},\{(a, b)\in X_{\omega}\times X_{\omega} | (x,y,p,a)\succ_{\omega}(x,y,p,b)\}).\nonumber
\]
For every $\omega\in \Omega$, $P_{\omega}$ satisfies:
\begin{itemize}
    \item The range of $P_{\omega}$ is $\mathcal{P}$. By \cref{defpref}, we can write $P_{\omega}(x,y,p) = (X_{\omega},\succ_{x,y,\omega,p})$;
    \item $P_{\omega}$ is continuous in the norm topology on $\mathcal{L}^{1}(\Omega, \NNReals^{\ell})\times Y\times \Delta$;
\end{itemize}


\item $\theta\in  \mathcal{L}^{1}(\Omega, \NNReals^{|J|})$ is the density of shareholdings of firms by consumers such that $\int_{\Omega}\theta(\omega)(j)\mu(\dee \omega)=1$ for all $j\in J$, where $\theta(\omega)(j)$ is the $j$-th coordinate of $\theta(\omega)$. We sometimes write $\theta_{\omega j}$ for $\theta(\omega)(j)$; 

\item $e\in \mathcal{L}^{1}(\Omega, \NNReals^{\ell})$ is the initial endowment map. Hence, $e(\omega)$ is the density of the initial endowment of the consumer $\omega$. 
\end{enumerate}
\end{definition}

\vspace{-9pt}

\begin{remark}\label{modelremark}
Our model, defined in \cref{defmeasureeco}, has the following features:
\begin{enumerate}[leftmargin=*]
     \item 
     Items \ref{chapreferglobal} and \ref{chaprefer} characterize each consumer's preference through the global preference relation $\succ_{\omega}$ and the preference map $P_{\omega}$. The global preference relation $\succ_{\omega}$ represents the consumer's preference on the choices of all consumers, production, prices and her own consumption bundles. The consumer, however, has no control over other consumers' choices, production and prices. Hence, given all other consumers' choices, production and prices, the consumer chooses her bundle according to the preference map $P_{\omega}$. For the existence of equilibrium, one only needs the preference map $P_{\omega}$. Hence, we impose regularity conditions directly on $P_{\omega}$. However, the preference relation $\succ_{\omega}$ is essential for studying welfare properties and potential Pareto improvements of the equilibrium; 
    \item We do not require $(\Omega, \cB, \mu)$ to be atomless, so finite weighted production economies are special cases of measure-theoretic production economies defined in  \cref{defmeasureeco}.   
\end{enumerate}
\end{remark}

For each $\omega\in \Omega$ and $(x,y,p)\in \mathcal{L}^{1}(\Omega, \NNReals^{\ell})\times Y\times \Delta$, the budget set $B_\omega(y,p)$, demand set $D_{\omega}(x,y,p)$ and quasi-demand set $\bar{D}_{\omega}(x,y,p)$ are defined as: 
\begin{eqnarray}
 B_{\omega}(y,p)
&=&\left\{z\in X_{\omega}: p\cdot z \leq p\cdot e(\omega) + \sum_{j\in J}\theta_{\omega j}p\cdot y(j)\right\},\nonumber\\
D_{\omega}(x,y,p)&=&\{z \in B_{\omega}(y,p): w \succ_{x,y,\omega,p} z\implies p\cdot w>p\cdot e(\omega) + \sum_{j\in J}\theta_{\omega j}p\cdot y(j)\},\nonumber\\
\bar{D}_{\omega}(x,y,p)&=&\{z \in B_{\omega}(y,p): w \succ_{x,y,\omega,p} z\implies p\cdot w\geq p\cdot e(\omega)+\sum_{j\in J}\theta_{\omega j}p\cdot y(j)\}.\nonumber
\end{eqnarray}
For each $j\in J$, let
$S_j(p)=\argmax_{z \in Y_j} p\cdot z $ denote the (possibly empty) supply set at $p\in \Delta$.
Note that producers are profit maximizers and their profits depend only on prices and their own production.\footnote{We assume producers are profit maximizers. \citet{makarov81} established a general equilibrium existence theorem which allows for objectives other than profit maximization.} We now give the definition of (quasi)-equilibrium.

\begin{definition}\label{def_mameqprod}
Let $\mathcal E=\{(X, \succ_\omega, P_\omega, e_\omega, \theta)_{\omega\in \Omega}, (Y_j)_{j\in J}, (\Omega, \cB, \mu)\}$ be a measure theoretic production economy.  
A quasi-equilibrium is $(\bar{x}, \bar{y}, \bar{p})\in \mathcal{A}\times Y\times \Delta$ such that:
\begin{enumerate}[(i),leftmargin=*]
\item $\bar{x}(\omega)\in \bar{D}_{\omega}(\bar{x},\bar{y}, \bar{p})$ for $\mu$-almost all $\omega\in \Omega$;
\item $\bar{y}(j)\in S_{j}(\bar{p})$ for all $j\in J$;
\item \label{nfreefeasible}$\int_{\Omega}\bar{x}(\omega)\mu(\dee \omega)-\int_{\Omega}e(\omega)\mu(\dee \omega)-\sum_{j\in J} \bar{y}(j)=0$.
\end{enumerate}
An equilibrium $(\bar{x}, \bar{y}, \bar{p})\in \mathcal{A}\times Y\times \Delta$ is a quasi-equilibrium with $\bar{x}(\omega)\in D_{\omega}(\bar{x},\bar{y}, \bar{p})$ for $\mu$-almost all $\omega\in \Omega$.
\end{definition}

\subsection{Free Disposal in Production}\label{secproblem}
In this section, we provide a rigorous treatment of the problematic aspects of the free disposal in production assumption, discussed in \cref{fdintro}.
Recall a firm $j\in J$ has free disposal in production if, given $y\in Y_j$ and $z<y$, then $z\in Y_j$.

\begin{customprop}{1}\label{nnpricelemma}
Let $\mathcal{E}=\{(X, \succ_{\omega}, P_{\omega}, e_{\omega}, \theta_{\omega})_{\omega\in \Omega}, (Y_j)_{j\in J}, (\Omega, \cB, \mu)\}$ be a measure-theoretic production economy. If there is a firm $j\in J$ that has free disposal in production, then the equilibrium price is non-negative.
\end{customprop}

\begin{proof}
Let $\bar{p}$ denote the equilibrium price. 
Suppose $\bar{p}$ has a negative coordinate. 
Without loss of generality, we assume that $\bar{p}_{1}<0$. 
Note that each producer is profit maximizing. 
As firm $j$ has free disposal in production, firm $j$'s profit is unbounded, which is a contradiction. 
\end{proof}


The next theorem shows that, with free disposal in production, free-disposal equilibria can often be disguised as non-free-disposal equilibria.

\begin{customprop}{2}\label{freenofree}
Let $\mathcal{E}=\{(X, \succ_{\omega}, P_{\omega}, e_{\omega}, \theta_{\omega})_{\omega\in \Omega}, (Y_j)_{j\in J}, (\Omega, \cB, \mu)\}$ be a measure-theoretic economy with production. Suppose 
\begin{enumerate}[(i),leftmargin=*]
    \item For all $\omega\in \Omega$, the preference map $P_{\omega}$ is independent of the production;
    \item There is a firm $j_0\in J$ that has free disposal in production. 
\end{enumerate}
If $(\bar{x},\bar{y},\bar{p})$ is a free-disposal (quasi)-equilibrium and the value of the excess supply is $0$,\footnote{The value of the excess supply is $0$ if and only if the value of almost all consumers' consumption bundle is on the budget line, which is implied by assuming locally non-satiated preferences.} then there exists $\bar{y}'$ such that $(\bar{x}, \bar{y}', \bar{p})$ is a non-free-disposal (quasi)-equilibrium.
\end{customprop}
\begin{proof}
Suppose $(\bar{x}, \bar{y}, \bar{p})$ is a free-disposal equilibrium.\footnote{We only prove the case where $(\bar{x}, \bar{y}, \bar{p})$ is a free-disposal equilibrium. The proof of the quasi-equilibrium case is essentially the same.} Let 
\[
w=\int_{\omega\in \Omega}e(\omega)\mu(\dee \omega)+\sum_{j\in J}\bar{y}(j)-\int_{\omega\in \Omega}\bar{x}(\omega)\mu(\dee \omega)\geq 0.\nonumber
\]
Without loss of generality, assume that firm $1$ has free disposal in production and let $\bar{y}'(1)=\bar{y}(1)-w$. 
Since firm $1$ has free disposal in production, we have $\bar{y}'(1)\in Y_1$. 
Form $\bar{y}'$ from $\bar{y}$ by substituting $\bar{y}'(1)$ for $\bar{y}(1)$. 
We show that $(\bar{x}, \bar{y}', \bar{p})$ is a non-free-disposal equilibrium. 
\begin{enumerate}[leftmargin=*]
    \item Clearly, we have $\int_{\omega\in \Omega}\bar{x}(\omega)\mu(\dee \omega)-\sum_{j\in J}\bar{y}'(j)-\int_{\omega\in \Omega}e(\omega)\mu(\dee \omega)=0$;
    \item As the value of the excess supply is $0$, we have $\bar{p}\cdot w=0$. So $B_{\omega}(\bar{y}, \bar{p})=B_{\omega}(\bar{y}', \bar{p})$ for $\omega\in \Omega$. As the preference map is independent of the production, we conclude that $\bar{x}(\omega)\in D_{\omega}(\bar{x}, \bar{y}, \bar{p})$ for almost all $\omega\in \Omega$;
    \item As $\bar{p}\cdot w=0$, we have $\bar{p}\cdot \bar{y}_{1}=\bar{p}\cdot \bar{y}'_{1}$. Hence, all firms are profit maximizing.
\end{enumerate}
Hence, $(\bar{x}, \bar{y}', \bar{p})$ is a non-free-disposal equilibrium. 
\end{proof}


\section{Main Results and Examples}\label{secmainresult}

In this section, we show that the production economy in \cref{defmeasureeco} has an equilibrium. 
We first establish the existence of equilibrium for 
finite weighted production economies in \cref{finitemainexpd}, which is closely related to Proposition 3.2.3 in \citet{fl03bk}. Furthermore, in \cref{standardmain}, we prove the existence of equilibrium in measure-theoretic economies with bads and preference externalities  under moderate regularity conditions.

\begin{assumption}\label{assumptionsurvival}
Let $\mathcal{E}$ be a measure-theoretic production economy as in \cref{defmeasureeco}: 
\begin{enumerate}[(i),leftmargin=*]
    \item\label{esto0} there exists a set $\Omega_0\subset \Omega$ of positive measure such that, for every $\omega\in \Omega_0$, the set $X_{\omega} - \sum_{j\in J}\theta_{\omega j}Y_j$ has non-empty interior $U_{\omega}\subset \Reals^{\ell}$ and $e(\omega)\in U_{\omega}$;
\item \label{cdexplain} there exists 
a commodity $s\in \{1,2,\dotsc,\ell\}$ such that:
\begin{itemize}[leftmargin=*]
    \item for every $\omega\in \Omega_0$, the projection $\pi_{s}(X_{\omega})$ of $X_{\omega}$ to the $s$-th coordinate is unbounded, and the consumer $\omega$ has a strongly monotone preference on the commodity $s$;\footnote{Given any $(x,y,p)\in \mathcal{L}^{1}(\Omega,\NNReals^{\ell})\times Y\times \Delta$, for every $a,a'\in X_{\omega}$ such that $a_s>a'_s$ and $a_t=a'_t$ for all $t\neq s$, we have $(a,a')\in P_{\omega}(x,y,p)$.}
    \item for almost all $\omega\in \Omega$, there is an open set $V_{\omega}$ containing the $s$-th coordinate $e(\omega)_{s}$ of $e(\omega)$ such that $(e(\omega)_{-s},v)\in X_{\omega} - \sum_{j\in J}\theta_{\omega j}Y_j$\footnote{$(e(\omega)_{-s},v)$ is the vector such that its $s$-th coordinate is $v$, and its $t$-th coordinate is the same as the the $t$-th coordinate of $e(\omega)$ for all $t\neq s$.} for all $v\in V_{\omega}$.
\end{itemize}
\end{enumerate}
\end{assumption}

\cref{assumptionsurvival} is closely related to the classical survival assumption $e_{\omega}\in X_\omega-\sum_{j\in J}\theta_{\omega j}Y_j$.\footnote{The survival assumption implies that a consumer can survive without participating in any exchanges using her initial endowment and shares in production. In particular, a consumer who supplies labor in an equilibrium is able to survive, and hence supply that labor.} 
Following the literature, we could have assumed $e_{\omega}\in \mathrm{int}\big(X_\omega-\sum_{j\in J}\theta_{\omega j}Y_j\big)$,\footnote{As in the previous literature, the interior is taken with respect to the topology of $\Reals^{\ell}$, not with respect to the subspace topology. Hence, the strengthened survival assumption implies the set $X_\omega-\sum_{j\in J}\theta_{\omega j}Y_j$ has non-empty interior in $\Reals^{\ell}$.  There exist, however, a few papers relaxing the assumption $e_{\omega}\in \mathrm{int}\big(X_\omega-\sum_{j\in J}\theta_{\omega j}Y_j\big)$. See 
\citet{fl03bk} for a detailed discussion.} 
but this stronger assumption is economically restrictive; it fails if there is a consumer with no shareholding of any firm and the projection of the consumer's consumption set to some coordinate is $\{0\}$.\footnote{Poor people generally do not have any shareholdings of private firms. Moreover, individuals may be incapable of consuming certain bads. Thus, it is reasonable to assume that the projections of consumers' consumption sets to some commodities are $\{0\}$. }
\cref{assumptionsurvival}  allows for consumers who are not endowed with certain commodities, have no shareholdings of private firms and are unable to consume certain bads.
\cref{esto0} in \cref{assumptionsurvival} requires there be a positive measure set of consumers $\Omega_0$ whose endowments are in the interior of $X_{\omega}-\sum_{j\in J}\theta_{\omega j}Y_j$,\footnote{\cref{esto0} in \cref{assumptionsurvival} is generally satisfied if there is a group of rich consumers such that, for each commodity, every consumer in the group is either endowed with a positive amount of the commodity or has a positive shareholding of a firm that is capable of producing the commodity.} which implies that every consumer from the group $\Omega_0$ has a strictly positive budget at every quasi-equilibrium. \cref{esto0} in \cref{assumptionsurvival} and the first bullet of \cref{cdexplain} in \cref{assumptionsurvival} imply that the quota quasi-equilibrium price of the commodity $s$ is strictly positive. Hence, by the second bullet of \cref{cdexplain} in \cref{assumptionsurvival}, every consumer has a positive budget at every quasi-equilibrium, implying that every quasi-equilibrium is an equilibrium;

\begin{definition}\label{def_attainable}
The set of attainable consumption-production pairs is
\[
\cO=\left\{(x,y)\in \mathcal{L}^{1}(\Omega, \NNReals^{\ell})\times Y: \int_{\Omega}x(\omega) \mu(\dee \omega)-\int_{\Omega}e(\omega)\mu(\dee \omega)-  \sum_{j\in J}y(j)=0\right\}.\nonumber
\]
\end{definition}

\begin{customthm}{1}\label{finitemainexpd}
Let $\mathcal{E}=\{(X, \succ_\omega, P_\omega, e_\omega, \theta_{\omega})_{\omega\in \Omega}, (Y_j)_{j\in J}, (\Omega, \PowerSet(\Omega), \mu)\}$ be a weighted production economy as in \cref{defmeasureeco}. Suppose $\mathcal{E}$ satisfies \cref{assumptionsurvival}, and the following conditions: 
\begin{enumerate}[(i),leftmargin=*]
    \item for almost all $\omega\in \Omega$, $P_{\omega}$ takes value in $\mathcal{P}_{H}$;\footnote{As noted in \citet{fl03bk}, this condition can be weakened to the following condition: for each $(x,y,p)\in \cO\times \Delta$ and all $\omega\in \Omega$, $(x(\omega),x(\omega))\not\in \conv(P_{\omega}(x,y,p))$, where $\conv(P_{\omega}(x, y,p))$ denotes the convex hull of $P_{\omega}(x,y,p)$.} 
    \item \label{satiate}for almost all $\omega\in \Omega$, for each $(x,y)\in \cO$ with $x_\omega\in X_\omega$, there exists $u\in X_\omega$ such that $(u, x_\omega)\in \bigcap_{p\in \Delta}P_{\omega}(x,y,p)$;
    \item $\bar{Y}$ is closed, convex, and  $\bar Y \cap (-\bar Y)=\bar Y \cap \NNReals^{\ell}=\{0\}$, where $\bar Y =\left\{\sum_{j\in J}{y}(j): y\in Y\right\}$ is the aggregate production set.
\end{enumerate}
Then, $\mathcal{E}$ has an equilibrium.
\end{customthm}

\begin{remark}
\cref{finitemainexpd} is the weighted version of Proposition 3.2.3 in \citet{fl03bk} with \emph{Florenzano's disposal cone} being the singleton $\{0\}$, and it plays a key role in the proof of existence of equilibrium for measure-theoretic production economies.  
Our formulation allows for quite general externalities in consumers' preferences.
The consumers' preference maps $P_{\omega}$ are assumed to be continuous with respect to the closed convergence topology, hence lower hemicontinuous if viewed as correspondences. If the preferences are price-independent, \cref{satiate} of \cref{finitemainexpd} is non-satiation at every attainable consumption-production pair. 
\end{remark}


\subsection{Equilibrium in Measure-theoretic Production Economies}
Fix a measure-theoretic production economy $\mathcal E=\{(X, \succ_\omega, P_\omega, e_\omega, \theta)_{\omega\in \Omega}, (Y_j)_{j\in J}, (\Omega, \cB, \mu)\}$ as in \cref{defmeasureeco}. 

\begin{assumption}\label{assumptionmseco}
The consumer space $\Omega$ of $\mathcal{E}$
is a Polish space\footnote{That is, $\Omega$ is a complete separable metric space.} endowed with the Borel $\sigma$-algebra $\BorelSets \Omega$ and $\mu$ is a Borel probability measure on $\Omega$. 
\end{assumption}

For $\epsilon>0$, the set of \emph{$\epsilon$-attainable consumption-production pairs} is
\[
\cO_{\epsilon}=\left\{(x',y')\in \mathcal{L}^{1}(\Omega, \NNReals^{\ell})\times Y: \left|\int_{\Omega} x'(\omega)\mu(\dee \omega)-\int_{\Omega}e(\omega)\mu(\dee \omega)-\sum_{j\in J}{y'}(j)\right|<\epsilon\right\}.\nonumber
\]


\begin{assumption}\label{assumptionalocbound}
There is some $k\leq \ell$ such that:
\begin{enumerate}[(i),leftmargin=*]
    \item\label{projform} Let $\proj{k}$ denote the projection onto the first $k$-th coordinates. For every $\omega\in \Omega$, we have $X(\omega)=\proj{k}\big(X(\omega)\big)\times \NNReals^{\ell-k}$;
    \item\label{badintegrable}  The mapping $\proj{k}\circ X: \Omega\to \cK(\NNReals^{k})$ is integrably bounded, i.e., there exists an integrable function $\psi: \Omega\to \NNReals^{k}$ such that 
    for every $x\in \proj{k}\big(X(\omega)\big)$, $x\leq \psi(\omega)$;
    \item\label{goodmonotone} For every $s\in \{k+1,\dotsc,\ell\}$, there exists some $\epsilon_{s}>0$ such that the set
    \[
    \Omega_0^{s}=\{\omega\in \Omega_0: (\forall (x,y,p)\in \cO_{\epsilon_{s}}\times \Delta)(P_{\omega}(x,y,p)\in M_{s})\}\nonumber
    \]
has positive measure, where $\Omega_0\subset \Omega$ is the set in \cref{esto0} of \cref{assumptionsurvival} and $M_{s}$ is the set of preferences that are strongly monotonic in commodity $s$.\footnote{A consumer has a strongly monotonic preference in commodity $s$ if, holding the consumption of all other commodities fixed, the consumer strictly prefers having more of commodity $s$.} 
\end{enumerate}
\end{assumption}

\begin{remark}\label{integralassumptionremark}
Recall that there are, in total, $\ell$ commodities. We divide commodities into two categories: bads and goods. The economic interpretation of \cref{assumptionalocbound} is: 
\begin{enumerate}[leftmargin=*]
    \item The first $0\leq k\leq \ell$ commodities are bads. As discussed in the Introduction, the integrable bound reflects consumers' limited capacities to absorb bads. Note that we do not impose a uniform bound on consumers' consumption of bads. 
    We also do not require consumers to be unanimous in the designation of commodities as goods or bads.
    \item The commodities $k+1,\dotsc, \ell$ are goods. 
    We allow for arbitrarily large consumption of goods. Furthermore, for every good, there is a set of consumers with positive measure whose preferences for that good are strongly monotonic. We do not require any consumer to have a preference that is monotonic over multiple goods. Our formulation allows, for example, individuals who derive no utility from a subset of the goods, and thus whose demands are zero for that subset of goods, regardless of the prices of those goods.  
\end{enumerate}
\end{remark}

{\color{blue}
}


As in \citet{cornet05}, we assume that consumers' preferences are continuous with respect to the weak topology on $\mathcal{L}^{1}(\Omega, \NNReals^{\ell})$. 

\begin{assumption}\label{assumptionwkcts}
For $\omega\in \Omega$, the preference map $P_{\omega}: \mathcal{L}^{1}(\Omega,\NNReals^{\ell})\times Y\times \Delta\to \mathcal{P}$ is continuous with respect to the product of weak topology on $\mathcal{L}^{1}(\Omega,\NNReals^{\ell})$ and the norm topology on $Y\times \Delta$.
\end{assumption}

The main result of this section is: 

\begin{customthm}{2}\label{standardmain}
Let $\mathcal{E}$ be a measure-theoretic production economy as in \cref{defmeasureeco}. Suppose $\mathcal{E}$
satisfies Assumptions \ref{assumptionsurvival},  \ref{assumptionmseco}, \ref{assumptionalocbound}, \ref{assumptionwkcts}, and the following conditions:
\begin{enumerate}[(i),leftmargin=*]
\item \label{tran}for almost all $\omega\in \Omega$, $P_\omega$ takes value in  $\mathcal{P}_{H}^{-}$, the set of transitive, negatively transitive and convex preferences from $\mathcal{P}$;
\item\label{mersatiate} for some $\epsilon>0$, for almost all $\omega\in \Omega$ and all $(x,y)\in \cO_{\epsilon}$ such that $x(\omega)\in X_{\omega}$, there exists $u\in X_{\omega}$ such that $(u, x(\omega))\in \bigcap_{p\in \Delta}P_{\omega}(x,y,p)$;

\item The aggregate production set $\bar{Y}$ is closed and convex, $\bar Y \cap (-\bar Y)=\bar Y \cap \NNReals^{\ell}=\{0\}$, and $Y_j$ is closed for all $j\in J$.\footnote{We do not need $Y_j$ to be closed if consumers' preferences only depend on allocations, aggregate production and prices.}
\end{enumerate}
Then, $\mathcal{E}$ has an equilibrium. 
\end{customthm}

\begin{remark}\label{mestremark}
\cref{standardmain} is the first equilibrium existence theorem for measure-theoretic GE models with bads and externalities. We briefly discuss Assumptions \ref{assumptionalocbound}, \ref{assumptionwkcts} and \cref{mersatiate}:
\begin{enumerate}[leftmargin=*]
    \item \cref{assumptionalocbound} reflects that consumers typically have limited capacity to absorb bads. We do not rule out the possibility that firms have the capacity to eliminate bads as part of the production process.\footnote{In \cref{garbageexample}, consumers have limited capability to consume garbage and there is a firm with the technology to eliminate garbage. At the equilibrium, all consumers consume a small quantity of garbage in aggregate while a firm eliminates a large quantity of garbage.} 
    \cref{badintegrable} of \cref{assumptionalocbound} ensures the integrability of the consumption of bads at the candidate equilibrium. \cref{goodmonotone} of \cref{assumptionalocbound} implies the equilibrium price on goods are positive, hence guarantees the integrability of the equilibrium allocation of goods. Thus, \cref{assumptionalocbound} allows us to overcome the failure of uniform integrability in \citet{ha05};    
    \item \cref{assumptionwkcts} is stronger than assuming the preference map is continuous with respect to the $\mathcal{L}^{1}$ norm topology on $\mathcal{L}^{1}(\Omega, \NNReals^{\ell})$. \cref{assumptionwkcts} allows us to push down an S-integrable function to construct an allocation in the original standard economy;   
    \item To ensure convexity of the quasi-demand set, we restrict ourselves to transitive, negatively transitive and convex preferences from $\mathcal{P}$ in \cref{tran}.
    \item \cref{mersatiate} of \cref{standardmain} is stronger than \cref{satiate} of \cref{finitemainexpd}, since $(x,y)\in \cO_{\epsilon}$ may not be an attainable consumption-production pair. The proof of the equilibrium existence result for measure-theoretic production economies requires this strengthening.\footnote{This stronger condition is needed since a hyperfinite attainable consumption-production pair may not be an exact $\NSE{}$attainable consumption-production pair.} \cref{mersatiate} of \cref{standardmain} is implied by non-satiation at every consumption-production pair. 
\end{enumerate}   
\end{remark}

We conclude this subsection with the following example in which consumers have limited capability to consume bads. Note that, although consumers disagree on which commodities are bads, the example satisfies the assumptions of \cref{standardmain}.

\begin{customexp}{3}\normalfont\label{garbageexample}
Let $\mathcal{F}=\{(X, \succ_{\omega}, P_{\omega}, e_{\omega}, \theta)_{\omega\in \Omega}, (Y_j)_{j\in J}, (\Omega, \cB,\mu)\}$ be: 
\begin{enumerate}[leftmargin=*]
    \item The economy $\mathcal{E}$ has three commodities: garbage, human capital and consumption good, which we denote by $c_1$, $c_2$ and $c_3$;
    \item The consumer space is the Lebesgue measure space on $[0, 1]$. For each $\omega\in \Omega$, consumer $\omega$'s consumption set is $[0,\omega]\times \NNReals^{2}$, the endowment is $e(\omega)=(0,2\omega,0)$. For $\omega\in [0, 0.5]\cup [0.6, 1]$, the utility function is $u_{\omega}(c_1,c_2,c_3)=\ln{(c_3)}-c_1$. For $\omega\in (0.5, 0.6)$, the utility function is $u_{\omega}(c_1,c_2,c_3)=\ln{(c_3)}+c_1$; these consumers have hoarding disorder;\footnote{$1.5\%-6\%$ of the population has hoarding disorder; see \citet{postle19}.}    
    \item There are two producers with production sets $Y_1=\{(r,-r,r): r\in \NNReals\}$ and $Y_2=\{(-r,-r,0): r\in \NNReals\}$;
    \item The shareholding $\theta\in \mathcal{L}^{1}(\Omega, \NNReals^{2})$ is $\theta(\omega)=(1,1)$ for all $\omega\in \Omega$. 
\end{enumerate}
In this example, consumers have limited capacity to absorb garbage, some consumers have hoarding disorder, and the second firm has a technology to use human capital to eliminate garbage. Hence, consumers disagree on the designation of commodities as goods or bads. 
We show that $\cF$ has a unique equilibrium, in which the price of garbage is negative even though some consumers have strongly monotone preferences over garbage. 
\begin{claim}\label{equiprc}
    If equilibrium exists, then the equilibrium price must be $(-\frac{1}{4},\frac{1}{4},\frac{1}{2})$.
\end{claim}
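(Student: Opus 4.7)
The plan is to combine firms' profit maximization with consumers' strong monotonicity in $c_3$ to extract inequality constraints on the price, and then use market clearing in the $c_1$- and $c_2$-markets to force both firms to operate at strictly positive scales, turning those inequalities into equalities that, together with normalization, pin down $\bar p$ uniquely.

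First, I would write firm~$1$'s profit at scale $r\in\NNReals$ as $r(p_1-p_2+p_3)$ and firm~$2$'s profit at scale $r\in\NNReals$ as $-r(p_1+p_2)$; linearity in $r$ together with existence of a maximizer give
\[
p_1-p_2+p_3\le 0 \quad\text{and}\quad p_1+p_2\ge 0.
\]
Because $\ln(c_3)$ makes every consumer's preference strongly monotone in $c_3$, a non-positive $p_3$ would produce unbounded $c_3$-demand for any consumer with a nontrivial budget, so $p_3>0$. Adding the two inequalities gives $2p_1+p_3\le 0$, hence $p_1<0$ and $p_2\ge -p_1>0$. At any profit-maximizing choice both firm profits vanish, so each consumer $\omega\in(0,1]$ has strictly positive income $2\omega p_2$.

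Next I would argue both firms must operate at strictly positive scales. Since $\ln(c_3)\to-\infty$ as $c_3\downarrow 0$ and each consumer of positive measure has positive income, every such consumer strictly prefers some feasible bundle with $c_3>0$ to any bundle with $c_3=0$, making aggregate demand for $c_3$ strictly positive. As initial endowments of $c_3$ are zero and only firm~$1$ produces $c_3$, market clearing forces $r_1>0$, which in turn forces $p_1-p_2+p_3=0$. Since no consumer derives utility from $c_2$ and $p_2>0$, each consumer sells her entire $c_2$ endowment, so aggregate $c_2$-demand is zero and market clearing in $c_2$ gives $r_1+r_2=\int_0^1 2\omega\,d\omega=1$. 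If $r_2=0$ then $r_1=1$, producing one unit of $c_1$; but $X_\omega=[0,\omega]\times\NNReals^2$ caps aggregate absorption of $c_1$ at $\int_0^1\omega\,d\omega=\tfrac12<1$, contradicting $c_1$-market clearing. Hence $r_2>0$ and $p_1+p_2=0$.

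The two equalities $p_1-p_2+p_3=0$ and $p_1+p_2=0$ give $p_2=-p_1$ and $p_3=-2p_1$; combined with $p_1<0$ and the normalization $|p_1|+|p_2|+|p_3|=1$ this forces $|p_1|=\tfrac14$, so $\bar p=(-\tfrac14,\tfrac14,\tfrac12)$. The main subtle step is proving strict positivity of aggregate $c_3$-demand, which uses both $p_2>0$ (to supply positive income) and the boundary behaviour of $\ln$ (to rule out $c_3=0$ at the optimum); the remainder is bookkeeping with the linearity of the production sets and the contrast between consumers' capped $c_1$-capacity and their unbounded $c_2$-supply.
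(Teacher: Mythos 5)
Your overall architecture is essentially the paper's: zero profits from the linear technologies, positivity of the human-capital price, the observation that non-free disposal of garbage together with the consumers' capped absorption capacity $\int_0^1\omega\,d\omega=\tfrac12<1$ forces both firms to operate at positive scale, and then the two linear equations plus the normalization $\|p\|_1=1$. The endgame (market clearing in $c_2$ and $c_3$, the corner-case exclusion $r_2=0$, and the final algebra) is correct and matches the paper's footnoted argument.

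However, one intermediate step is invalid as written: from $p_1-p_2+p_3\le 0$ and $p_1+p_2\ge 0$ you cannot ``add the two inequalities'' to get $2p_1+p_3\le 0$, because they point in opposite directions. Indeed $p=(0.1,0.7,0.2)\in\Delta$ satisfies both profit conditions and $p_3>0$ yet has $p_1>0$, so neither $2p_1+p_3\le 0$ nor $p_1<0$ is available at that stage. The conclusion you actually need downstream, $p_2>0$, does survive: rewrite the two conditions as $p_2\ge p_1+p_3$ and $p_2\ge -p_1$ and add \emph{those}, giving $2p_2\ge p_3>0$. The sign of $p_1$ is not needed until the very end, where it follows from $p_3=-2p_1>0$ once both profit conditions have become equalities. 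With that one-line repair your proof is complete. For comparison, the paper reaches $p_2>0$ by contradiction: if $p_2\le 0$, then firm $2$'s condition forces $p_1\ge 0$ and monotonicity in $c_3$ forces $p_3\ge 0$, so firm $1$'s profit coefficient equals $|p_1|+|p_2|+|p_3|=1>0$ and its profit is unbounded.
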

\begin{proof}
Let $\bar{p}=(\bar{p}_{1},\bar{p}_{2},\bar{p}_{3})\in \Delta$ be an equilibrium price. 
As both firms have linear technology, both firms' profits must be $0$ at equilibrium. 

Suppose the equilibrium price $\bar{p}_{2}$ of human capital is non-positive. The second firm's production set implies that the equilibrium price $\bar{p}_{1}$ of garbage must be non-negative. The consumers' utility functions and consumption sets imply that the equilibrium price $\bar{p}_{3}$ of consumption good is non-negative. As $\bar{p}\in \Delta$, the first firm's profit at equilibrium is infinite, a contradiction. Hence, the equilibrium price $\bar{p}_{2}$ must be positive. 

Since consumers do not acquire utility from human capital and the equilibrium price of human capital is positive, all human capital must be consumed by firms. The non-free disposal of garbage implies that both firm must operate at equilibrium.\footnote{If the first firm consumes all the human capital, it generates $1$ unit of garbage. However, the consumers are only capable of consuming $\frac{1}{2}$ unit of garbage in aggregate. If the second firm consumes all the human capital, then there is no garbage for the second firm to eliminate.} 
Hence we have $\bar{p}_{1}-\bar{p}_{2}+\bar{p}_{3}=0$ and $-\bar{p}_{1}-\bar{p}_{2}=0$. Since $\bar{p}\in \Delta$, $\bar{p}=(-\frac{1}{4},\frac{1}{4},\frac{1}{2})$. 
\end{proof}
Since no consumer obtains utility from human capital and the price of human capital is positive, no consumer consumes human capital at equilibrium. Suppose consumer $\omega$'s consumption is $\big(x_{\omega}(1),0,x_{\omega}(3)\big)$. The budget constraint implies:
\[
\frac{1}{2}x_{\omega}(3)-\frac{1}{4}x_{\omega}(1)\leq \frac{1}{2}\omega \iff x_{\omega}(3)\leq \omega+\frac{x_{\omega}(1)}{2}.\nonumber
\]
For $\omega\in (0.5, 0.6)$, the consumer's garbage consumption is $\omega$. 
For $\omega\in [0, 0.5]\cup [0.6, 1]$, the consumer's utility is given by $\ln(\omega+\frac{x_{\omega}(1)}{2})-x_{\omega}(1)$. By taking the derivative, we conclude that consumer $\omega$'s equilibrium consumption is:
\begin{equation}
  \big(x_{\omega}(1),0, x_{\omega}(3)\big) =
    \begin{cases}
      (\omega,0, \frac{3\omega}{2}) & \text{for $\omega\in [0,\frac{1}{3}]$}\\
      (1-2\omega,0,\frac{1}{2}) & \text{for $\omega\in (\frac{1}{3},\frac{1}{2}]$}\\
      (\omega,0,\frac{3\omega}{2}) & \text{for $\omega\in (\frac{1}{2}, 0.6]$}\\
      (0,0,\omega) & \text{for $\omega\in (0.6, 1]$}\\ 
    \end{cases}\nonumber    
\end{equation}
Consumers with low human capital are willing to consume as much garbage as their consumption sets allow in order to generate income to purchase the consumption good. Consumers with medium human capital are willing to consume some garbage, but less than their consumption sets allow. Consumers with hoarding disorder consume as much garbage as their consumption sets allow. Note that the equilibrium price for garbage is negative, even though there is a positive measure set of consumers whose preferences over garbage are strongly monotonic.
Consumers with high human capital and without hoarding disorder are not willing to consume garbage at all; even though these consumers have high capacity to consume bads, they choose not to do so.\footnote{Note that the integrable bound on consumption sets is not binding for non-hoarding-disorder consumers with medium and high human capital, and hence is only needed in this example for consumers with low human capital or hoarding disorder.} Among consumers without hoarding disorder, the income effect is the main factor driving different consumptions of bads.

The aggregate consumption of the consumption good by consumers is $\frac{683}{1200}$, and the aggregate garbage consumption by consumers is $\frac{83}{600}$. Since we require non-free disposal at equilibrium, we conclude that the first firm's equilibrium production is $(\frac{683}{1200},-\frac{683}{1200},\frac{683}{1200})$ while the second firm's equilibrium production is $(-\frac{517}{1200},-\frac{517}{1200}, 0)$. This is the unique equilibrium. At equilibrium, consumers absorb, and the second firm eliminates, garbage, 
but the second firm eliminates much more garbage than all consumers absorb collectively. 
\end{customexp}

\section{Sketch of Proofs}\label{secmethod}

To ease the burden on readers who are not familiar with nonstandard analysis, we provide a sketch of the proof of our main results, Theorems \ref{finitemainexpd}, \ref{standardmain}, and \ref{standardmainqt} in this section.

Let $\mathcal{E}=\{(X, \succ_\omega, P_\omega, e_\omega, \theta_{\omega})_{\omega\in \Omega}, (Y_j)_{j\in J}, (\Omega, \cB, \mu)\}$ be a measure-theoretic production economy satisfying the assumptions of \cref{standardmain}. 
The proof of \cref{standardmain} is broken into the following steps:
\begin{enumerate}[leftmargin=*]
    \item In \cref{finitemainexpd}, we establish the existence of equilibrium for finite weighted production economies which exhibit general global/local externalities on consumers' preferences. 
    \cref{finitemainexpd} is the finite weighted version of Proposition 3.2.3 in \citet{fl03bk};
    \item Let $\NSE{\Omega}$ be the nonstandard extension  of the consumer space $\Omega$. \cref{secconshp} presents a technical result (\cref{hyptsexist}) on the existence of a desirable hyperfinite partition of $\NSE{\Omega}$. For almost all partition sets and all consumers in the same partition set, consumers' consumption sets, preferences, endowments, and shareholdings are infinitely close;
    \item In \cref{seceqesthyper}, we construct a hyperfinite production economy $\mathscr{E}$.  We first choose a hyperfinite set $\mathscr{T}_{\Omega}$ by picking one element from each partition set.  The weight of each consumer in $\mathscr{E}$ is derived from the probability measure $\mu$ on the standard consumer space $\Omega$. Consumers' consumption sets, preferences, endowments, and shareholdings in $\mathscr{E}$ preserve all the essential properties of their standard counterparts in $\mathcal{E}$. 
    By the transfer of \cref{finitemainexpd}, there exists a \emph{hyperfinite equilibrium} $(\bar{x},\bar{y},\bar{p})$ for $\mathscr{E}$;
    \item Every hyperfinite probability space can be extended to the associated Loeb space. In \cref{exstlbeqsec}, we construct a Loeb production economy $\Loeb{\mathscr{E}}$ from the hyperfinite production economy $\mathscr{E}$ by taking the Loeb space generated by the hyperfinite probability space defined on $\mathscr{T}_{\Omega}$. As is shown in \cref{estfreenonfree}, $\big(\ST(\bar{x}), \ST(\bar{y}), \ST(\bar{p})\big)$ is a \emph{Loeb equilibrium} for $\Loeb{\mathscr{E}}$ if and only if $\bar{x}$ is S-integrable and $\bar{y}$ is near-standard.\footnote{A nonstandard element is near-standard if it is infinitely close to a standard element, which is called the standard part of the nonstandard element. The standard part map $\ST$ maps near-standard elements to their standard parts. $\bar{x}$ is S-integrable if no infinitesimal group of consumers consumes a non-infinitesimal amount of any commodity. We provide rigorous definitions of these nonstandard objects in \cref{secnspre}.} The near-standardness of $\bar{y}$ follows from Theorem 2 on page 77 of \citet{de59}. \cref{assumptionalocbound} asserts integrable bounds for consumption of bads and implies strictly positive equilibrium prices for goods, and hence guarantees the S-integrability of 
    $\bar{x}$.\footnote{If $\bar{x}$ is not S-integrable, then $\left(\ST(\bar{x}), \ST(\bar{y})\right)$ involves strictly more free disposal than $(\bar{x},\bar{y})$. The associated Loeb allocation is a free-disposal equilibrium if consumers' preferences do not depend on the allocation, but need not be an equilibrium in general.}
    \item Since $\bar{x}$ is S-integrable, $\bar{x}$ is near-standard with respect to the weak topology on $\mathcal{L}^{1}(\Omega, \NNReals^{\ell})$. The standard part $\ST_{w}(\bar{x})$\footnote{$\ST_{w}(\bar{x})$ is the conditional expectation of $\bar{x}$ with respect to the $\sigma$-algebra $\{\ST^{-1}(B): B\in \cB\}$. Informally, we obtain $\ST_{w}(\bar{x})$ by taking average of $\bar{x}$ over monads.} of $\bar{x}$ with respect to the weak topology is an allocation for the original measure-theoretic production economy $\mathcal{E}$. \cref{assumptionwkcts} asserts that consumers' preference maps are continuous with respect to the weak topology on $\mathcal{L}^{1}(\Omega, \NNReals^{\ell})$. The assumption that consumers' preferences are continuous, transitive, negatively transitive, irreflexive, and convex implies that consumers' quasi-demand sets are convex. Continuity of preferences with respect to the weak topology and convexity of quasi-demand sets jointly imply that $\ST_{w}(\bar{x})(\omega)$ is in the quasi-demand set for almost all $\omega\in \Omega$, and hence $\big(\ST_{w}(\bar{x})(\omega), \ST(\bar{y}), \ST(\bar{p})\big)$ is a quasi-equilibrium for $\mathcal{E}$. \cref{assumptionsurvival} implies that $\big(\ST_{w}(\bar{x})(\omega), \ST(\bar{y}), \ST(\bar{p})\big)$ is an equilibrium for $\mathcal{E}$.  
    \item We derive \cref{standardmainqt} from \cref{standardmain} by shifting the production set of each firm by the firm's pre-assigned quota. Thus, every measure-theoretic quota economy with a feasible quota has a quota equilibrium.
\end{enumerate}


\section{Concluding Remarks}\label{secconclude}
In this paper, under natural assumptions, we establish the existence of equilibrium for measure-theoretic production economies with bads and externalities. Our main result, \cref{standardmain}, addresses the open problem raised in \citet{ha05}, and is the first equilibrium existence theorem for measure-theoretic GE models with bads and externalities. \cref{standardmain} assumes consumers' preferences are weakly continuous on $\mathcal{L}^{1}(\Omega,\NNReals^{\ell})$.\footnote{When consumers' preferences are only continuous with respect to the $\mathcal{L}^{1}$ norm topology on $\mathcal{L}^{1}(\Omega, \NNReals^{\ell})$, we show, in \cref{alocboundlemma}, the existence of an equilibrium in the Loeb production economy $\Loeb{\mathscr{E}}$, but equilibrium need not exist in the original measure-theoretic production economy $\mathcal{E}$.}
Our proof relies on a novel application of nonstandard analysis. 

In the Supplementary Material, we formulate the notion of measure-theoretic quota economy by incorporating the quota regulatory scheme, developed in \citet{ad25}, into measure-theoretic production economies. We establish, in \cref{standardmainqt}, the existence of quota equilibrium for all feasible quotas. 


\appendix


\section{Equilibrium Existence for Measure-theoretic Production Economy}\label{secexisteq}
The primary goal of this section is to give a rigorous proof to our main result, \cref{standardmain}. To do this, we fix a measure-theoretic production economy 
\[\label{fixstandard}
\mathcal{E}=\{(X, \succ_{\omega}, P_{\omega}, e_{\omega}, \theta)_{\omega\in \Omega}, (Y_j)_{j\in J}, (\Omega, \BorelSets \Omega, \mu)\}
\]
as in \cref{defmeasureeco}, where the consumer space $\Omega$ is equipped with the Borel $\sigma$-algebra $\BorelSets {\Omega}$ and a probability measure $\mu$.
The proof of \cref{standardmain} makes use of nonstandard analysis and is broken into the following steps:
\begin{enumerate}[leftmargin=*]
    \item Construct a suitable hyperfinite partition $\mathcal{T}_{\Omega}$ of $\NSE{\Omega}$. We then construct an associated hyperfinite production economy $\mathscr{E}$ on $\mathcal{T}_{\Omega}$;
    \item The existence of equilibrium for $\mathscr{E}$ follows from transferring \cref{finitemainexpd};
    \item We further construct a Loeb production economy $\Loeb{\mathscr{E}}$ from $\mathscr{E}$. We then prove the existence of equilibrium in $\Loeb{\mathscr{E}}$ under moderate regularity assumptions;
    \item Under the assumptions of \cref{standardmain}, we construct a standard allocation from an equilibrium allocation of $\Loeb{\mathscr{E}}$ and show that the standard allocation is an equilibrium of the original measure-theoretic production economy $\mathcal{E}$. 
\end{enumerate}

\subsection{Construction of Hyperfinite Partition}\label{secconshp}

This section is devoted to a technical result establishing the existence of a desired hyperfinite partition of the nonstandard extension $\NSE{\Omega}$ of the consumer space $\Omega$. 
In particular, we wish to construct a hyperfinite partition $\mathcal{T}_{\Omega}$ of $\NSE{\Omega}$ such that consumers within the same element of the partition have similar consumption sets, preferences, endowments and shareholdings of firms. 

Recall that, in \cref{assumptionmseco}, we assume the consumer space $\Omega$ is a Polish space endowed with Borel $\sigma$-algebra $\BorelSets \Omega$ and $\mu$ is a probability measure on $\Omega$. The concept of Lusin measurable function is of cruicial importance.

\begin{definition}\label{deflusin}
Let $(X, \BorelSets X, \mu)$ be a Radon probability space and $Y$ be a topological space endowed with the Borel $\sigma$-algebras. 
A function $f: X\to Y$ is Lusin measurable if, for every $\epsilon>0$. there is a compact set $K_{\epsilon}\subset X$ such that $f$ is continuous on $K_{\epsilon}$.
\end{definition}



For second countable range space, measurability is equivalent to Lusin measurability. In particular, we have the following result from the nonstandard measure theory: 
\begin{theorem}[{\citep[][Page.~167, Theorem.~5.3]{NDV}}]\label{measurescts}
Let $(X, \BorelSets X, \mu)$ be a Radon probability space, $Y$ be a second countable Hausdorff space endowed with the Borel $\sigma$-algebra, and 
$f: X\to Y$ be measurable. 
Then, there is a set $Z\subset \NS{\NSE{X}}$ of full Loeb measure such that $\NSE{f}(z)\approx f(\ST(z))$ for all $z\in Z$. 
Consequently, for all $z_1, z_2\in Z$, we have $z_1\approx z_2\implies \NSE{f}(z_1)\approx \NSE{f}(z_2)$.
\end{theorem}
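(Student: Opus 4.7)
The plan is to combine a Lusin-style approximation in the standard world with transfer, exploiting the fact that the target $Y$ is second countable. First, I would invoke the Lusin theorem for Radon probability measures to produce an increasing sequence $K_1\subset K_2\subset \cdots\subset X$ of compact sets with $\mu(K_n)\uparrow 1$ such that $f\uhr K_n$ is continuous for each $n$. For a Radon $\mu$ and a second countable Hausdorff $Y$, this local-continuity-from-measurability step is what does the real work; everything after it is bookkeeping.

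Next, by transfer the restriction $\NSE{f}\uhr \NSE{K_n}$ is $\NSE{}$continuous for every standard $n$, and since $K_n$ is compact in the Hausdorff space $X$, we have $\NSE{K_n}\subset \NS{\NSE{X}}$ with $\ST(z)\in K_n$ for every $z\in \NSE{K_n}$. I would then define
\[
Z=\bigcup_{n\in \Nats}\NSE{K_n},
\]
so that $Z\subset \NS{\NSE{X}}$; because $\Loeb{\mu}(\NSE{K_n})\geq \mu(K_n)\uparrow 1$, the set $Z$ has full Loeb measure.

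To verify the pointwise property, fix $z\in Z$ and pick $n$ with $z\in \NSE{K_n}$; set $x=\ST(z)\in K_n$, so that $\NSE{f}(x)=f(x)$. The continuity of $f\uhr K_n$ at $x$ means that for every standard open $U\ni f(x)$ there exists a standard open $V\ni x$ with $f(V\cap K_n)\subset U$. Since $z\in \monad{x}\cap \NSE{K_n}$, transfer yields $\NSE{f}(z)\in \NSE{U}$ for every such $U$, and second countability of $Y$ lets this be reformulated as $\NSE{f}(z)\approx f(\ST(z))$. The consequence about $z_1\approx z_2$ follows immediately: $z_1\approx z_2$ forces $\ST(z_1)=\ST(z_2)$, whence
\[
\NSE{f}(z_1)\approx f(\ST(z_1))=f(\ST(z_2))\approx \NSE{f}(z_2).
\]

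The hard part will be the Lusin-type approximation itself under the weak hypothesis that $Y$ is merely second countable Hausdorff rather than metric. I would handle it by observing that a Borel measurable map from a Radon probability space into a second countable Hausdorff space may, after discarding a null set, be regarded as taking values in a separable Urysohn-metrizable subspace, reducing the approximation to the classical metric-target Lusin theorem; the remainder is then a routine standard-part computation.
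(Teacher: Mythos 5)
The paper does not prove this statement; it is imported verbatim as Theorem~5.3 of the cited reference and used as a black box, so there is no in-paper argument to compare against. Judged on its own, your skeleton is the standard proof of this lifting result and is essentially correct: Lusin's theorem gives compacta $K_n$ with $\mu(K_n)\upto 1$ on which $f$ is continuous; Robinson's characterization of compactness (with enough saturation) puts $\NSE{K_n}$ inside $\NS{\NSE{X}}$ with standard parts in $K_n$; transfer of the $\epsilon$--$\delta$ (open-set) formulation of continuity of $f\uhr K_n$ at $x=\ST(z)$ gives $\NSE{f}(z)\in \NSE{U}$ for every standard open $U\ni f(x)$, which is by definition $\NSE{f}(z)\approx f(\ST(z))$; and countable additivity of the Loeb measure handles $Z=\bigcup_n \NSE{K_n}$. (Two cosmetic points: $\Loeb{\NSE{\mu}}(\NSE{K_n})$ equals $\mu(K_n)$ exactly, by transfer; and second countability of $Y$ plays no role in the ``reformulation'' step, since membership in $\NSE{U}$ for all standard open $U\ni f(x)$ is precisely membership in the monad of $f(x)$.)

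The one genuine soft spot is your closing paragraph. A second countable Hausdorff space need not be regular, hence need not be Urysohn-metrizable, and there is no reason the essential range of $f$ should sit inside a metrizable subspace; so the proposed reduction to the metric-target Lusin theorem is not justified as stated. Fortunately it is also unnecessary: the Lusin property for a Borel map into a second countable Hausdorff space follows directly from a countable base $\{U_m\}$ of $Y$. For each $m$ choose, by regularity of the Radon measure, a closed $F_m\subset f^{-1}(U_m)\subset O_m$ with $O_m$ open and $\mu(O_m\setminus F_m)<\epsilon 2^{-m}$; on the complement of $E=\bigcup_m (O_m\setminus F_m)$ one has $f^{-1}(U_m)\cap (X\setminus E)=O_m\cap (X\setminus E)$, so $f\uhr (X\setminus E)$ is continuous, and inner regularity then yields the compact set $K_\epsilon\subset X\setminus E$. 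Replacing your final paragraph with this argument closes the gap, and this is where second countability of $Y$ is actually used.
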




Recall from \cref{defmeasureeco} that the preference map $P_{\omega}: \mathcal{L}^{1}(\Omega, \NNReals^{\ell})\times Y\times \Delta\to \mathcal{P}$ is continuous in the norm topology on $\mathcal{L}^{1}(T, \NNReals^{l})\times Y\times\Delta$ for every $\omega\in \Omega$ and measurable in $\Omega$.
Let $\cC[\mathcal{L}^{1}(\Omega,\NNReals^{\ell})\times Y\times \Delta, \mathcal{P}]$ denote the collection of all continuous functions from $\mathcal{L}^{1}(\Omega,\NNReals^{\ell})\times Y\times \Delta$ to $\mathcal{P}$, equipped with the sup-norm topology. 
$\cC[\mathcal{L}^{1}(\Omega,\NNReals^{\ell})\times Y\times \Delta, \mathcal{P}]$ is a complete metric space, but it is not separable, and hence not a Polish space. 
Let $\chi: \Omega\to \cC[\mathcal{L}^{1}(\Omega,\NNReals^{\ell})\times Y\times \Delta, \mathcal{P}]$ be the map $\chi(\omega)=P_{\omega}$.
To deduce the tightness of the induced measure\footnote{Let $X$ be a Hausdorff space equipped with Borel $\sigma$-algebra $\BorelSets X$. A probability measure $P$ on $(X,\BorelSets X)$ is tight if, for any $\epsilon>0$, there is a compact set $K_{\epsilon}\subset X$ such that $P(K_{\epsilon})>1-\epsilon$.} $\mu_{\chi}=\mu\circ \chi^{-1}$ on $\cC[\mathcal{L}^{1}(\Omega,\NNReals^{\ell})\times Y\times \Delta, \mathcal{P}]$, we need the following lemma:
\begin{lemma}[{\citep[][Page.~235]{bill1968}}]\label{billmcar}
Let $X$ be a complete metric space endowed with the Borel $\sigma$-algebra $\BorelSets X$. 
Let $P$ be a probability measure on $(X, \BorelSets X)$ such that the support of $P$ is separable. 
Then $P$ is tight. 
\end{lemma}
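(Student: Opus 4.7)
The plan is to reduce this statement to Ulam's theorem, which asserts that every Borel probability measure on a Polish space is tight. Set $S := \supp(P)$. Since the support of any Borel measure on a metric space is closed, $S$ is closed in $X$, and by hypothesis $S$ is separable. Because closed subsets of complete metric spaces inherit completeness under the induced metric, $S$ is a complete separable metric space, i.e., a Polish space.

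Once $S$ is identified as Polish, I would apply Ulam's theorem to the restriction $P|_S$. This yields, for each $\epsilon > 0$, a compact set $K_\epsilon \subset S$ with $P(K_\epsilon) \geq 1 - \epsilon$. Since compactness is an intrinsic topological property, $K_\epsilon$ remains compact in $X$, and this directly establishes tightness of $P$ on $X$.

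The only step requiring genuine attention is the verification that $P(S) = 1$. If the support is taken to be the smallest closed set of full $P$-measure (the convention in \citet{bill1968}), this is immediate by definition and the argument collapses into the first two paragraphs. If one instead works with the pointwise definition, so that $S$ consists of all $x$ every open neighborhood of which has positive $P$-mass, then $X \setminus S$ is the union of all $P$-null open sets, and one must extract a countable subcover. Here the separability of $S$ allows for a Lindel\"{o}f-type argument, using a countable dense subset of $S$ together with open balls of rational radii around its points to cover $X \setminus S$ by a countable collection of $P$-null sets. I expect this countable-subcover extraction to be the main obstacle, since $X$ itself is not assumed separable and one must lean carefully on the separability hypothesis for $S$ to transfer the Lindel\"{o}f property to the open complement.
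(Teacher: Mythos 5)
The paper states this lemma purely as a citation to \citet{bill1968} and supplies no proof, so there is nothing internal to compare against; your reduction to Ulam's theorem --- the support is closed, hence complete as a closed subset of a complete space, hence Polish once separability is granted, and compactness of the resulting sets is intrinsic so they witness tightness in $X$ --- is the standard argument and is surely what the cited result rests on. The first two paragraphs are correct provided $P(\supp(P))=1$.

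The third paragraph, however, contains a genuine gap. Under the bare pointwise definition of support (all points every neighborhood of which has positive mass), the hypothesis that $\supp(P)$ is separable does \emph{not} imply $P(\supp(P))=1$, and the lemma so read is actually false: take $X$ to be a discrete metric space of two-valued measurable cardinality (complete, since Cauchy sequences are eventually constant) and $P$ a countably additive $\{0,1\}$-valued measure vanishing on singletons; every singleton is an open null set, so $\supp(P)=\emptyset$ is (vacuously) separable, yet $P$ is not tight because every compact set is finite and null. Consequently no Lindel\"{o}f-type extraction can close this gap, and the specific cover you propose --- rational-radius balls around a countable dense subset of $S$ --- covers a neighborhood of $S$ rather than $X\setminus S$, so it does not even address the set whose measure you need to kill. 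The statement must be read with Billingsley's convention, under which ``separable support'' means $P$ is concentrated on a separable (closed) set; this is also how the paper deploys the lemma, via Theorem 2 on page 235 of \citet{bill1968} quoted in its footnote on measurable cardinals. With that reading, the concern you flag evaporates and your proof is complete as written.
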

The support of every probability measure is separable unless the measurable cardinal exist.\footnote{A necessary and sufficient condition that each probability measure's support be separable is that each discrete subset of the sample space has non-measurable cardinality, see Theorem 2 on page 235 of \citet{bill1968}. \citet{bill1968} points out that, if measurable cardinals exist, they must be so large as never to arise in a natural way in mathematics.} 
In any case, the support of a probability measure is separable for any reasonable metric space. 
By \cref{billmcar}, $\mu_{\chi}$ is tight on $\cC[\mathcal{L}^{1}(\Omega,\NNReals^{\ell})\times Y\times \Delta, \mathcal{P}]$.

\begin{theorem}\label{hyptsexist}
Let $\mathcal{E}$ be the measure-theoretic production economy which we fix in \cref{fixstandard}.
Suppose that $\mathcal{E}$ satisfies \cref{assumptionmseco}. 
Then there exists a hyperfinite partition $\mathcal{T}=\{B_i\in \NSE{\BorelSets \Omega}: i\leq K\}$ of $\NSE{\Omega}$ with $\mathcal{T}'\subset \cT_{\Omega}$ such that, for $y\in \bigcup \cT'$:
\begin{enumerate}[(i), leftmargin=*]
    \item $\bigcup \cT'$ is $\Loeb{\NSE{\mu}}$-measurable and $\Loeb{\NSE{\mu}}(\bigcup \cT')=1$;
    \item $\bigcup \cT'\subset \NS{\NSE{\Omega}}$ and the diameter of each element of $\cT'$ is infinitesimal;
    \item $\NSE{e}(y)\approx e(\ST(y))$, $\NSE{\chi}(y)\approx \chi(\ST(y))$, $\NSE{\theta}(y)\approx \theta(\ST(y))$ and $\NSE{X}(y)\approx X(\ST(y))$.
\end{enumerate}
\end{theorem}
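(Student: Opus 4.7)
The plan is to produce a Loeb-full set $Z \subset \NS{\NSE{\Omega}}$ on which each of the four maps $e$, $\theta$, $X$, $\chi$ preserves infinitesimal closeness, and then to construct a hyperfinite partition of $\NSE{\Omega}$ with infinitesimal diameters on the near-standard part; the distinguished subcollection $\mathcal{T}'$ will consist of those partition elements that lie inside $Z$.

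For $e$, $\theta$, and $X$, the codomains $\NNReals^{\ell}$, $\NNReals^{|J|}$, and $\mathcal{K}(\NNReals^{\ell})$ (under the closed convergence topology) are Polish and in particular second countable, so \cref{measurescts} applies directly and yields sets $Z_{e}, Z_{\theta}, Z_{X} \subset \NS{\NSE{\Omega}}$ of full Loeb measure on which $\NSE{e}(y)\approx e(\ST(y))$, $\NSE{\theta}(y)\approx \theta(\ST(y))$, and $\NSE{X}(y)\approx X(\ST(y))$. The preference map $\chi$ is more delicate because $\cC[\mathcal{L}^{1}(\Omega,\NNReals^{\ell})\times Y\times \Delta, \mathcal{P}]$ is not second countable in the sup norm. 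Here I would invoke \cref{billmcar} to conclude $\mu_{\chi}$ is tight, extract an increasing sequence of compact subsets $K_{n}$ of the codomain with $\mu_{\chi}(K_{n}) \to 1$, and apply \cref{measurescts} to $\chi$ restricted to $\chi^{-1}(K_{n})$, viewed as a measurable map into the second countable space $K_{n}$. Taking the union over $n$ yields a Loeb-full set $Z_{\chi}$, and $Z \defas Z_{e} \cap Z_{\theta} \cap Z_{X} \cap Z_{\chi}$ has the joint infinitesimal-closeness property for all four maps simultaneously.

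For the partition, since $\Omega$ is Polish, for each standard $n$ I can choose a countable Borel partition $\mathcal{P}_{n}$ of $\Omega$ into sets of diameter below $1/n$, arranged so that $\mathcal{P}_{n+1}$ refines $\mathcal{P}_{n}$. A standard saturation argument then produces a hyperfinite internal partition $\mathcal{T} = \{B_{i} : i \leq K\}$ of $\NSE{\Omega}$ refining $\NSE{\mathcal{P}_{n}}$ for every standard $n$, so every partition element containing a near-standard point is forced to lie inside $\NSE{A}$ for some $A$ of arbitrarily small diameter, hence has infinitesimal diameter. Setting $\mathcal{T}' \defas \{B \in \mathcal{T} : B \subset Z\}$, each element of $\mathcal{T}'$ is contained in $\NS{\NSE{\Omega}}$, has infinitesimal diameter, and satisfies $\NSE{f}(y)\approx f(\ST(y))$ for every $y \in B$ and $f \in \{e,\theta,X,\chi\}$, establishing (ii) and (iii).

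The main obstacle is (i): verifying $\Loeb{\NSE{\mu}}(\bigcup \mathcal{T}') = 1$, since partition elements straddling $Z$ and its complement need not be contained in $Z$. My strategy is to realize $Z$ up to a Loeb-null set as $\NS{\NSE{\Omega}} \cap \bigcup_{n} \NSE{L_{n}}$ for a nested sequence of compact Lusin sets $L_{n} \subset \Omega$ with $\mu(L_{n}) \to 1$ on which all four maps are jointly continuous, and to further refine each $\mathcal{P}_{n}$ so that it respects a finite open cover of $L_{n}$ whose mesh is chosen by uniform continuity to control the oscillation of every map by less than $1/n$. Saturation then produces a $\mathcal{T}$ whose elements meeting $\NSE{L_{n}}$ lie inside a small $\NSE{\mathcal{P}_{n}}$-cell on which the four maps are jointly S-continuous, forcing the element into $Z$. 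The excluded partition elements are then concentrated in $\bigcup_{n}\NSE{(\Omega \setminus L_{n})}$, which has Loeb measure zero.
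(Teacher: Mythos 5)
Your first half---producing the Loeb-full set $Z$ via \cref{measurescts} for $e$, $\theta$, $X$, and handling $\chi$ by tightness of $\mu_{\chi}$ together with restriction to second countable compact subsets of the codomain---is exactly the paper's argument and is correct. The gap is in the partition stage, specifically in your verification of (i), and it is not a removable wrinkle: the definition $\mathcal{T}'=\{B\in\mathcal{T}: B\subset Z\}$ is incompatible with cells obtained by refining nonstandard extensions $\NSE{\mathcal{P}_{n}}$ of standard Borel partitions. Such a cell contains a whole internal neighbourhood's worth of points, including points of the Loeb-null bad set where $\NSE{f}(y)\not\approx f(\ST(y))$, and one such point disqualifies the entire cell. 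Uniform continuity of the four maps on the compact Lusin set $L_{n}$ controls oscillation only on subsets of $L_{n}$; on $A\setminus L_{n}$ a merely measurable map is uncontrolled, so a cell meeting $\NSE{L_{n}}$ is not ``forced into $Z$.'' Concretely, take $\Omega=[0,1]$ with Lebesgue measure and $e=\mathbf{1}_{\Rationals}$: the set $Z$ omits every hyperrational lying in the monad of an irrational, every $\NSE{}$interval cell of positive length sitting in such a monad contains hyperrationals by transfer, and so $\bigcup\mathcal{T}'$ can have Loeb measure $0$ rather than $1$. (A secondary issue: no hyperfinite partition can refine $\NSE{\mathcal{P}_{n}}$ globally when $\mathcal{P}_{n}$ is countably infinite, since the induced internal map from the hyperfinite partition onto the nonempty cells of $\NSE{\mathcal{P}_{n}}$ would have hyperfinite image, yet the index set of nonempty cells is unbounded in $\NSE{\Nats}$; one must leave a single garbage cell outside a compact set, as the paper does.)

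The paper avoids both problems by reversing your quantifier. At the finite stage it takes the small-diameter cells $V_{i}$ to be internal subsets of $\NSE{K_{n}}\cap Y_{n}$, where $Y_{n}$ is an internal subset of the good set (your $Z$) of measure exceeding $1-\frac{1}{2n}$; every point of every cell then already satisfies $\NSE{f}(y)\approx f(\ST(y))$, which makes the oscillation bound $|\NSE{f}(a)-\NSE{f}(b)|<\frac{1}{n}$ on each cell hold automatically and be expressible as an internal condition. After saturation the cells of the resulting family need no longer lie inside $Z$, but they have infinitesimal diameter and infinitesimal oscillation of each $\NSE{f}$; the paper then defines $\cT'$ as the cells that merely \emph{intersect} $Z\cap\NS{\NSE{\Omega}}$ and propagates the approximation property from a witness point $a_{0}$ to every $b$ in the cell via $\NSE{f}(b)\approx\NSE{f}(a_{0})\approx f(\ST(a_{0}))=f(\ST(b))$. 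With that definition $\bigcup\cT'$ contains the intersection of $Z$, $\NS{\NSE{\Omega}}$ and the union of the saturated family, so (i) is immediate. If you replace ``$B\subset Z$'' by ``$B\cap Z\cap\NS{\NSE{\Omega}}\neq\emptyset$'' and obtain the infinitesimal-oscillation condition by containing the finite-stage cells in an internal subset of $Z$ (rather than by uniform continuity on $L_{n}$, which cannot supply it), your argument closes.
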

\begin{proof}
By \cref{measurescts}, there exists a $Y_{1}\subset \NSE{\Omega}$ with $\Loeb{\NSE{\mu}}(Y_{1})=1$ such that for all $y\in Y_{1}$
\begin{enumerate}[leftmargin=*]
    \item $\NSE{e}(y)\approx e(\ST(y))$;
    \item $\NSE{\theta}(y)\approx \theta(\ST(y))$;
    \item $\NSE{X}(y)\approx X(\ST(y))$.
\end{enumerate}
For every $\epsilon>0$, there exists a compact set $C_{\epsilon}\subset \cC[\mathcal{L}^{1}(\Omega,\NNReals^{\ell})\times Y\times \Delta, \mathcal{P}]$ such that $\mu\big(\chi^{-1}(C_{\epsilon})\big)>1-\epsilon$. 
As every compact metric space is second countable, by \cref{measurescts}, there exists $\Omega_{\epsilon}\subset \NSE{\big(\chi^{-1}(C_{\epsilon})\big)}$ such that
\begin{enumerate}[leftmargin=*]
    \item $\Loeb{\NSE{\mu}}(\Omega_{\epsilon})>1-\epsilon$;
    \item For $x\in \Omega_{\epsilon}$, $\NSE{\chi}(x)\approx \chi(\ST(x))$.
\end{enumerate}
Construct such $\Omega_{\frac{1}{n}}$ for every $n\in \Nats$ and consider $\bigcup_{n\in \Nats}\Omega_{\frac{1}{n}}$, the set is Loeb measurable with respect to $\Loeb{\NSE{\mu}}$ and we have $\Loeb{\NSE{\mu}}(\bigcup_{n\in \Nats}\Omega_{\frac{1}{n}})=1$. Moreover, for every $a\in \bigcup_{n\in \Nats}\Omega_{\frac{1}{n}}$, we have $\NSE{\chi}(a)\approx \chi(\ST(a))$. We use $Y_2$ to denote the set $\bigcup_{n\in \Nats}\Omega_{\frac{1}{n}}$. Then the set $Y=Y_1\cap Y_2$ is a $\Loeb{\NSE{\mu}}$-measurable set with $\Loeb{\NSE{\mu}}(Y)=1$ such that for all $y\in Y$:
\begin{enumerate}[leftmargin=*]
    \item $\NSE{e}(y)\approx e(\ST(y))$, $\NSE{X}(y)\approx X(\ST(y))$, $\NSE{\theta}(y)\approx \theta(\ST(y))$, and $\NSE{\chi}(y)\approx \chi(\ST(y))$.
\end{enumerate}

Let $\delta\in \NSE{\Reals}$ be a positive infinitesimal. 
Let $d$ denote the metric on $\cK(\NNReals^{\ell})$ and $d_{\mathrm{sup}}$ denote the metric generated from the sup-norm on 
$\cC[\mathcal{L}^{1}(\Omega,\NNReals^{\ell})\times Y\times \Delta, \mathcal{P}]$.
For each $n\in \Nats$, let $\phi_n(\cT_{n}, \cT'_{n})$ be the conjunction of the following formulas:
\begin{enumerate}[leftmargin=*]
    \item $\cT_{n}\subset \NSE{\BorelSets {\Omega}}$ is a hyperfinite partition of $\NSE{\Omega}$;
    \item $\cT'_{n}\subset \cT_{n}$ is internal and the diameter of every element in $\cT'_{n}$ is no greater than $\delta$;
    \item $\NSE{\mu}(\bigcup \cT'_{n})>1-\frac{1}{n}$; 
    \item For every element $V\in \cT'_{n}$, we have $|\NSE{e}(a)-\NSE{e}(b)|<\frac{1}{n}$, $|\NSE{\theta}(a)-\NSE{\theta}(b)|<\frac{1}{n}$, $\NSE{d}(\NSE{X}(a), \NSE{X}(b))<\frac{1}{n}$ and $\NSE{d}_{\mathrm{sup}}(\NSE{\chi}(a), \NSE{\chi}(b))<\frac{1}{n}$ for all $a, b\in V$.
\end{enumerate}
To show that $\{\phi_n(\cT_{n}, \cT'_{n}): n\in \Nats\}$ is finitely satisfiable, it is sufficient to show that each $\phi_n(\cT_{n}, \cT'_{n})$ is satisfiable. 
As $(\Omega, \BorelSets \Omega, \mu)$ is a Radon probability space, there exists a compact set $K_{n}\subset \Omega$ such that $\mu(K_{n})>1-\frac{1}{2n}$. 
Pick $Y_n\in \NSE{\BorelSets \Omega}$ such that $Y_n\subset Y$ and $\NSE{\mu}(Y_n)>1-\frac{1}{2n}$.
Let $K'_n=\NSE{K}_n\cap Y_n$. Then $\NSE{\mu}(K'_n)>1-\frac{1}{n}$. 
So there is a mutually disjoint hyperfinite collection $\{V_i: i\leq M\}\subset \NSE{\BorelSets \Omega}$ such that:
\begin{enumerate}[leftmargin=*]
    \item the diameter of each $V_i$ is no greater than $\delta$;
    \item $K'_n=\bigcup_{i\leq M}V_i$.
\end{enumerate}
Let $\cT_{n}=\{V_i: i\leq M\}\cup \{\NSE{\Omega}\setminus K'_n\}$ and let $\cT'_{n}=\{V_i: i\leq M\}$. 
Clearly, $\phi_{n}(\cT_{n}, \cT'_{n})$ is satisfied. 

By saturation, there exist $\cT$ and $\hat{\cT}$ such that $\phi_{n}(\cT_{\Omega}, \cT')$ holds simultaneously for all $n\in \Nats$:
\begin{enumerate}[leftmargin=*]
    \item $\cT\subset \NSE{\BorelSets {\Omega}}$ is a hyperfinite partition of $\NSE{\Omega}$;
    \item $\hat{\cT}\subset \cT$ is internal and the diameter of every element in $\hat{\cT}$ is no greater than $\delta$;
    \item $\NSE{\mu}(\bigcup \hat{\cT})\approx 1$;
    \item For $V\in \hat{\cT}$, $\NSE{e}(a)\approx \NSE{e}(b)$, $\NSE{\theta}(a)\approx \NSE{\theta}(b)$, $\NSE{X}(a)\approx \NSE{X}(b)$ and $\NSE{\chi}(a)\approx \NSE{\chi}(b)$ for $a, b\in V$. 
\end{enumerate}
Let us consider the set $\cT'=\{V\in \hat{\cT}: V\cap \NS{\NSE{\Omega}}\cap Y\neq \emptyset\}$. 
Clearly, $\cT'$ is Loeb measurable and $\Loeb{\NSE{\mu}}(\bigcup \cT')=1$.
Moreover, the diameter of every element of $\cT'$ is infinitesimal. 
Pick some element $V_0\in \cT'$. 
As $V_0\cap \NS{\NSE{\Omega}}\neq \emptyset$, we have $V_0\subset \NS{\NSE{\Omega}}$. 
As $V_0\cap Y\neq \emptyset$, then there exists an element $a_0\in V_0$ such that $\NSE{e}(a_0)\approx e(\ST(a_0))$, $\NSE{\theta}(a_0)\approx \theta(\ST(a_0))$, $\NSE{X}(a_0)\approx X(\ST(a_0))$ and 
$\NSE{\chi}(a_0)\approx \chi(\ST(a_0))$. 
For every $b\in V_0$, we have 
\begin{enumerate}[leftmargin=*]
    \item $\NSE{e}(b)\approx \NSE{e}(a_0)\approx e(\ST(a_0))=e(\ST(b))$;
    \item $\NSE{\theta}(b)\approx \NSE{\theta}(a_0)\approx \theta(\ST(a_0))=\theta(\ST(b))$;
    \item $\NSE{X}(b)\approx \NSE{X}(a_0)\approx X(\ST(a_0))=X(\ST(b))$;
    \item $\NSE{\chi}(b)\approx \NSE{\chi}(a_0)\approx \chi(\ST(a_0))=\chi(\ST(b))$.
\end{enumerate}
Thus, $\cT$ and $\cT'$ satisfy all conditions of the theorem, hence completing the proof. 
\end{proof}

In the next section, we will construct an associated hyperfinite production economy $\mathscr{E}$ of $\mathcal{E}$ via the hyperfinite partition $\cT_{\Omega}$ and  establish the existence of equilibrium of $\mathscr{E}$. The proof of which follows from applying the transfer principle to \cref{finitemainexpd}.




\subsection{Existence of Equilibrium in Hyperfinite Production Economy}\label{seceqesthyper}
A hyperfinite production economy is a nonstandard economy but satisfies all the first-order logic properties of a weighted production economy. Hence, the existence of equilibrium for a hyperfinite weighted economy follows from \cref{finitemainexpd} via the transfer principle. 

We first establish some basic properties on standard partitions of the measure-theoretic production economy $\mathcal{E}$. 
Let $\mathscr{H}=\{H_i: i\in \Nats\}\subset \BorelSets \Omega$ be a countable partition of $\Omega$. Let $\mathscr{H}_{\Omega}=\{h_i: i\in \Nats\}$ be a countable subset of $\Omega$ such that $h_i\in H_i$ for each $i\in \Nats$. 
Define $\mu^{\mathscr{H}}$ to be the probability measure on $\mathscr{H}_{\Omega}$ such that $\mu^{\mathscr{H}}(\{h_i\})=\mu(H_i)$ for all $i\in \Nats$.
For a function $f: \mathscr{H}_{\Omega}\to \NNReals^{\ell}$, define $E(f): \Omega\to \NNReals^{\ell}$ by letting $E(f)(x)=f(h_{x})$, where $h_{x}$ is the unique point in $\mathscr{H}_{\Omega}$ such that $x$ is in the element of $\mathscr{H}$ that associates with $h_{x}$. Let $\mathcal{L}^{1}(\mathscr{H}_{\Omega}, \NNReals^{\ell})$ denote the set of integrable functions on $\mathscr{H}_{\Omega}$ with respect to $\mu^{\mathscr{H}}$. It is easy to see that, for every $f\in \mathcal{L}^{1}(\mathscr{H}_{\Omega}, \NNReals^{\ell})$, $E(f)$ is an element of $\mathcal{L}^{1}(\Omega, \NNReals^{\ell})$.

\begin{definition}
Let $\phi: \mathcal{L}^{1}(\Omega, \NNReals^{\ell})\times Y\times \Delta\to \mathcal{P}$ and $\mathscr{P}$ be a countable partition of $\Omega$.
The restriction $\phi^{\mathscr{H}}: \mathcal{L}^{1}(\mathscr{H}_{\Omega}, \NNReals^{\ell})\times Y\times \Delta\to \mathcal{P}$ of $\phi$ is  $\phi^{\mathscr{H}}(f, y,p)=\phi(E(f),y,p)$.
\end{definition}

Recall that the set $\mathcal{P}_{H}$ of convex preferences is a closed subset of $\mathcal{P}$, hence is also compact. 

\begin{lemma}\label{restpreserve}
Suppose $\phi: \mathcal{L}^{1}(\Omega, \NNReals^{\ell})\times Y\times \Delta\to \mathcal{P}_{H}$. 
Let $\mathscr{H}$ be a countable partition of $\Omega$.
Then $\phi^{\mathscr{H}}$ also maps to $\mathcal{P}_{H}$. 
Moreover, if $\phi$ is continuous then so is $\phi^{\mathscr{H}}$.
\end{lemma}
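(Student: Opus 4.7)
The plan is to observe that $\phi^{\mathscr{H}}$ is a composition of $\phi$ with the obvious ``lifting'' map $E \times \mathrm{id}_Y \times \mathrm{id}_{\Delta}$, so both conclusions reduce to properties of the single map $E \colon \mathcal{L}^{1}(\mathscr{H}_{\Omega}, \NNReals^{\ell}) \to \mathcal{L}^{1}(\Omega, \NNReals^{\ell})$.

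For the first claim, I would note that by the definition $\phi^{\mathscr{H}}(f,y,p) = \phi(E(f), y, p)$, every value of $\phi^{\mathscr{H}}$ is literally a value of $\phi$. Since $\phi$ takes values in $\mathcal{P}_H$ by hypothesis, $\phi^{\mathscr{H}}$ does too; no further argument is needed.

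For the continuity claim, the key step is to check that $E$ is a continuous (in fact isometric) linear map. For $f, g \in \mathcal{L}^{1}(\mathscr{H}_{\Omega}, \NNReals^{\ell})$, since $\mathscr{H}= \{H_i\}_{i \in \Nats}$ partitions $\Omega$ and $\mu^{\mathscr{H}}(\{h_i\}) = \mu(H_i)$, I would compute
\[
\|E(f) - E(g)\|_{\mathcal{L}^{1}(\Omega)} = \sum_{i \in \Nats} \int_{H_i} |f(h_i) - g(h_i)|\,\mu(\dee x) = \sum_{i \in \Nats} |f(h_i) - g(h_i)|\,\mu(H_i) = \|f - g\|_{\mathcal{L}^{1}(\mathscr{H}_{\Omega})},
\]
so $E$ is an isometry, hence norm-to-norm continuous. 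The map $(f,y,p) \mapsto (E(f), y, p)$ is therefore continuous from $\mathcal{L}^{1}(\mathscr{H}_{\Omega}, \NNReals^{\ell}) \times Y \times \Delta$ to $\mathcal{L}^{1}(\Omega, \NNReals^{\ell}) \times Y \times \Delta$, and composing with the continuous $\phi$ yields continuity of $\phi^{\mathscr{H}}$.

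There is no real obstacle here; the only subtlety is bookkeeping to be sure that the norm on $\mathcal{L}^1(\mathscr{H}_{\Omega}, \NNReals^{\ell})$, interpreted via the atomic measure $\mu^{\mathscr{H}}$, matches the one induced by pulling back through $E$. If later the weak topology on $\mathcal{L}^{1}(\Omega, \NNReals^{\ell})$ becomes relevant (as in \cref{assumptionwkcts}), the same argument suffices, since a bounded linear operator between Banach spaces is automatically weak-to-weak continuous, so $E$ remains continuous regardless of whether the norm or weak topology is placed on the $\mathcal{L}^{1}$ factors.
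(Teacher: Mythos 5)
Your proof is correct and is essentially the paper's own argument: the paper likewise identifies $\mathcal{L}^{1}(\mathscr{H}_{\Omega}, \NNReals^{\ell})$ with a subset of $\mathcal{L}^{1}(\Omega, \NNReals^{\ell})$ via $E$ and observes that $\phi^{\mathscr{H}}$ is then just a restriction of $\phi$. You merely make explicit the isometry computation that the paper leaves implicit in the phrase ``view as a subset,'' which is a reasonable bit of added care.
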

\begin{proof}
We view $\mathcal{L}^{1}(\mathscr{H}_{\Omega}, \NNReals^{\ell})$ as a subset of $\mathcal{L}^{1}(\Omega, \NNReals^{\ell})$ by associating $f\in \mathcal{L}^{1}(\mathscr{H}_{\Omega}, \NNReals^{\ell})$ with $E(f)\in \mathcal{L}^{1}(\Omega, \NNReals^{\ell})$. Thus, $\phi^{\mathscr{H}}$ is a restriction of $\phi$, completing the proof. 
\end{proof}

Under \cref{assumptionmseco}, let $\mathscr{T}=\{T_1, T_2, \dotsc, T_{K}\}\subset \NSE{\BorelSets \Omega}$ be a hyperfinite partition of $\NSE{\Omega}$ as in \cref{hyptsexist}.
Let $\mathscr{T}_{\Omega}=\{t_i: i\leq K\}\subset \NSE{\Omega}$ be a hyperfinite set such that: 
\begin{enumerate}[leftmargin=*]
    \item $\Omega\subset \mathscr{T}_{\Omega}$ and $t_i\in T_i$ for every $i\leq K$;
    \item If $T_i\cap \NSE{\Omega_0}\neq \emptyset$, then $t_i\in \NSE{\Omega_0}$.
\end{enumerate}
Our hyperfinite production economy 
\[\label{fixhyper}
\mathscr{E}=\{(\NSE{X}, \NSE{\succ}^{\mathscr{T}}_t, \NSE{P}^{\mathscr{T}}_t, \hat{e}_t, \hat{\theta})_{t\in \mathscr{T}_{\Omega}}, (\NSE{Y}_j)_{j\in J}, \NSE{\mu}^{\mathscr{T}}\}
\]
is defined to be:
\begin{enumerate}[(i), leftmargin=*]
\item $\mathscr{T}_{\Omega}$ is the hyperfinite consumer space and $\NSE{\mu}^{\mathscr{T}}(\{t_i\})=\NSE{\mu}(T_i)$; 

\item $J$ is the same finite set of firms;

\item For every $t\in \mathscr{T}_{\Omega}$, $\NSE{X}(t): \mathscr{T}_{\Omega}\to \NSE{\cK}(\NSE{\NNReals^{\ell}})$ is the $\NSE{}$consumption set of consumer $t$. By the transfer principle, $\NSE{X}(t)\neq\emptyset$ for all $t\in \mathscr{T}_{\Omega}$. We sometimes write $\NSE{X}_t$ for $\NSE{X}(t)$;

\item $\NSE{Y}_j\subset \NSE{\Reals^{\ell}}$ is the nonstandard extension of $Y_j$, denoting the $\NSE{}$production set of producer $j\in J$. 
Note that $\NSE{Y}=\prod_{j\in J}\NSE{Y}_j$;

\item the set of $\NSE{}$allocations is $\mathscr{A}=\{x\in \NSE{\mathcal{L}^{1}}(\mathscr{T}_{\Omega}, \NSE{\NNReals^{\ell}}): x(t)\in \NSE{X}_t\ \NSE{\mu}^{\mathscr{T}}\mbox{-almost surely}\}$, which is equipped with the $\NSE{\mathcal{L}^{1}}$ strong topology;
 
\item \label{item-hyperprefer} Let $\NSE{M}^{\mathscr{T}}_{t}=\NSE{\mathcal{L}^{1}}(\mathscr{T}_{\Omega}, \NSE{\NNReals^{\ell}})\times \NSE{Y}\times \NSE{\Delta}\times \NSE{X}_{t}$ and $\NSE{\succ}^{\mathscr{T}}_{t}= (\NSE{M}^{\mathscr{T}}_{t}\times \NSE{M}^{\mathscr{T}}_{t})\cap \NSE{\succ}_{t}$ for $t\in \mathscr{T}_{\Omega}$. Let $\NSE{P}^{\mathscr{T}}_{t}: \NSE{\mathcal{L}^{1}}(\mathscr{T}_{\Omega}, \NSE{\NNReals^{\ell}})\times \NSE{Y}\times \NSE{\Delta}\to \NSE{\mathcal{P}}$ be the preference map induced from $\NSE{\succ}^{\mathscr{T}}_{t}$. $\NSE{P}^{\mathscr{T}}_{t}$
is the restriction of $\NSE{P}_{t}$ to $\mathscr{T}_{\Omega}$ for each $t\in \mathscr{T}_{\Omega}$\footnote{That is, $\NSE{P}^{\mathscr{T}}_{t}$ is an internal mapping such that $\NSE{P}^{\mathscr{T}}_{t}(x,y,p)=\NSE{P}_{t}(\NSE{E}(x),y,p)$, where $E(x)$ is the extension of $x$ defined at the beginning of \cref{secexisteq}.}; 

\item As $\mathscr{T}$ satisfies \cref{hyptsexist}, for all $j\in J$, we have
\[
\sum_{i\leq K}\NSE{\theta}(t_i)(j)\NSE{\mu}^{\mathscr{T}}(\{t_i\})\approx \int_{\Omega}\theta(\omega)(j)\mu(\dee \omega)=1.\nonumber
\]
Let $\alpha_{j}=\sum_{i\leq K}\NSE{\theta}(t_i)(j)\NSE{\mu}^{\mathscr{T}}(\{t_i\})$ for all $j\in J$. For each $t\in \mathscr{T}_{\Omega}$ and $j\in J$, define $\hat{\theta}(t)(j)=\frac{1}{\alpha_{j}}\NSE{\theta}(t)(j)$, which is the consumer $t$'s shareholding on firm $j$. Note that $\hat{\theta}(t)\approx \NSE{\theta}(t)$ for all $t\in \mathscr{T}_{\Omega}$. We sometimes write $\hat{\theta}_{tj}$ for $\hat{\theta}(t)(j)$;

\item For each $t\in \mathscr{T}_{\Omega}$, $\hat{e}(t)\approx \NSE{e}(t)$ is to be determined later in this section, and it represents the initial endowment of consumer $t$. 
\end{enumerate}

For every $t\in \mathscr{T}_{\Omega}$, $p\in \NSE{\Delta}$ and $y\in \NSE{Y}$, the $\NSE{}$budget set $\mathscr{B}_{t}(y,p)$ is defined to be: 
\[
\mathscr{B}_t(y,p)=\left\{z\in \NSE{X}_t: p\cdot z \leq p\cdot \hat{e}(t) + \sum_{j\in J}\hat{\theta}_{tj}p\cdot y(j)\right\}.\nonumber
\]

For each $t\in \mathscr{T}_{\Omega}$ and $(x,y,p)\in \NSE{\mathcal{L}^{1}}(\mathscr{T}_{\Omega}, \NSE{\NNReals^{\ell}})\times \NSE{Y}\times \NSE{\Delta}$, let $\mathscr{D}_{t}(x,y,p)$ and $\bar{\mathscr{D}}_t(x,y,p)$ denote the $\NSE{}$demand set and $\NSE{}$quasi-demand set, respectively. That is:
\[
\mathscr{D}_{t}(x,y,p)=\{z \in \mathscr{B}_t(y,p): (w,z)\in \NSE{P}^{\mathscr{T}}_{t}(x,y,p)\implies p\cdot w>p\cdot \hat{e}(t) + \sum_{j\in J}\hat{\theta}_{tj}p\cdot y(j)\}\nonumber
\]
\[
\bar{\mathscr{D}}_t(x,y,p)=\{z \in \mathscr{B}_t(y,p): (w,z)\in \NSE{P}^{\mathscr{T}}_{t}(x,y,p)\implies p\cdot w\geq p\cdot \hat{e}(t) + \sum_{j\in J}\hat{\theta}_{tj}p\cdot y(j)\}.\nonumber
\]
For each $j\in J$, let $\mathcal{S}_j(p)=\argmax_{z \in \NSE{Y}_j} p\cdot z $ denote the (possibly empty) $\NSE{}$supply set at $p\in \NSE{\Delta}$. We now give the definition of hyperfinite (quasi)-equilibrium for $\mathscr{E}$. 

\begin{definition}\label{def_mameqprodhyper}
A hyperfinite quasi-equilibrium for $\mathscr{E}$ is $(\bar{x}, \bar{y}, \bar{p})\in \mathscr{A}\times \NSE{Y}\times \NSE{\Delta}$ such that the following conditions are satisfied:
\begin{enumerate}[(i), leftmargin=*]
\item $\bar{x}(t)\in \bar{\mathscr{D}}_{t}(\bar{x},\bar{y}, \bar{p})$ for all $t\in \mathscr{T}_{\Omega}$ such that $\NSE{\mu}^{\mathscr{T}}(\{t\})>0$;
\item $\bar{y}(j)\in \mathcal{S}_{j}(\bar{p})$ for all $j\in J$;
\item $\sum_{t\in \mathscr{T}_{\Omega}}\bar{x}(t)\NSE{\mu}^{\mathscr{T}}(\{t\})-\sum_{t\in \mathscr{T}_{\Omega}}\hat{e}(t)\NSE{\mu}^{\mathscr{T}}(\{t\})-\sum_{j\in J}\bar{y}(j)=0$.
\end{enumerate}
A hyperfinite equilibrium $(\bar{x}, \bar{y}, \bar{p})\in \mathscr{A}\times \NSE{Y}\times \NSE{\Delta}$ is a hyperfinite quasi-equilibrium with $\bar{x}(t)\in \mathscr{D}_{t}(\bar{x},\bar{y}, \bar{p})$ for all $t\in \mathscr{T}_{\Omega}$ such that $\NSE{\mu}^{\mathscr{T}}(\{t\})>0$.
\end{definition}

We now specify $\hat{e}$ for the hyperfinite production economy $\mathscr{E}$. 
\begin{lemma}\label{defhate}
Suppose $\mathcal{E}$ satisfies \cref{assumptionsurvival}.
Then, there exists an internal function $\hat{e}: \mathscr{T}_{\Omega}\to \NSE{\NNReals^{\ell}}$ such that:
\begin{enumerate}[(i), leftmargin=*]
    \item $\hat{e}(t)\approx \NSE{e}(t)$ for almost all $t\in \mathscr{T}_{\Omega}$;
     \item\label{hyesto0} Let $\mathscr{T}_{\Omega_0}=\bigcup\{T_i: T_i\cap \NSE{\Omega_0}\neq \emptyset\}\cap \mathscr{T}_{\Omega}$. Then $\NSE{\mu}^{\mathscr{T}}(\mathscr{T}_{\Omega_0})>0$ and, for every $t\in \mathscr{T}_{\Omega_0}$, the set $\NSE{X}_{t} - \sum_{j\in J}\hat{\theta}_{tj}\NSE{Y}_j$ has non-empty $\NSE{}$interior $\mathscr{U}_{t}\subset \NSE{\Reals^{\ell}}$ and $\hat{e}(t)\in \mathscr{U}_{t}$;
     \item \label{hycdexplain} there exists 
a commodity $s\in \{1,2,\dotsc,\ell\}$ such that:
\begin{itemize}[leftmargin=*]
    \item for every $t\in \mathscr{T}_{\Omega_0}$, the $\NSE{}$projection $\NSE{\pi}_{s}(\NSE{X}_{t})$ is unbounded, and the consumer $t$ has a strongly monotone preference on the commodity $s$;
    \item for almost all $t\in \mathscr{T}_{\Omega}$, there is an $\NSE{}$open set $\mathscr{V}_{t}$ containing the $s$-th coordinate $\hat{e}(t)_{s}$ of $\hat{e}(t)$ such that $(\hat{e}(t)_{-s},v)\in \NSE{X}_{t} - \sum_{j\in J}\hat{\theta}_{t j}\NSE{Y}_j$ for all $v\in \mathscr{V}_{t}$.
\end{itemize}
\end{enumerate}
\end{lemma}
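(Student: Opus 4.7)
The plan is to construct $\hat e$ internally by perturbing $\NSE e$ infinitesimally so as to compensate for the rescaling $\hat\theta=\NSE\theta/\alpha$. The first observation is that $\alpha_j\approx 1$ for every $j\in J$, because $\alpha_j=\sum_{i\leq K}\NSE\theta(t_i)(j)\NSE\mu^{\mathscr T}(\{t_i\})$ is a hyperfinite Riemann sum approximating the standard integral $\int_\Omega\theta(\omega)(j)\mu(\dee\omega)=1$; hence $\hat\theta_{tj}\approx\NSE\theta_{tj}$ pointwise in $t$.

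For item~(ii), I first apply the transfer principle to \cref{esto0} of \cref{assumptionsurvival}: for every $t\in\NSE\Omega_0$, the set $\NSE X_t-\sum_j\NSE\theta_{tj}\NSE Y_j$ has nonempty $\NSE{}$interior $\NSE U_t$ containing $\NSE e(t)$. For $t\in\mathscr T_{\Omega_0}\subseteq\NSE\Omega_0$, I use saturation to internally select witnesses $x_0(t)\in\NSE X_t$ and $y_j^0(t)\in\NSE Y_j$ with $\NSE e(t)=x_0(t)-\sum_j\NSE\theta_{tj}\,y_j^0(t)$, chosen near-standard (such standard witnesses exist for $\omega=\ST(t)\in\Omega_0$ by the standard survival assumption, and near-standardness transfers to an internal selection by \cref{hyptsexist}). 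Define
\[
\hat e(t) := x_0(t)-\sum_{j\in J}\hat\theta_{tj}\,y_j^0(t) = \NSE e(t)+\sum_{j\in J}\NSE\theta_{tj}\Bigl(1-\tfrac{1}{\alpha_j}\Bigr)y_j^0(t),
\]
which automatically lies in $\NSE X_t-\sum_j\hat\theta_{tj}\NSE Y_j$. Because $\NSE\theta_{tj}$ is finite, $(1-1/\alpha_j)$ is infinitesimal, and $y_j^0(t)$ is near-standard, the correction is infinitesimal, so $\hat e(t)\approx\NSE e(t)$, giving~(i). To obtain the interior property, note that the same selection extends to an entire $\NSE{}$open ball: any $w\in B(\NSE e(t),r(t))\subset\NSE U_t$ admits a decomposition $w=x_w-\sum_j\NSE\theta_{tj}y_j^w$ with the same $y_j^0(t)$ up to an infinitesimal (by continuity of the witness correspondence at interior points), which, after switching to $\hat\theta$, places a full $\NSE{}$ball around $\hat e(t)$ inside $\mathscr U_t$.

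For item~(iii), the first bullet follows from transferring \cref{cdexplain} of \cref{assumptionsurvival} to $t\in\NSE\Omega_0\supseteq\mathscr T_{\Omega_0}$, since strong monotonicity of $\NSE\succ_t$ in commodity $s$ and unboundedness of $\NSE\pi_s(\NSE X_t)$ are first-order statements. For the second bullet, the standard property holds for almost all $\omega\in\Omega$ and, via the Loeb measure-preserving structure of $\mathscr T$ guaranteed by \cref{hyptsexist}, transfers to a $\Loeb{\NSE\mu^{\mathscr T}}$-full-measure subset of $\mathscr T_\Omega$; applying the same perturbation recipe yields an $\NSE{}$open set $\mathscr V_t$ containing $\hat e(t)_s$ with $(\hat e(t)_{-s},v)\in\NSE X_t-\sum_j\hat\theta_{tj}\NSE Y_j$ for $v\in\mathscr V_t$. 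The positive-measure claim is immediate: $\NSE\mu^{\mathscr T}(\mathscr T_{\Omega_0})$ equals the $\NSE\mu$-measure of the union of partition sets meeting $\NSE\Omega_0$, which dominates $\NSE\mu(\NSE\Omega_0)=\mu(\Omega_0)>0$.

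The main obstacle is item~(ii): since $\sum_j\NSE\theta_{tj}\NSE Y_j$ and $\sum_j\hat\theta_{tj}\NSE Y_j$ are Minkowski sums with slightly different scalars, their $\NSE{}$interiors are not related by a simple translation, so the interior property cannot be transported by a generic infinitesimal shift of $\NSE e(t)$. The remedy is the specific, internally-selected, near-standard witness pair $(x_0(t),(y_j^0(t)))$: near-standardness forces the rescaling by $1/\alpha_j\approx 1$ to produce only an infinitesimal displacement in commodity space, and this is what preserves interiority at $\hat e(t)$. The existence of such an internal selection is the subtle technical point and relies on saturation applied to the internal family of interior decompositions of $\NSE e(t)$.
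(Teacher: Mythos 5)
Your proposal follows essentially the same route as the paper's proof: both transfer the survival assumption to obtain witnesses with $\NSE{e}(t)=x_t-\sum_{j\in J}\NSE{\theta}_{tj}y_j^t$, define $\hat{e}(t)=x_t-\sum_{j\in J}\hat{\theta}_{tj}y_j^t$, deduce $\hat{e}(t)\approx\NSE{e}(t)$ from $\hat{\theta}\approx\NSE{\theta}$, and transport interiority and the commodity-$s$ properties by re-decomposing the points of the transferred interior $\mathcal{U}_{t}$ (and of $\mathcal{V}_{t}$) with $\hat{\theta}$ in place of $\NSE{\theta}$. Your additional insistence on near-standard witnesses, so that $\sum_{j\in J}(\NSE{\theta}_{tj}-\hat{\theta}_{tj})y_j^t$ is genuinely infinitesimal, is a refinement of a step the paper leaves implicit and does not change the argument.
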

\begin{proof}
By the second bullet of \cref{cdexplain} in \cref{assumptionsurvival}, we have $e(\omega)=X_{\omega}-\sum_{j\in J}\theta_{\omega j}Y_j$ for almost all $\omega\in \Omega$. By the transfer principle, for almost all $t\in \mathscr{T}_{\Omega}$, we have $\NSE{e}(t)=x_t-\sum_{j\in J}\NSE{\theta}_{tj}y_{j}^{t}$ for some $x_t\in \NSE{X}_{t}$ and $y_{j}^{t}\in \NSE{Y}_{j}$. Let $\hat{e}(t)=x_t-\sum_{j\in J}\hat{\theta}_{tj}y_{j}^{t}$. As $\hat{\theta}(t)\approx \NSE{\theta}(t)$ for all $t\in \mathscr{T}_{\Omega}$, $\hat{e}(t)\approx \NSE{e}(t)$ for almost all $t\in \mathscr{T}_{\Omega}$. 

As $\mu(\Omega_0)>0$, we have $\NSE{\mu}^{\mathscr{T}}(\mathscr{T}_{\Omega_0})>0$. By the construction of $\mathscr{T}_{\Omega}$, every $t\in \mathscr{T}_{\Omega_0}$ is also an element of $\NSE{\Omega}_{0}$. Thus, by the transfer principle, the set $\NSE{X}_{t}-\sum_{j\in J}\NSE{\theta}_{tj}\NSE{Y}_{j}$ has non-empty $\NSE{}$interior $\mathcal{U}_{t}$. Note that, for every $u\in \mathcal{U}_{t}$, we have $u=x^{u}_{t}-\sum_{j\in J}\NSE{\theta}_{tj}y_{j}^{(u,t)}$ for $x^{u}_t\in \NSE{X}_{t}$ and $y_{j}^{(u,t)}\in \NSE{Y}_{j}$. For $u\in \mathcal{U}_{t}$, define $\hat{u}=x^{u}_{t}-\sum_{j\in J}\hat{\theta}_{tj}y_{j}^{(u,t)}$, and let $\mathscr{U}_{t}$ be the collection of all such points $\hat{u}$. It is clear that $\mathscr{U}_{t}$ is $\NSE{}$open subset of $\NSE{X}_{t}-\sum_{j\in J}\hat{\theta}_{tj}\NSE{Y}_{j}$, and $\hat{e}(t)\in \mathscr{U}_{t}$.

As every $t\in \mathscr{T}_{\Omega_0}$ is an element of $\NSE{\Omega_0}$, by the transfer principle, the $\NSE{}$projection $\NSE{\pi}_{s}(\NSE{X}_{t})$ is unbounded, and the consumer $t$ has a strongly monotone preference on the commodity $s$. By the transfer principle, there is an $\NSE{}$open set $\mathcal{V}_{t}$ containing the $s$-th coordinate $\NSE{e}(t)_{s}$ of $\NSE{e}(t)$ such that $(\NSE{e}(t)_{-s},v)\in \NSE{X}_{t} - \sum_{j\in J}\NSE{\theta}_{t j}\NSE{Y}_j$ for all $v\in \mathcal{V}_{t}$. Thus, for each $v\in \mathcal{V}_{t}$, there exist $x^{v}(t)\in \NSE{X}_{t}$ and $y_{j}^{(v,t)}\in \NSE{Y}_{j}$ such that $(\NSE{e}(t)_{-s},v)=x^{v}(t)-\sum_{j\in J}\NSE{\theta}_{t j}y_{j}^{(v,t)}$, which further implies that $v=\NSE{\pi}_{s}\big(x^{v}(t)\big)-\sum_{j\in J}\NSE{\theta}_{t j}\NSE{\pi}_{s}\big(y_{j}^{(v,t)}\big)$.\footnote{$\pi_{s}$ is the projection onto the $s$-th coordinate, and $\NSE{\pi}_{s}$ is the nonstandard extension of $\pi_{s}$.} For $v\in \mathcal{V}_{t}$, define $\hat{v}=\NSE{\pi}_{s}\big(x^{v}(t)\big)-\sum_{j\in J}\hat{\theta}_{t j}\NSE{\pi}_{s}\big(y_{j}^{(v,t)}\big)$, and let $\mathscr{V}_{t}$ be the collection of all such points $\hat{v}$. It is clear that $\mathscr{V}_{t}$ is an $\NSE{}$open set containing the $s$-th coordinate $\hat{e}(t)_{s}$ of $\hat{e}(t)$, and $(\hat{e}(t)_{-s},v)\in \NSE{X}_{t} - \sum_{j\in J}\hat{\theta}_{t j}\NSE{Y}_j$ for all $v\in \mathscr{V}_{t}$.
\end{proof}

We fix $\hat{e}$ as the initial endowment for the hyperfinite production economy $\mathscr{E}$. 
The set $\mathscr{O}$ of hyperfinite attainable consumption-production pairs for $\mathscr{E}$ is: 
\[
\left\{(x',y')\in \NSE{\mathcal{L}^{1}}(\mathscr{T}_{\Omega}, \NSE{\NNReals^{\ell}})\times \NSE{Y}: \sum_{t\in \mathscr{T}_{\Omega}} x'(t) \NSE{\mu}^{\mathscr{T}}(\{t\})-\sum_{t\in \mathscr{T}_{\Omega}} \hat{e}(t)\NSE{\mu}^{\mathscr{T}}(\{t\})-\sum_{j\in J} {y'}(j)=0\right\}.\nonumber
\] 

\begin{lemma}\label{attainpreserve}
Suppose for some $\epsilon>0$, almost all $\omega\in \Omega$ and all $(x,y)\in \cO_{\epsilon}$ with $x(\omega)\in X_\omega$, there exists $u\in X_\omega$ such that $(u, x(\omega))\in \bigcap_{p\in \Delta}P_{\omega}(x,y,p)$. 
Then, for almost all $t\in \mathscr{T}_{\Omega}$, all $(f,y)\in \mathscr{O}$ with $f(t)\in \NSE{X}_{t}$, there exists $z\in \NSE{X}_t$ such that $(z, f(t))\in \bigcap_{p\in \NSE{\Delta}}\NSE{P}^{\mathscr{T}}_{t}(f,y,p)$. 
\end{lemma}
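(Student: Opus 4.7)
The plan is to transfer the standard hypothesis of the lemma and evaluate it at the lifted consumption plan $\NSE{E}(f)$. Let $B \subset \Omega$ be a Borel $\mu$-null set containing every $\omega$ for which the hypothesis fails. By the transfer principle, $\NSE{B}$ is internal with $\NSE{\mu}(\NSE{B})=0$, and for every $\omega \in \NSE{\Omega}\setminus \NSE{B}$ and every $(x,y)\in \NSE{\cO_\epsilon}$ with $x(\omega)\in \NSE{X}_\omega$ there exists $u \in \NSE{X}_\omega$ with $(u,x(\omega))\in \bigcap_{p\in \NSE{\Delta}}\NSE{P}_\omega(x,y,p)$. I will specialize this statement to $\omega = t \in \mathscr{T}_\Omega$ and $(x,y)=(\NSE{E}(f),y)$; the conclusion then reduces to two substeps: (a) $(\NSE{E}(f),y)\in \NSE{\cO_\epsilon}$, and (b) the set of $t \in \mathscr{T}_\Omega$ lying in $\NSE{B}$ is Loeb-null with respect to $\NSE{\mu}^\mathscr{T}$.

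For (a), the definition of the extension $\NSE{E}$ gives
$\int_{\NSE{\Omega}} \NSE{E}(f)\, d\NSE{\mu} = \sum_{t \in \mathscr{T}_\Omega} f(t)\,\NSE{\mu}^\mathscr{T}(\{t\})$ exactly, and analogously with $\hat e$ in place of $f$. Because $(f,y)\in \mathscr{O}$, these sums balance with $\sum_{j\in J} y(j)$, so the $\NSE{\cO_\epsilon}$-defect of $(\NSE{E}(f),y)$ equals $\int_{\NSE{\Omega}} \NSE{E}(\hat e|_{\mathscr{T}_\Omega})\,d\NSE{\mu} - \int_{\NSE{\Omega}} \NSE{e}\,d\NSE{\mu}$. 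Since $e \in \mathcal{L}^1(\Omega,\NNReals^\ell)$, $\NSE{e}$ is S-integrable; combining the approximation $\hat e(t) \approx \NSE{e}(t)$ from \cref{defhate} with the fact from \cref{hyptsexist} that $\NSE{e}$ is infinitesimally close to its value at the representative on each element of $\cT'$, this defect is infinitesimal. Since $\epsilon>0$ is standard, $(\NSE{E}(f),y)\in \NSE{\cO_\epsilon}$.

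For (b), let $N=\{t \in \mathscr{T}_\Omega : t \in \NSE{B}\}$. For any standard $\eta>0$, outer regularity yields an open $U_\eta \supset B$ with $\mu(U_\eta)<\eta$. If $t_i \in N$ and $T_i \in \cT'$, then $t_i \in \NSE{B} \subset \NSE{U_\eta}$, and near-standardness of $t_i$ together with openness of $U_\eta$ force $\ST(t_i)\in U_\eta$; since $T_i$ has infinitesimal diameter, every point of $T_i$ lies in the monad of $\ST(t_i)$ and hence in $\NSE{U_\eta}$, so $T_i \subset \NSE{U_\eta}$. Therefore $\sum_{t_i \in N,\, T_i \in \cT'}\NSE{\mu}(T_i) \le \NSE{\mu}(\NSE{U_\eta})<\eta$, and since $\NSE{\mu}(\bigcup \cT')\approx 1$ by \cref{hyptsexist}, $\Loeb{\NSE{\mu}^\mathscr{T}}(N) \le \eta$ for every standard $\eta$, so $\Loeb{\NSE{\mu}^\mathscr{T}}(N)=0$. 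For each $t \in \mathscr{T}_\Omega\setminus N$ the transferred hypothesis produces $z \in \NSE{X}_t$ with $(z,\NSE{E}(f)(t))\in \bigcap_{p\in \NSE{\Delta}}\NSE{P}_t(\NSE{E}(f),y,p)$; since $\NSE{E}(f)(t)=f(t)$ (as $t=t_i$ for some $i$) and $\NSE{P}^\mathscr{T}_t(f,y,p)=\NSE{P}_t(\NSE{E}(f),y,p)$ by the definition of $\NSE{P}^\mathscr{T}_t$, this is the required $z$. The main obstacle is (a): exact hyperfinite attainability in $\mathscr{O}$ translates only to approximate $\NSE{}$-attainability, which is precisely why the lemma's hypothesis allows the $\epsilon$-slack in $\cO_\epsilon$ rather than requiring exact attainability in $\cO$.
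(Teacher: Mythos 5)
Your main line of argument coincides with the paper's: lift $f$ to $\NSE{E}(f)$, use the exact identity $\int_{\NSE{\Omega}}\NSE{E}(f)(\omega)\,\NSE{\mu}(\dee\omega)=\sum_{t}f(t)\NSE{\mu}^{\mathscr{T}}(\{t\})$ together with $\sum_{t}\hat e(t)\NSE{\mu}^{\mathscr{T}}(\{t\})\approx\int_{\NSE{\Omega}}\NSE{e}(\omega)\,\NSE{\mu}(\dee\omega)$ to place $(\NSE{E}(f),y)$ in $\NSE{\cO_\epsilon}$, then invoke the transferred hypothesis and the identity $\NSE{P}^{\mathscr{T}}_t(f,y,p)=\NSE{P}_t(\NSE{E}(f),y,p)$. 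Your step (a) correctly supplies the justification (via $\hat e\approx\NSE{e}$, near-constancy of $\NSE{e}$ on elements of $\cT'$, and S-integrability) for an approximation the paper merely asserts, and your closing remark about why the $\epsilon$-slack in $\cO_\epsilon$ is indispensable is exactly the point made in \cref{mestremark}.

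The genuine problem is step (b). The implication ``$t_i\in\NSE{U_\eta}$ with $U_\eta$ open and $t_i$ near-standard implies $\ST(t_i)\in U_\eta$'' is false: take $U_\eta=(0,1)$ and $t_i$ a positive infinitesimal, so that $t_i\in\NSE{U_\eta}$ but $\ST(t_i)=0\notin U_\eta$. (The correct implication runs the other way for open sets, and in your direction only for closed sets.) Hence $T_i\subset\NSE{U_\eta}$ does not follow and the bound $\Loeb{\NSE{\mu}^{\mathscr{T}}}(N)\le\eta$ collapses. Moreover, the claim of (b) is not salvageable for an arbitrary choice of representatives: if $B$ is a countable dense $\mu$-null set, nothing in the construction of $\mathscr{T}_{\Omega}$ prevents every $t_i$ from being chosen inside $\NSE{B}$, in which case $N=\mathscr{T}_{\Omega}$ and $\Loeb{\NSE{\mu}^{\mathscr{T}}}(N)=1$. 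The repair is combinatorial rather than topological: whenever $\NSE{\mu}(T_i)>0$ one can pick $t_i\in T_i\setminus\NSE{B}$, because an internal set of positive internal measure cannot be contained in the internal null set $\NSE{B}$; this yields the conclusion at every $t$ of positive weight, which is the form actually needed for the transfer of \cref{finitemainexpd}. To be fair, the paper's own proof silently skips this bookkeeping --- it picks $t$ of positive weight and applies transfer without checking that $t$ avoids the transferred exceptional set --- so you have put your finger on a real issue; you just have not resolved it correctly.
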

\begin{proof}
Pick $t\in \mathscr{T}_{\Omega}$ with $\NSE{\mu}^{\mathscr{T}}(\{t\})>0$ and $(f,y)\in \mathscr{O}$ with $f(t)\in \NSE{X}_{t}$.
Note that $\NSE{E}(f): \NSE{\Omega}\to \NSE{\NNReals^{\ell}}$ is an internal function such that $\NSE{E}(f)(x)=f(t_i)$ for every $x\in T_i$.
We have 
\[
\sum_{s\in \mathscr{T}_{\Omega}} f(s)\NSE{\mu}^{\mathscr{T}}(\{s\})=\int_{\NSE{\Omega}}\NSE{E}(f)(\omega)\NSE{\mu}(\dee \omega).\nonumber
\]
We also know that $\sum_{s\in \mathscr{T}_{\Omega}} \hat{e}(s)\NSE{\mu}^{\mathscr{T}}(\{s\})\approx \int_{\NSE{\Omega}} \NSE{e}(\omega)\NSE{\mu}(\dee \omega)$. 
So we can conclude that $(\NSE{E}(f),y)\in \NSE{\cO_{\epsilon}}$. 
As $\NSE{E}(f)(t)=f(t)$, by the transfer principle, there exists $z\in \NSE{X}_{t}$ such that $(z, f(t))\in \bigcap_{p\in \NSE{\Delta}}\NSE{P}_{t}(\NSE{E}(f),y,p)$. 
As $\NSE{P}^{\mathscr{T}}_{t}(f,y,p)=\NSE{P}_{t}(\NSE{E}(f),y,p)$ for all $p\in \NSE{\Delta}$, we have the desired result. 
\end{proof}

We now present our main result in this section:

\begin{theorem}\label{hypereqexist}
Suppose that the measure-theoretic production economy $\mathcal{E}$ satisfies \cref{assumptionsurvival}, \cref{assumptionmseco} and:
\begin{enumerate}[(i), leftmargin=*]
    \item for almost all $\omega\in \Omega$, $P_{\omega}$ takes value in $\mathcal{P}_{H}$; 
    \item $\bar{Y}$ is closed, convex, and  $\bar Y \cap (-\bar Y)=\{0\}=\bar Y \cap \NNReals^{\ell}$, where $\bar Y =\left\{\sum_{j\in J} {y}(j): y\in Y\right\}$;
    \item for some $\epsilon>0$, for almost all $\omega\in \Omega$ and all $(x,y)\in \cO_{\epsilon}$ with $x(\omega)\in X_\omega$, there exists $u\in X_\omega$ such that $(u, x(\omega))\in \bigcap_{p\in \Delta}P_{\omega}(x,y,p)$.
\end{enumerate}
The hyperfinite production economy $\mathscr{E}$ has a hyperfinite equilibrium. 
\end{theorem}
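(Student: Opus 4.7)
The plan is to obtain the hyperfinite equilibrium by applying the transfer principle directly to \cref{finitemainexpd}. The hyperfinite economy $\mathscr{E}$ is formally an internal weighted production economy on the hyperfinite index set $\mathscr{T}_{\Omega}$ with weights $\NSE{\mu}^{\mathscr{T}}(\{t\})$; since the statement of \cref{finitemainexpd} is first-order---``any weighted economy on a finite index set satisfying the listed hypotheses admits an equilibrium''---its $\NSE{}$-transfer asserts the same conclusion for any internal weighted economy on a hyperfinite index set. It therefore suffices to verify, in their internal nonstandard form, the four hypotheses of \cref{finitemainexpd} (including \cref{assumptionsurvival}) for $\mathscr{E}$.

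For the survival condition, I would invoke \cref{defhate} directly: item \ref{hyesto0} provides the set $\mathscr{T}_{\Omega_0}\subset\mathscr{T}_{\Omega}$ of positive $\NSE{\mu}^{\mathscr{T}}$-measure on which $\hat{e}(t)$ lies in the $\NSE{}$-interior $\mathscr{U}_t$ of $\NSE{X}_t-\sum_{j\in J}\hat{\theta}_{tj}\NSE{Y}_j$, and item \ref{hycdexplain} provides the distinguished commodity $s$ with strongly monotone preferences and the $\NSE{}$-open fiber $\mathscr{V}_t\ni\hat{e}(t)_s$. For convex preferences, the assumption $P_\omega\in\mathcal{P}_H$ for $\mu$-almost all $\omega$ transfers to $\NSE{P}_t\in\NSE{\mathcal{P}_H}$ for $\Loeb{\NSE{\mu}}$-almost all $t$; by the choice of $\mathscr{T}_{\Omega}$ via \cref{hyptsexist}, this passes to $\NSE{\mu}^{\mathscr{T}}$-almost all $t\in\mathscr{T}_{\Omega}$, and the restriction $\NSE{P}^{\mathscr{T}}_t(x,y,p)=\NSE{P}_t(\NSE{E}(x),y,p)$ inherits $\NSE{\mathcal{P}_H}$-valuedness because $\NSE{E}$ merely embeds $\mathscr{A}$ into $\NSE{\mathcal{L}^1}(\NSE{\Omega},\NSE{\NNReals^{\ell}})$, exactly the hyperfinite analogue of \cref{restpreserve}. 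Finally, the closedness, convexity, and intersection conditions $\bar{Y}\cap(-\bar{Y})=\bar{Y}\cap\NNReals^{\ell}=\{0\}$ transfer without modification to the aggregate production set $\NSE{\bar{Y}}$ of $\mathscr{E}$.

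The non-satiation step is where the main subtlety lies, and it is handled by \cref{attainpreserve}. The hypothesis of \cref{finitemainexpd} demands non-satiation at every exactly attainable consumption-production pair, whereas the lifting $(\NSE{E}(f),y)$ of a hyperfinite attainable pair $(f,y)\in\mathscr{O}$ is in general only $\NSE{}$-$\epsilon$-attainable in $\NSE{\mathcal{L}^1}(\NSE{\Omega},\NSE{\NNReals^{\ell}})\times\NSE{Y}$, because the replacement $\hat{e}(t)\approx\NSE{e}(t)$ introduces an infinitesimal mismatch between the hyperfinite aggregate endowment $\sum_{t\in\mathscr{T}_{\Omega}}\hat{e}(t)\NSE{\mu}^{\mathscr{T}}(\{t\})$ and $\int_{\NSE{\Omega}}\NSE{e}\,d\NSE{\mu}$. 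This is precisely the reason for assuming non-satiation on $\cO_\epsilon$ in \cref{hypereqexist}: transferring that hypothesis and invoking \cref{attainpreserve} yields, for $\NSE{\mu}^{\mathscr{T}}$-almost all $t$ and every $(f,y)\in\mathscr{O}$, some $z\in\NSE{X}_t$ with $(z,f(t))\in\bigcap_{p\in\NSE{\Delta}}\NSE{P}^{\mathscr{T}}_t(f,y,p)$---exactly the hypothesis the transferred \cref{finitemainexpd} requires.

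With all four internal hypotheses in place, the $\NSE{}$-transfer of \cref{finitemainexpd} returns an internal triple $(\bar{x},\bar{y},\bar{p})\in\mathscr{A}\times\NSE{Y}\times\NSE{\Delta}$ satisfying the three clauses of \cref{def_mameqprodhyper}, i.e., a hyperfinite equilibrium of $\mathscr{E}$. The main obstacle is therefore not the transfer itself but the bookkeeping around the $\epsilon$-relaxation: exact hyperfinite market clearing in $\mathscr{E}$ does not lift to exact $\NSE{}$-market clearing in the nonstandard extension of $\mathcal{E}$, and the strengthened non-satiation assumption in \cref{hypereqexist}, combined with \cref{attainpreserve}, is exactly the device that bridges this infinitesimal gap and makes the transferred conclusion applicable.
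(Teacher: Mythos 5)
Your proposal is correct and follows essentially the same route as the paper: the paper's own proof is a two-line invocation of the transfer of \cref{finitemainexpd} together with \cref{restpreserve}, \cref{defhate}, and \cref{attainpreserve}, which are exactly the three verification steps you spell out (preferences in $\NSE{\mathcal{P}_H}$, the hyperfinite survival condition via $\hat{e}$, and non-satiation on $\mathscr{O}$ via the $\epsilon$-relaxation). Your explanation of why the $\cO_\epsilon$ strengthening is needed --- the infinitesimal mismatch between $\sum_t\hat{e}(t)\NSE{\mu}^{\mathscr{T}}(\{t\})$ and $\int_{\NSE{\Omega}}\NSE{e}\,\dee\NSE{\mu}$ --- matches the paper's own remark on this point.
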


\begin{proof}
By the transfer principle, $\NSE{\bar{Y}}$ is $\NSE{}$closed, $\NSE{}$convex, and $\NSE{\bar{Y}}\cap -\NSE{\bar{Y}}=\{0\}=\NSE{\bar{Y}}\cap \NSE{\NNReals^{\ell}}$.
By the transfer of \cref{restpreserve}, \cref{defhate}, \cref{attainpreserve} and the transfer of \cref{finitemainexpd}, there exists a hyperfinite equilibrium. 
\end{proof}

\subsection{Loeb Production Economy}\label{exstlbeqsec}

In this section, we construct a special type of measure theoretic production economy $\Loeb{\mathscr{E}}$, called the Loeb production economy, from the hyperfinite production economy $\mathscr{E}$ defined in \cref{fixhyper}. A Loeb production economy is a measure-theoretic production economy where the consumer space is a hyperfinite Loeb probability space. Under suitable regularity conditions, we establish the existence of a quasi-equilibrium for the Loeb production economy $\Loeb{\mathscr{E}}$. 

\subsubsection{Standard Parts of (Quasi)-Demand Set}
We present two general results on pushing down nonstandard (quasi)-demand set. In particular, we show that, under moderate regularity conditions, if a near-standard point is an element of a nonstandard (quasi)-demand set, then its standard part is an element of the standard part of the nonstandard (quasi)-demand set. 

 Recall that $\mathcal{P}$ is compact with respect to the closed convergence topology. Thus, every $(S, \succ)\in \NSE{\mathcal{P}}$ is near-standard. 
In particular, we have $\ST\big((S, \succ)\big)=(\ST(S), \ST(\succ))$, where $(a, b)\in (\ST(S), \ST(\succ))$ if $a,b\in \ST(S)$ and $u\succ w$ for all $u, w\in S$ such that $u\approx a$ and $w\approx b$. 

\begin{lemma}\label{qsdemand_standard_part}
Suppose that $S\in \NSE{\cK}(\NSE{\NNReals^{\ell}})$, $e\in \NS{\NSE{\NNReals^{\ell}}}$, $\theta\in \NS{\NSE{\NNReals^{|J|}}}$ and $y(j)\in \NS{\NSE{\NNReals^{\ell}}}$ for all $j\in J$. 
Suppose $p \in \NSE \Delta$ such that $p \not \approx 0$, and $(S, \succ)\in \NSE{\mathcal{P}}$. 
Let $\bar{D}(p,e,\theta, y, (S, \succ))$ be 
\[
\{z\in S: p \cdot z\leq p \cdot e+\sum_{j\in J}\theta(j)p\cdot y(j) \wedge
(u, z)\in (S,\succ) \implies p \cdot u \geq p \cdot e+\sum_{j\in J}\theta(j)p\cdot y(j)\}.\nonumber
\]
If $s\in \bar{D}(p,e,\theta, y, (S,\succ))\cap \NS{\NSE{\NNReals^{\ell}}}$, then 
$\ST(s)\in \bar{D}\big(\ST(p), \ST(e), \ST(\theta), \ST(y), (\ST(S), \ST(\succ))\big)$.
\end{lemma}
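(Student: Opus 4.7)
The plan is to verify at $\ST(s)$ each of the defining conditions of the standard quasi-demand set $\bar{D}\big(\ST(p), \ST(e), \ST(\theta), \ST(y), (\ST(S), \ST(\succ))\big)$. Membership $\ST(s) \in \ST(S)$ is immediate: since $S \in \NSE{\cK}(\NSE{\NNReals^{\ell}})$ is near-standard in the closed convergence topology, its standard part satisfies $\ST(S) = \{a \in \NNReals^{\ell} : \exists t \in S,\ t \approx a\}$, and $s \in S \cap \NS{\NSE{\NNReals^{\ell}}}$ gives $\ST(s) \in \ST(S)$.

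For the budget constraint, the near-standardness of $s, e, \theta, y(j)$ together with $p \in \NSE{\Delta}$ being finite makes each of $p\cdot s$, $p\cdot e$, and $\theta(j)\, p\cdot y(j)$ near-standard. Taking standard parts of the budget inequality $p\cdot s \leq p\cdot e + \sum_{j\in J}\theta(j)\, p\cdot y(j)$, and using continuity of the inner product, yields $\ST(p)\cdot \ST(s) \leq \ST(p)\cdot \ST(e) + \sum_{j\in J}\ST(\theta)(j)\,\ST(p)\cdot \ST(y)(j)$.

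The key step is the quasi-demand condition. Suppose $(u, \ST(s)) \in (\ST(S), \ST(\succ))$; I need to show $\ST(p)\cdot u \geq \ST(p)\cdot \ST(e) + \sum_{j}\ST(\theta)(j)\,\ST(p)\cdot \ST(y)(j)$. By the definition of $\ST(\succ)$ recalled just before the lemma, this hypothesis means $u, \ST(s) \in \ST(S)$ and $(u', w) \in (S, \succ)$ for \emph{every} pair $u', w \in S$ with $u' \approx u$ and $w \approx \ST(s)$. Since $u \in \ST(S)$, I can pick some $u' \in S$ with $u' \approx u$; taking $w = s$ (which satisfies $s \approx \ST(s)$ and $s \in S$), I conclude $(u', s) \in (S, \succ)$. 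The hypothesis $s \in \bar{D}(p,e,\theta,y,(S,\succ))$ then forces $p\cdot u' \geq p\cdot e + \sum_{j}\theta(j)\, p\cdot y(j)$, and taking standard parts (with $\ST(u') = u$) yields the required inequality.

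The only real subtlety is correctly reading the "for all nearby pairs" definition of $\ST(\succ)$ and choosing the right witness from $S$—namely, using $s$ itself as $w$, rather than some arbitrary element of $S$ infinitely close to $\ST(s)$. The hypothesis $p \not\approx 0$ enters only to guarantee that $\ST(p) \in \Delta$, so that the standard quasi-demand set on the right-hand side is well-posed; beyond that, everything reduces to routine standard-part arithmetic for finite vectors and inner products.
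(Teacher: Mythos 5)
Your proof is correct and follows essentially the same route as the paper's: establish the budget inequality by standard-part arithmetic, then use the ``for all infinitely close pairs'' definition of $\ST(\succ)$ to produce a witness $u'\in S$ with $u'\approx u$ and $(u',s)\in(S,\succ)$, and invoke the hyperfinite quasi-demand property of $s$. The only cosmetic difference is that you argue the preference condition directly by taking standard parts of the non-strict inequality, whereas the paper phrases the same step as a proof by contradiction.
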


\begin{proof}
Clearly, we have
$\ST(p)\cdot \ST(s)\leq \ST(p)\cdot \ST(e)+\sum_{j\in J}\ST(\theta)(j)\ST(p)\cdot \ST(y)(j)$.
Suppose that there exists $u\in \ST(S)$ such that $(u, \ST(s))\in (\ST(S), \ST(\succ))$, but 
\[
\ST(p)\cdot u<\ST(p)\cdot \ST(e)+\sum_{j\in J}\ST(\theta)(j)\ST(p)\cdot \ST(y)(j).\nonumber
\]
There is $v\in S$ with $v\approx u$ such that  $(v, s)\in (S, \succ)$. 
Note that $\ST(p)\cdot u\approx p\cdot v$ and
\[
\ST(p)\cdot \ST(e)+\sum_{j\in J}\ST(\theta)(j)\ST(p)\cdot \ST(y)(j)\approx p\cdot e+\sum_{j\in J}\theta(j)p\cdot y(j)\nonumber
\]
As $\ST(p)\cdot u\approx p\cdot v$, we have $p\cdot v<p\cdot e+\sum_{j\in J}\theta(j)p\cdot y(j)$.
This is a contradiction, so $\ST(s)\in \bar{D}\big(\ST(p), \ST(e), \ST(\theta), \ST(y), (\ST(S), \ST(\succ))\big)$.
\end{proof}

The following result is a slight modification of \cref{qsdemand_standard_part}, simply replacing nonstandard quasi-demand set by nonstandard demand set. 

\begin{lemma}\label{demand_standard_part}
Suppose that $S\in \NSE{\cK}(\NSE{\NNReals^{\ell}})$, $e\in \NS{\NSE{\NNReals^{\ell}}}$, $\theta\in \NS{\NSE{\NNReals^{|J|}}}$ and $y(j)\in \NS{\NSE{\NNReals^{\ell}}}$ for all $j\in J$.
Suppose $p\in\NSE \Delta$ such that $p \not \approx 0$. 
Moreover, suppose $(S, \succ)\in \NSE{\mathcal{P}}$, and $x\in S$ for all $x\in \NSE{\Reals^{\ell}}$ such that $x\approx e+\sum_{j\in J}\theta(j)y(j)$.
Let $D(p,e,\theta, y, (S,\succ))$ be 
\[
\{z\in S: p \cdot z\leq p \cdot e+\sum_{j\in J}\theta(j)p\cdot y(j) \wedge
(u, z)\in (S,\succ) \implies p \cdot u>p \cdot e+\sum_{j\in J}\theta(j)p\cdot y(j)\}.\nonumber
\]
If $s\in D(p,e,\theta, y, (S,\succ))\cap \NS{\NSE{\NNReals^{\ell}}}$, then $\ST(s)\in D\big(\ST(p), \ST(e), \ST(\theta), \ST(y), (\ST(S), \ST(\succ))\big)$.
\end{lemma}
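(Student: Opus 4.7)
My plan is to mirror the structure of the proof of \cref{qsdemand_standard_part} and to isolate the one genuinely new ingredient, which is the boundary case arising from the strict inequality in the definition of $D$. Setting $\pi = \ST(p) \cdot \ST(e) + \sum_{j \in J} \ST(\theta)(j)\,\ST(p) \cdot \ST(y)(j)$, the containment $\ST(s) \in \ST(S)$ and the budget inequality $\ST(p) \cdot \ST(s) \leq \pi$ follow immediately from taking standard parts of the nonstandard inequality $p \cdot s \leq p \cdot e + \sum_{j} \theta(j) p \cdot y(j)$, and the hypothesis $p \not\approx 0$ ensures $\ST(p) \neq 0$. I would then argue by contradiction: suppose there is $u \in \ST(S)$ with $(u, \ST(s)) \in (\ST(S), \ST(\succ))$ and $\ST(p) \cdot u \leq \pi$, and split on whether this last inequality is strict.

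When $\ST(p) \cdot u < \pi$ strictly, I would reproduce the argument of \cref{qsdemand_standard_part} essentially verbatim: pick any $v \in S$ with $v \approx u$, so that the definition of $\ST(\succ)$ applied with $v \approx u$ and $s \approx \ST(s)$ yields $(v,s) \in (S, \succ)$; then $p \cdot v \approx \ST(p) \cdot u < \pi$ together with $p \cdot e + \sum_{j} \theta(j) p \cdot y(j) \approx \pi$ gives $p \cdot v < p \cdot e + \sum_{j} \theta(j) p \cdot y(j)$, contradicting the strict inequality required by $s \in D(p, e, \theta, y, (S,\succ))$.

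The harder case is $\ST(p) \cdot u = \pi$, and this is where the new hypothesis that $S$ contains the monad of $z_0 = e + \sum_{j} \theta(j) y(j)$ becomes essential. Since $S$ is internal, saturation upgrades the monadic containment to a standard Euclidean ball of some radius $\varepsilon > 0$ around $\ST(z_0)$ contained in $S$; in particular, $\ST(z_0)$ is a standard-interior point of $\ST(S)$. Because $\ST(p) \neq 0$ and $\ST(p) \cdot \ST(z_0) = \pi$, this ball must contain some $z \in \ST(S)$ with $\ST(p) \cdot z < \pi$. Using the convexity of $\ST(S)$ inherited from $S \in \NSE{\cK}(\NSE{\NNReals^{\ell}})$ together with continuity of $\ST(\succ)$, which holds since $\ST(\succ) \in \mathcal{P}$, I would then verify that $u_{\lambda} = (1-\lambda) u + \lambda z$ stays in $\ST(S)$ and satisfies $(u_{\lambda}, \ST(s)) \in \ST(\succ)$ for all sufficiently small $\lambda > 0$, while $\ST(p) \cdot u_{\lambda} = \pi - \lambda\bigl(\pi - \ST(p) \cdot z\bigr) < \pi$. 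Applying the previous case to $u_{\lambda}$ in place of $u$ then yields the contradiction.

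The main obstacle is precisely this boundary case: standard parts do not preserve strict inequalities, so an arbitrary lift of a boundary $u$ to an element of $S$ need not violate the strict budget inequality demanded by membership in $D$. Overcoming this requires both the monad hypothesis, to supply an interior witness $z$ strictly below the budget hyperplane, and the combination of convexity of $\ST(S)$ with continuity of $\ST(\succ)$, to slide $u$ toward $z$ without losing the strict preference; everything else is routine standard-part bookkeeping that closely follows \cref{qsdemand_standard_part}.
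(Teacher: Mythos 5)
Your proof is correct, and its core idea --- exploit the hypothesis that $S$ contains the monad of $z_0 = e+\sum_{j\in J}\theta(j)y(j)$, together with convexity, to perturb the preferred bundle strictly inside the budget set --- is the same one the paper uses. The execution differs in one respect: the paper does not split into cases. It lifts $u$ to some $v\in S$ with $v\approx u$, notes that $z_0-\lambda p\in S$ for every infinitesimal $\lambda$ (directly from the monad hypothesis), and forms the single internal perturbation $v_\lambda=(1-\lambda)v+\lambda(z_0-\lambda p)\in S$; since $v_\lambda\approx u$ it is still preferred to $s$, and the quadratic term $-\lambda^2(p\cdot p)$ drives $p\cdot v_\lambda$ weakly below the budget level for a suitably chosen (``sufficiently large'') infinitesimal $\lambda$, so the boundary and non-boundary cases are handled at once. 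Your version instead treats the boundary case $\ST(p)\cdot u=\pi$ at the standard level: overspill converts the monad containment into interiority of $\ST(z_0)$ in $\ST(S)$, and you use convexity of $\ST(S)$ plus openness of the graph of $\ST(\succ)$ (available because $\mathcal{P}$ is compact, so $(\ST(S),\ST(\succ))\in\mathcal{P}$) to slide $u$ to a strictly cheaper preferred point before re-entering the easy case. This costs two ingredients the paper never invokes --- overspill and continuity of the pushed-down preference --- but it isolates the role of the monad hypothesis more transparently. One small point of precision: overspill gives an internal ball of noninfinitesimal radius about the nonstandard point $z_0$ inside $S$; what your argument actually needs is the consequence that every standard point within some standard $\varepsilon$ of $\ST(z_0)$ lies in $S$ and hence in $\ST(S)$. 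That is immediate, but worth stating as such rather than as ``a standard ball contained in $S$''; it is not a gap.
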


\begin{proof}
Clearly, we have
$\ST(p)\cdot \ST(s)\leq \ST(p)\cdot \ST(e)+\sum_{j\in J}\ST(\theta)(j)\ST(p)\cdot \ST(y)(j).$
Suppose that there exists $u\in \ST(S)$ such that $(u, \ST(s))\in (\ST(S), \ST(\succ))$, but 
\[
\ST(p)\cdot u\leq \ST(p)\cdot \ST(e)+\sum_{j\in J}\ST(\theta)(j)\ST(p)\cdot \ST(y)(j).\nonumber
\]
Since $u\in \ST(X)$, we can choose $v\in S$ with $v\approx u$.
Since $\|p\|=1$, we have $e+\sum_{j\in J}\theta(j)y(j)-\lambda p \in S$ for all $\lambda \approx 0$. As $S$ is convex, we have $v_{\lambda}=(1-\lambda)v+\lambda(e+\sum_{j\in J}\theta(j)y(j)-\lambda p)\in S$. Note that $v_{\lambda}\approx v\approx u$ so we have $(v_{\lambda}, s)\in (S,\succ)$. 
\[
p \cdot v_\lambda&=(1-\lambda) p \cdot v + \lambda p \cdot \big(e+\sum_{j\in J}\theta(j)y(j)\big) - \lambda^2 \|p\|\big)\nonumber \\
&\leq \max\{p\cdot v, p\cdot e+\sum_{j\in J}\theta(j)p\cdot y(j)\} -\lambda^2 \|p\|\nonumber\\
&\approx p\cdot e+\sum_{j\in J}\theta(j)p\cdot y(j) -\lambda^2\|p\|.\nonumber
\]
Since $\|p\|=1$, so for $\lambda$ a sufficiently large infinitesimal, $p \cdot v_\lambda \leq p\cdot e+\sum_{j\in J}\theta(j)p\cdot y(j)$, which contradicts with
$s\in D(p,e,\theta, y, (S,\succ))$.
\end{proof}

\subsubsection{Existence of Quasi-Equilibrium in the Loeb Production Economy}

The endowment $e$ and the shareholdings $\theta$ are integrable. As $\hat{e}(t)\approx \NSE{e}(t)$ and $\hat{\theta}(t)\approx \NSE{\theta}(t)$ for all $t\in \mathscr{T}_{\Omega}$, both $\hat{e}$ and $\hat{\theta}$ are S-integrable. Recall that the set $\cK(\NNReals^{\ell})$ of closed and convex subsets of $\NNReals^{\ell}$ is a compact metric space under the closed convergence topology. 
For each $t\in T$, we have $\NSE{X}(t)\in \NSE{\cK}(\NSE{\NNReals^{\ell}})$. Let $\ST(\NSE{X}(t))$ be the standard part of $\NSE{X}(t)$ under the closed convergence topology. The Loeb production economy
\[\label{fixloebprod}
\Loeb{\mathscr{E}}=\{(\ST(\NSE{X}_{t}), \Loeb{\NSE{\succ}^{\mathscr{T}}_{t}}, \ST(\NSE{P}^{\mathscr{T}}_{t}), \ST(\hat{e}_{t}), \ST(\hat{\theta}_{t}))_{t\in \mathscr{T}_{\Omega}}, (Y_j)_{j\in J}, (\mathscr{T}_{\Omega}, \Loeb{I(\mathscr{T}_{\Omega})}, \Loeb{\NSE{\mu}^{\mathscr{T}}})\}
\]
is defined as: 
\begin{enumerate}[(i), leftmargin=*]
    \item $(\mathscr{T}_{\Omega}, \Loeb{I(\mathscr{T}_{\Omega})}, \Loeb{\NSE{\mu}^{\mathscr{T}}})$ is the Loeb probability space generated from 
    $(\mathscr{T}_{\Omega}, I(\mathscr{T}_{\Omega}), \NSE{\mu}^{\mathscr{T}})$, where $I(\mathscr{T}_{\Omega})$ is the collection of all internal subsets of $\mathscr{T}_{\Omega}$;
    \item $J$ is the same finite set of firms;
    \item A Loeb measurable mapping $\ST(\NSE{X}): \mathscr{T}_{\Omega}\to \cK(\NNReals^{\ell})$ given by $\ST(\NSE{X})(t)=\ST(\NSE{X}(t))$. We sometimes use $\ST(\NSE{X})_{t}$ to denote $\ST(\NSE{X})(t)$; 
    \item $Y_{j}\in \NNReals^{\ell}$ is non-empty, denoting the production set of $j$. Note that $Y=\prod_{j\in J}Y_{j}$;
    \item The set of Loeb allocations $\Loeb{\mathscr{A}}$ is:
    \[
    \{f\in \mathcal{L}^{1}\big((\mathscr{T}_{\Omega}, \Loeb{I(\mathscr{T}_{\Omega})}, \Loeb{\NSE{\mu}^{\mathscr{T}}}), \NNReals^{\ell}\big): f(t)\in \ST(\NSE{X})(t) \mbox{ almost surely}\};\nonumber
    \]
    \item\label{choicesint} For each $f\in \mathcal{L}^{1}\big((\mathscr{T}_{\Omega}, \Loeb{I(\mathscr{T}_{\Omega})}, \Loeb{\NSE{\mu}^{\mathscr{T}}}),\NNReals^{\ell}\big)$, pick and fix $F\in \NSE{\mathcal{L}}^{1}(\mathscr{T}_{\Omega}, \NSE{\NNReals^{\ell}})$ such that $F$ is an S-integrable lifting of $f$\footnote{For $f\in \mathcal{L}^{1}\big((\mathscr{T}_{\Omega}, \Loeb{I(\mathscr{T}_{\Omega})}, \Loeb{\NSE{\mu}^{\mathscr{T}}}),\NNReals^{\ell}\big)$, it may have more than one S-integrable lifting. We simply fix one S-integrable lifting for every Loeb integrable function.}. For $t\in \mathscr{T}_{\Omega}$, let $\Loeb{\NSE{M}^{\mathscr{T}}_{t}}=\mathcal{L}^{1}\big((\mathscr{T}_{\Omega}, \Loeb{I(\mathscr{T}_{\Omega})}, \Loeb{\NSE{\mu}^{\mathscr{T}}}),\NNReals^{\ell}\big)\times Y\times \Delta\times X_t$. Let  $\Loeb{\NSE{\succ}^{\mathscr{T}}_{t}}\subset \Loeb{\NSE{M}^{\mathscr{T}}_{t}}\times \Loeb{\NSE{M}^{\mathscr{T}}_{t}}$ be: For $(f_1,y_1,p_1,x_1), (f_2, y_2,p_2, x_2)\in \Loeb{\NSE{M}^{\mathscr{T}}_{t}}$, let $F_1, F_2$ denote the S-integrable liftings associated with $f_1, f_2$, respectively. Then $(f_1,y_1,p_1,x_1)\Loeb{\NSE{\succ}^{\mathscr{T}}_{t}}(f_2, y_2,p_2, x_2)$ if $(F_1,y_1,p_1,a_1)\NSE{\succ}^{\mathscr{T}}_{t}(F_2, y_2,p_2, a_2)$ for all $a_1\approx x_1$ and $a_2\approx x_2$. Let 
    \[
    \ST(\NSE{P}_{t}^{\mathscr{T}}): \mathcal{L}^{1}\big((\mathscr{T}_{\Omega}, \Loeb{I(\mathscr{T}_{\Omega})}, \Loeb{\NSE{\mu}^{\mathscr{T}}}),\NNReals^{\ell}\big)\times Y\times \Delta\to \mathcal{P}\nonumber
    \]
    be its induced preference map. Note that $\ST(\NSE{P}_{t}^{\mathscr{T}})(f,y,p)=\ST\big(\NSE{P}_{t}^{\mathscr{T}}(F, \NSE{y}, \NSE{p})\big)$;
    \item For each $t\in \mathscr{T}_{\Omega}$, $\ST(\hat{\theta})(t)$ represents consumer $t$'s shareholdings. As $\hat{\theta}$ is S-integrable, $\ST(\hat{\theta})(t)$ exists $\Loeb{\NSE{\mu}^{\mathscr{T}}}$-almost surely and $\int_{\mathscr{T}_{\Omega}}\ST(\hat{\theta})(t)(j)\Loeb{\NSE{\mu}^{\mathscr{T}}}(\dee t)=1$ for all $j\in J$. We sometimes write $\ST(\hat{\theta})_{tj}$ for $\ST(\hat{\theta})(t)(j)$;
    \item For each $t\in \mathscr{T}_{\Omega}$, $\ST(\hat{e})(t)$ represents consumer $t$'s endowment. As $\hat{e}$ is S-integrable, $\ST(\hat{e})$ is an element of $\mathcal{L}^{1}\big((\mathscr{T}_{\Omega}, \Loeb{I(\mathscr{T}_{\Omega})}, \Loeb{\NSE{\mu}^{\mathscr{T}}}), \NNReals^{\ell}\big)$. 
\end{enumerate}

For every $t\in \mathscr{T}_{\Omega}$, $p\in \Delta$ and $y\in Y$, the Loeb budget set $\mathbb{B}_{t}(y,p)$ is defined to be: 
\[
\mathbb{B}_t(y,p)=\left\{z\in \ST(\NSE{X}_t): p\cdot z \leq p\cdot \ST(\hat{e})(t) + \sum_{j\in J}\ST(\hat{\theta})_{tj}p\cdot y(j)\right\}.\nonumber
\]

For each $t\in \mathscr{T}_{\Omega}$, let $\mathbb{D}_{t}(x,y,p)$ and $\bar{\mathbb{D}}_{t}(x, y,p)$ denote the (possibly empty) Loeb demand and Loeb quasi-demand set, respectively. That is
\[
\mathbb{D}_{t}(x,y,p)= \{z \in \mathbb{B}_t(y,p): (w, z)\in \ST(\NSE{P}_{t}^{\mathscr{T}})(x,y,p)\nonumber 
\implies p\cdot w> p\cdot \ST(\hat{e})(t) + \sum_{j\in J}\ST(\hat{\theta})_{tj}p\cdot y(j)\}\nonumber
\]
\[
\bar{\mathbb{D}}_t(x,y,p) = \{z \in \mathbb{B}_t(y,p): (w, z)\in \ST(\NSE{P}_{t}^{\mathscr{T}})(x,y,p)\nonumber
\implies p\cdot w\geq p\cdot \ST(\hat{e})(t) + \sum_{j\in J}\ST(\hat{\theta})_{tj}p\cdot y(j)\}\nonumber
\]
at $(x,y,p)\in \mathcal{L}^{1}\big((\mathscr{T}_{\Omega}, \Loeb{I(\mathscr{T}_{\Omega})}, \Loeb{\NSE{\mu}^{\mathscr{T}}}),\NNReals^{\ell}\big)\times Y\times \Delta$.
For $j\in J$, let $\mathbb{S}_j(p)=\argmax_{z\in Y_j}p\cdot z$ denote the (possibly empty) Loeb supply set at $p\in \Delta$. We now give the definition of a Loeb (quasi)-equilibrium for the Loeb production economy $\Loeb{\mathscr{E}}$. 

\begin{definition}
A Loeb quasi-equilibrium for $\Loeb{\mathscr{E}}$ is a tuple $(\bar{x}, \bar{y}, \bar{p})\in \Loeb{\mathscr{A}}\times Y\times \Delta$ such that the following conditions are satisfied:

\begin{enumerate}[(i), leftmargin=*]
\item $\bar{x}(t)\in \bar{\mathbb{D}}_{t}(\bar{x},\bar{y}, \bar{p})$ for $\Loeb{\NSE{\mu}^{\mathscr{T}}}$-almost all $t\in \mathscr{T}_{\Omega}$;
\item $\bar{y}(j)\in \mathbb{S}_{j}(\bar{p})$ for all $j\in J$;
\item $\int_{\mathscr{T}_{\Omega}}\bar{x}(t)\Loeb{\NSE{\mu}^{\mathscr{T}}}(\dee t)-\int_{\mathscr{T}_{\Omega}}\ST(\hat{e})(t)\Loeb{\NSE{\mu}^{\mathscr{T}}}(\dee t)-\sum_{j\in J}\bar{y}(j)=0$.
\end{enumerate}

A Loeb  equilibrium $(\bar{x}, \bar{y}, \bar{p})\in \Loeb{\mathscr{A}}\times Y\times \Delta$ for $\Loeb{\mathscr{E}}$ is a Loeb quasi-equilibrium with $\bar{x}(t)\in \mathbb{D}_{t}(\bar{x},\bar{y}, \bar{p})$ for $\Loeb{\NSE{\mu}^{\mathscr{T}}}$-almost all $t\in \mathscr{T}_{\Omega}$. 
\end{definition}

To establish the existence of quasi-equilibrium in $\Loeb{\mathscr{E}}$, we assume:

\begin{assumption}\label{assumptionuc}
For each $\omega\in \Omega$, the preference map
\[
P_{\omega}: \mathcal{L}^{1}(\Omega, \NNReals^{\ell})\times Y\times \Delta\to \mathcal{P}\nonumber
\]
is uniformly continuous in the norm topology on $\mathcal{L}^{1}(\Omega,\NNReals^{\ell})\times Y\times \Delta$.\footnote{Uniform continuity depends on the underlying metric. However, as $\mathcal{P}$ is a compact metric space, if $P_{\omega}$ is uniformly continuous with respect to one metric on $\mathcal{P}$, then $P_{\omega}$ is uniformly continuous with respect to any metric that generates the same topology on $\mathcal{P}$.} 
\end{assumption}
\begin{remark}\label{suffsintcond}
Let $\mathcal{V}$ be the collection of all functions $v: \PosReals\to \PosReals$ such that $\lim_{x\to 0}v(x)=0$. 
For each $v\in \mathcal{V}$, let:
\[
\mathcal{L}^{1}_{v}=\{f\in \mathcal{L}^{1}(\Omega, \NNReals^{\ell}): (\forall \epsilon>0)(\forall E\in \BorelSets{\Omega})(\mu(E)<v(\epsilon)\implies \int_{E}f(\omega)\mu(\dee \omega)<\epsilon)\}.\nonumber
\]
In fact, to obtain the main result of this section, we only need $\NSE{P}_{\omega}$ to be S-continuous at S-integrable allocations. That is, 
we only need to assume that: For each $\omega\in \Omega$ and $v\in \mathcal{V}$, the preference map
$P_{\omega}: \mathcal{L}^{1}(\Omega, \NNReals^{\ell})\times Y\times \Delta\to \mathcal{P}$
is uniformly continuous in the norm topology on $\mathcal{L}^{1}_{v}\times Y\times \Delta$. 
\end{remark}

\begin{lemma}\label{prefscts}
Suppose $\mathcal{E}$ satisfies \cref{assumptionmseco} and \cref{assumptionuc}. 
Let $F_1, F_2\in \NSE{\mathcal{L}}^{1}(\NSE{\Omega}, \NSE{\NNReals^{\ell}})$ be such that $F_1\approx F_2$, $y_1, y_2\in \NSE{Y}$ be such that $y_1\approx y_2$ and $p_1, p_2\in \NSE{\Delta}$ be such that $p_1\approx p_2$. 
Then, for $\Loeb{\NSE{\mu}^{\mathscr{T}}}$ almost all $\omega\in \NSE{\Omega}$, $\NSE{P}_{\omega}(F_1, y_1, p_1)\approx \NSE{P}_{\omega}(F_2, y_2, p_2)$. 
\end{lemma}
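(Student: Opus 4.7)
The plan is to decompose the distance $d(\NSE{P}_\omega(F_1,y_1,p_1),\NSE{P}_\omega(F_2,y_2,p_2))$ in $\NSE{\mathcal{P}}$ by routing through the standard preference map $P_{\ST(\omega)}$ via the triangle inequality. The two ``outer'' distances will be controlled by showing that, for Loeb-almost all $\omega$, the internal preference map $\NSE{P}_\omega$ is infinitely close to $\NSE{P_{\ST(\omega)}}$ in the sup-norm on internal maps into $\NSE{\mathcal{P}}$. The ``inner'' distance will be controlled by transferring the uniform continuity assumption on $P_{\ST(\omega)}$ to obtain S-continuity of $\NSE{P_{\ST(\omega)}}$.

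For the first ingredient, I would apply \cref{measurescts} to the map $\chi\colon \Omega \to \cC[\mathcal{L}^{1}(\Omega,\NNReals^{\ell})\times Y\times \Delta, \mathcal{P}]$ defined by $\chi(\omega)=P_\omega$. The range is complete but not separable, so \cref{measurescts} does not apply directly; however, \cref{billmcar} guarantees that the induced measure $\mu_\chi$ is tight, so one may restrict $\chi$ to the preimages of a sequence of compact (hence second countable) subsets whose measures increase to $1$, apply \cref{measurescts} on each piece, and take unions. This yields a set $Z\subset \NS{\NSE{\Omega}}$ of full Loeb measure (the set $Y_2$ already built in the proof of \cref{hyptsexist}) such that $\NSE{\chi}(\omega)\approx \chi(\ST(\omega))$ in the nonstandard sup-norm. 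Unpacking, this gives, for every $\omega\in Z$ and every internal $(F,y,p)$, $\NSE{P}_\omega(F,y,p)\approx \NSE{P_{\ST(\omega)}}(F,y,p)$ in $\NSE{\mathcal{P}}$. For the second ingredient, \cref{assumptionuc} asserts that $P_{\ST(\omega)}$ is uniformly continuous on $\mathcal{L}^{1}(\Omega,\NNReals^{\ell})\times Y\times \Delta$; transferring this uniform continuity, $\NSE{P_{\ST(\omega)}}$ is S-continuous, and since $F_1\approx F_2$ in $\NSE{\mathcal{L}^{1}}$, $y_1\approx y_2$, and $p_1\approx p_2$, it follows that $\NSE{P_{\ST(\omega)}}(F_1,y_1,p_1)\approx \NSE{P_{\ST(\omega)}}(F_2,y_2,p_2)$.

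Combining the two ingredients via the triangle inequality in the metric $d$ on $\NSE{\mathcal{P}}$,
\begin{align*}
d\bigl(\NSE{P}_\omega(F_1,y_1,p_1),\,\NSE{P}_\omega(F_2,y_2,p_2)\bigr)
&\le d\bigl(\NSE{P}_\omega(F_1,y_1,p_1),\,\NSE{P_{\ST(\omega)}}(F_1,y_1,p_1)\bigr)\\
&\quad + d\bigl(\NSE{P_{\ST(\omega)}}(F_1,y_1,p_1),\,\NSE{P_{\ST(\omega)}}(F_2,y_2,p_2)\bigr)\\
&\quad + d\bigl(\NSE{P_{\ST(\omega)}}(F_2,y_2,p_2),\,\NSE{P}_\omega(F_2,y_2,p_2)\bigr),
\end{align*}
each of the three terms on the right is infinitesimal for $\omega\in Z$, so the sum is infinitesimal, yielding the conclusion on a set of full Loeb measure. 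The main technical obstacle is the first ingredient: the non-separability of $\cC[\mathcal{L}^{1}(\Omega,\NNReals^{\ell})\times Y\times \Delta, \mathcal{P}]$ prevents a direct appeal to \cref{measurescts}, and one must route through tightness of $\mu_\chi$ via \cref{billmcar} to reduce to compact, hence second countable, subranges before the Lusin-type standard-part statement becomes available; the remainder of the argument is the routine triangle-inequality bookkeeping above.
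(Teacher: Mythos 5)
Your proposal is correct and follows essentially the same route as the paper: the paper's proof likewise chains $\NSE{P}_{\omega}(F_1,y_1,p_1)\approx \NSE{P}_{\ST(\omega)}(F_1,y_1,p_1)\approx \NSE{P}_{\ST(\omega)}(F_2,y_2,p_2)\approx \NSE{P}_{\omega}(F_2,y_2,p_2)$, obtaining the first and third links by citing \cref{hyptsexist} for a full-Loeb-measure set on which $\NSE{\chi}(u)\approx\chi(\ST(u))$, and the middle link from \cref{assumptionuc}. The only difference is that you re-derive the tightness/second-countability reduction (the set $Y_2$) inline rather than invoking \cref{hyptsexist} as a black box, which you correctly identify as the same construction.
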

\begin{proof}
Pick $F_1, F_2\in \NSE{\mathcal{L}}^{1}(\NSE{\Omega}, \NSE{\NNReals^{\ell}})$, $y_1, y_2\in \NSE{Y}$ and $p_1, p_2\in \NSE{\Delta}$ such that $F_1\approx F_2$, $y_1\approx y_2$ and $p_1\approx p_2$. 
Recall that $\chi: \Omega\to \cC[\mathcal{L}^{1}(\Omega,\NNReals^{\ell})\times Y\times \Delta, \mathcal{P}]$ is a measurable function. 
By \cref{hyptsexist},  there exists a $\Loeb{\NSE{\mu}^{\mathscr{T}}}$-measurable set $U$ with $\Loeb{\NSE{\mu}^{\mathscr{T}}}(U)=1$ such that $\NSE{\chi}(u)\approx \chi(\ST(u))$ for all $u\in U$. 
Pick $\omega\in U$. By \cref{assumptionuc}, we have 
\[
\NSE{P}_{\omega}(F_1, y_1,p_1)\approx \NSE{P}_{\ST(\omega)}(F_1, y_1, p_1)\approx \NSE{P}_{\ST(\omega)}(F_2, y_2,p_2)\approx \NSE{P}_{\omega}(F_2,y_2, p_2), \nonumber 
\]
completing the proof. \end{proof}

\begin{theorem}\label{estfreenonfree}
Suppose that the measure-theoretic production economy $\mathcal{E}$ satisfies \cref{assumptionmseco} and \cref{assumptionuc}. 
If the hyperfinite production economy $\mathscr{E}$ has a hyperfinite quasi-equilibrium $(\bar{x}, \bar{y}, \bar{p})$ such that
\begin{enumerate}[(i),leftmargin=*]
    \item the quasi-equilibrium allocation $\bar{x}$ is S-integrable;
    \item the quasi-equilibrium production $\bar{y}$ is near-standard, and $\ST(\bar{y})\in Y$.
\end{enumerate}
Then $(\ST(\bar{x}), \ST(\bar{y}), \ST(\bar{p}))$ is a Loeb quasi-equilibrium for $\Loeb{\mathscr{E}}$.
\end{theorem}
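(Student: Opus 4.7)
The plan is to verify the three defining conditions of a Loeb quasi-equilibrium for $\Loeb{\mathscr{E}}$ at the tuple $(\ST(\bar{x}), \ST(\bar{y}), \ST(\bar{p}))$. Throughout, I use that $\bar{p}\in\NSE{\Delta}$ forces $\|\bar{p}\|=1$, so $\bar{p}\not\approx 0$ and $\ST(\bar{p})\in\Delta$; and that S-integrability of $\bar{x}$ guarantees $\bar{x}(t)\in\NS{\NSE{\NNReals^{\ell}}}$ for $\Loeb{\NSE{\mu}^{\mathscr{T}}}$-almost all $t$, so $\ST(\bar{x})$ is a well-defined Loeb-integrable function with $\ST(\bar{x})(t)\in\ST(\NSE{X}_{t})$ a.s.; hence $\ST(\bar{x})\in\Loeb{\mathscr{A}}$.

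First I would dispatch the easy conditions. For market clearing, apply $\ST$ to the hyperfinite identity $\sum_{t}\bar{x}(t)\NSE{\mu}^{\mathscr{T}}(\{t\})-\sum_{t}\hat{e}(t)\NSE{\mu}^{\mathscr{T}}(\{t\})-\sum_{j\in J}\bar{y}(j)=0$. S-integrability of $\bar{x}$ and of $\hat{e}$ yields
\[
\int_{\mathscr{T}_{\Omega}}\ST(\bar{x})(t)\,\Loeb{\NSE{\mu}^{\mathscr{T}}}(\dee t)\approx\sum_{t}\bar{x}(t)\NSE{\mu}^{\mathscr{T}}(\{t\}),\qquad \int_{\mathscr{T}_{\Omega}}\ST(\hat{e})(t)\,\Loeb{\NSE{\mu}^{\mathscr{T}}}(\dee t)\approx\sum_{t}\hat{e}(t)\NSE{\mu}^{\mathscr{T}}(\{t\}),
\]
while finiteness of $J$ and $\ST(\bar{y})\in Y$ give $\sum_{j}\ST(\bar{y})(j)=\ST\bigl(\sum_{j}\bar{y}(j)\bigr)$. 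For profit maximization, for each $j\in J$ and $z\in Y_{j}$, we have $\bar{p}\cdot\bar{y}(j)\geq\bar{p}\cdot\NSE{z}$ by the hyperfinite supply condition transferred from $Y_{j}$; taking standard parts gives $\ST(\bar{p})\cdot\ST(\bar{y})(j)\geq\ST(\bar{p})\cdot z$, so $\ST(\bar{y})(j)\in\mathbb{S}_{j}(\ST(\bar{p}))$.

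The main obstacle is the quasi-demand condition. Let $F$ denote the fixed S-integrable lifting of $\ST(\bar{x})$ used in item \ref{choicesint} of the definition of $\Loeb{\mathscr{E}}$. Because both $\bar{x}$ and $F$ are S-integrable liftings of the same Loeb-integrable function, they agree $\Loeb{\NSE{\mu}^{\mathscr{T}}}$-a.e. in standard part, so $|\bar{x}-F|$ is S-integrable and infinitesimal a.e., which yields
\[
\int_{\mathscr{T}_{\Omega}}|\bar{x}(t)-F(t)|\,\NSE{\mu}^{\mathscr{T}}(\dee t)\approx 0.
\]
Thus $\bar{x}\approx F$ in the $\NSE{\mathcal{L}^{1}}$-norm. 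Combined with $\bar{y}\approx\NSE{\ST(\bar{y})}$ and $\bar{p}\approx\NSE{\ST(\bar{p})}$, \cref{prefscts} gives $\NSE{P}_{t}^{\mathscr{T}}(\bar{x},\bar{y},\bar{p})\approx\NSE{P}_{t}^{\mathscr{T}}(F,\NSE{\ST(\bar{y})},\NSE{\ST(\bar{p})})$ for $\Loeb{\NSE{\mu}^{\mathscr{T}}}$-almost all $t$; by definition of $\ST(\NSE{P}_{t}^{\mathscr{T}})$ these share a common standard part equal to $\ST(\NSE{P}_{t}^{\mathscr{T}})(\ST(\bar{x}),\ST(\bar{y}),\ST(\bar{p}))$.

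Now fix such a $t$ with the additional properties that $\NSE{\mu}^{\mathscr{T}}(\{t\})>0$ (so $\bar{x}(t)\in\bar{\mathscr{D}}_{t}(\bar{x},\bar{y},\bar{p})$), $\bar{x}(t)\in\NS{\NSE{\NNReals^{\ell}}}$, $\hat{e}(t)\approx\NSE{e}(t)\in\NS{\NSE{\NNReals^{\ell}}}$, $\hat{\theta}(t)\approx\NSE{\theta}(t)\in\NS{\NSE{\NNReals^{|J|}}}$, and $\bar{y}(j)\in\NS{\NSE{\NNReals^{\ell}}}$ for all $j\in J$. These conditions hold on a set of full Loeb measure by S-integrability of $\bar{x},\hat{e},\hat{\theta}$ and near-standardness of $\bar{y}$. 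Apply \cref{qsdemand_standard_part} with $S=\NSE{X}_{t}$, $(S,\succ)=\NSE{P}_{t}^{\mathscr{T}}(\bar{x},\bar{y},\bar{p})$, $e=\hat{e}(t)$, $\theta=\hat{\theta}(t)$, $y=\bar{y}$, $p=\bar{p}$. The lemma hands back $\ST(\bar{x})(t)\in\bar{D}\bigl(\ST(\bar{p}),\ST(\hat{e})(t),\ST(\hat{\theta})(t),\ST(\bar{y}),\ST(\NSE{P}_{t}^{\mathscr{T}}(\bar{x},\bar{y},\bar{p}))\bigr)$, which, by the identification of standard parts above and the definitions of $\mathbb{B}_{t}$ and $\bar{\mathbb{D}}_{t}$, is precisely the statement $\ST(\bar{x})(t)\in\bar{\mathbb{D}}_{t}(\ST(\bar{x}),\ST(\bar{y}),\ST(\bar{p}))$. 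This completes the verification of the quasi-demand condition a.e., and hence the theorem. The only nontrivial difficulty is managing the replacement of $\bar{x}$ by the fixed S-integrable lifting $F$ of $\ST(\bar{x})$ so that \cref{prefscts} applies; the argument above handles this via the two-step chain $\bar{x}\approx F$ in $\NSE{\mathcal{L}^{1}}$-norm followed by preserved S-continuity of $\NSE{P}_{t}^{\mathscr{T}}$.
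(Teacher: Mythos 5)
Your proposal is correct and follows essentially the same route as the paper's proof: verify market clearing and profit maximization by taking standard parts directly, and handle the quasi-demand condition by replacing $\bar{x}$ with the fixed S-integrable lifting $F$ of $\ST(\bar{x})$, invoking \cref{prefscts} to identify the standard parts of the preference maps and then \cref{qsdemand_standard_part} to push the quasi-demand membership down. Your explicit justification that $\bar{x}\approx F$ in $\NSE{\mathcal{L}^{1}}$-norm (both being S-integrable liftings of the same Loeb function) is a slightly more careful rendering of a step the paper treats implicitly, but the argument is the same.
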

\begin{proof}
Let $(\bar{x}, \bar{y}, \bar{p})$ be a hyperfinite quasi-equilibrium for $\mathscr{E}$ such that $\bar{x}$ is S-integrable and $\ST(\bar{y})\in Y$. As $\bar{p}\in \NSE{\Delta}$, we have $\ST(\bar{p})\in \Delta$. 
Note that we have $\bar{x}(t)\in \NSE{X}_{t}$ for $\NSE{\mu}^{\mathscr{S}}$-almost all $t\in \mathscr{T}_{\Omega}$. 
As $X_{\omega}\in \cK(\NNReals^{\ell})$ for all $\omega\in \Omega$ and $\bar{x}$ is S-integrable, we have $\ST(\bar{x})\in \Loeb{\mathscr{A}}$.

\begin{claim}\label{inqdclaim}
$\ST(\bar{x})(t)\in \bar{\mathbb{D}}_{t}(\ST(\bar{x}),\ST(\bar{y}),\ST(\bar{p}))$ for $\Loeb{\NSE{\mu}^{\mathscr{T}}}$-almost all $t\in \mathscr{T}_{\Omega}$.
\end{claim}
\begin{proof}
By \cref{hyptsexist}, there exists a set $Z\subset \mathscr{T}_{\Omega}$ with $\Loeb{\NSE{\mu}^{\mathscr{S}}}(Z)=1$ such that 
$\NSE{e}(z)\approx e(\ST(z))$, $\NSE{\chi}(z)\approx \chi(\ST(z))$, $\NSE{\theta}(z)\approx \theta(\ST(z))$ and $\NSE{X}(z)\approx X(\ST(z))$ for all $z\in Z$. 
In particular, we know that $\hat{e}(z)\approx \NSE{e}(z)\in \NS{\NSE{\NNReals^{\ell}}}$, $\hat{\theta}(z)\approx \NSE{\theta}(z)\in \NS{\NSE{\NNReals^{|J|}}}$ and $\ST(\NSE{X})(z)\in \cK(\NNReals^{\ell})$ is non-empty for all $z\in Z$. By moving to a subset of $Z$ with $\Loeb{\NSE{\mu}^{\mathscr{S}}}$-measure $1$ if necessary, we can assume that $\bar{x}(z)$ is near-standard for all $z\in Z$.

We first show that $\ST(\bar{x})(z)\in \mathbb{B}_{z}(\ST(\bar{y}), \ST(\bar{p}))$ for all $z\in Z$. We have
\[
\ST(\bar{p})\cdot \ST(\bar{x})(z)&\approx \bar{p}\cdot \bar{x}(z)\leq \bar{p}\cdot \hat{e}(z)+\sum_{j\in J}\hat{\theta}_{zj}\bar{p}\cdot \bar{y}(j)\nonumber \\
&\approx \ST(\bar{p})\cdot \ST(\hat{e})(z)+\sum_{j\in J}\ST(\hat{\theta})_{zj}\ST(\bar{p})\cdot \ST(\bar{y})(j).\nonumber
\]
Hence, we conclude that $\ST(\bar{x})(z)\in \mathbb{B}_{z}(\ST(\bar{y}), \ST(\bar{p}))$ for all $z\in Z$.

Let $\bar{F}\in \NSE{\mathcal{L}}^{1}(\mathscr{T}_{\Omega}, \NSE{\NNReals^{\ell}})$ be the S-integrable lifting associated with $\ST(\bar{x})$ as specified in \cref{choicesint} in the construction of $\Loeb{\mathscr{E}}$. Hence, we have: 
\[
\ST(\NSE{P}_{t}^{\mathscr{T}})(\ST(\bar{x}), \ST(\bar{y}), \ST(\bar{p}))=\ST\big(\NSE{P}_{t}^{\mathscr{T}}(\bar{F}, \ST(\bar{y}),\ST(\bar{p}))\big)\nonumber
\]
for all $t\in \mathscr{T}_{\Omega}$. For every $z\in Z$, by \cref{prefscts}, we have
\[
\ST(\NSE{P}_{z}^{\mathscr{T}}(\bar{F}, \ST(\bar{y}),\ST(\bar{p})))&=\ST(\NSE{P}_{z}(\NSE{E}(\bar{F}), \ST(\bar{y}),\ST(\bar{p})))
=\ST(\NSE{P}_{z}(\NSE{E}(\bar{x}), \ST(\bar{y}), \ST(\bar{p})))\nonumber \\
&=\ST(\NSE{P}_{z}^{\mathscr{T}}(\bar{x}, \bar{y}, \bar{p})).\nonumber
\]
By \cref{qsdemand_standard_part}, we have $\ST(\bar{x})(z)=\ST(\bar{x}(z))\in \bar{\mathbb{D}}_{z}(\ST(\bar{x}), \ST(\bar{y}),\ST(\bar{p}))$ for all $z\in Z$. 
\end{proof}

\begin{claim}\label{insupclaim}
$\ST(\bar{y})(j)\in \mathbb{S}_{j}(\ST(\bar{p}))$ for all $j\in J$. 
\end{claim}
\begin{proof}
Pick $j\in J$. 
By assumption, $\ST(\bar{y})(j)$ is an element of $Y_j$. 
As $\bar{y}_j\in \mathcal{S}_j(\bar{p})$, we have $\bar{y}_j\in \argmax_{z\in \NSE{Y}_j}\bar{p}\cdot z$. 
Thus, we conclude that $\ST(\bar{y})(j)\in \argmax_{z\in Y_j}\ST(\bar{p})\cdot z$. 
\end{proof}

Note that $\int_{t\in \mathscr{T}_{\Omega}}\ST(\bar{x})(t)\Loeb{\NSE{\mu}^{\mathscr{T}}}(\dee t)\approx \sum_{t\in \mathscr{T}_{\Omega}}\bar{x}(t)\NSE{\mu}^{\mathscr{T}}(\{t\})$ and 
\[
\sum_{t\in \mathscr{T}_{\Omega}}\hat{e}(t)\NSE{\mu}^{\mathscr{T}}(\{t\})+\sum_{j\in J}\bar{y}(j)
\approx \int_{t\in \mathscr{T}_{\Omega}}\ST(\hat{e})(t)\Loeb{\NSE{\mu}^{\mathscr{T}}}(\dee t)+\sum_{j\in J}\ST(\bar{y})(j).\nonumber
\]
We have $\sum_{t\in \mathscr{T}_{\Omega}}\bar{x}(t)\NSE{\mu}^{\mathscr{T}}(\{t\})-\sum_{t\in \mathscr{T}_{\Omega}}\hat{e}(t)\NSE{\mu}^{\mathscr{T}}(\{t\})-\sum_{j\in J}\bar{y}(j)=0$ since $(\bar{x}, \bar{y}, \bar{p})$ is a hyperfinite quasi-equilibrium.
Hence, we conclude that 
\[
\int_{t\in \mathscr{T}_{\Omega}}\ST(\bar{x})(t)\Loeb{\NSE{\mu}^{\mathscr{T}}}(\dee t)-\int_{t\in \mathscr{T}_{\Omega}}\ST(\hat{e})(t)\Loeb{\NSE{\mu}^{\mathscr{T}}}(\dee t)-\sum_{j\in J}\ST(\bar{y})(j)=0.\nonumber
\]
Combining Claims \ref{inqdclaim} and \ref{insupclaim}, $(\ST(\bar{x}), \ST(\bar{y}), \ST(\bar{p}))$ is a Loeb quasi-equilibrium for $\Loeb{\mathscr{E}}$. 
\end{proof}

We now show that \cref{assumptionalocbound} implies the assumptions of \cref{estfreenonfree}.

\begin{lemma}\label{monotonesint}
Suppose $\mathcal{E}$ satisfies \cref{assumptionmseco}, \cref{assumptionalocbound} and \cref{assumptionuc}. Let $(\bar{x}, \bar{y}, \bar{p})$ be a hyperfinite quasi-equilibrium for the hyperfinite production economy $\mathscr{E}$. If \cref{esto0} in \cref{assumptionsurvival} 
is satisfied, and $\bar{y}$ is near-standard, then $\bar{x}$ is S-integrable. 
\end{lemma}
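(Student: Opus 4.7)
The plan is to prove S-integrability of $\bar{x}$ coordinate by coordinate, leveraging the split of $\{1,\dotsc,\ell\}$ into bads ($s\leq k$) and goods ($s>k$) given by \cref{assumptionalocbound}. For bads the argument is immediate: items \cref{projform} and \cref{badintegrable} of \cref{assumptionalocbound} give $0\leq \bar{x}_{s}(t)\leq \NSE{\psi}_{s}(t)$ almost surely for every $s\leq k$, and since $\psi$ is $\mu$-integrable, $\NSE{\psi}$ is S-integrable, hence so is $\bar{x}_{s}$.

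For the goods coordinates, let $W(t)=\bar{p}\cdot \hat{e}(t)+\sum_{j\in J}\hat{\theta}_{tj}\,\bar{p}\cdot\bar{y}(j)$ and $H(t)=W(t)+\sum_{s'\leq k}\NSE{\psi}_{s'}(t)$. Since $\hat{e},\hat{\theta}$ are S-integrable (they approximate the $\NSE{}$extensions of the integrable $e,\theta$) and $\bar{y}$ is near-standard, $H$ is S-integrable. The budget inequality $\bar{p}\cdot\bar{x}(t)\leq W(t)$ together with $|\bar{p}_{s'}|\leq 1$ and $\bar{x}_{s'}(t)\leq \NSE{\psi}_{s'}(t)$ for $s'\leq k$ yields $\sum_{s>k}\bar{p}_{s}\bar{x}_{s}(t)\leq H(t)$, once I establish $\bar{p}_{s}\geq 0$ for $s>k$. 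The crucial step is to show $\ST(\bar{p}_{s})>0$ for every good $s>k$. Fix such $s$. Since the hyperfinite $(\bar{x},\bar{y})$ lies in $\NSE{\cO_{\epsilon_{s}}}$ by market clearing, transfer of \cref{goodmonotone} of \cref{assumptionalocbound} makes $\NSE{P}_{t}^{\mathscr{T}}(\bar{x},\bar{y},\bar{p})$ strongly monotonic in $s$ for $t\in\mathscr{T}_{\Omega_{0}^{s}}$; the interior-cum-survival condition \cref{esto0} of \cref{assumptionsurvival}, via \cref{defhate}(ii), yields a positive-measure subset $\mathscr{T}^{\ast}_{s}\subseteq \mathscr{T}_{\Omega_{0}^{s}}$ on which some $x^{\ast}\in \NSE{X}_{t}$ satisfies $\bar{p}\cdot x^{\ast}\leq W(t)-r/\ell$ for a standard $r>0$, after invoking profit maximality of $\bar{y}_{j}$. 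Debreu's cheaper-point argument then upgrades quasi-demand to demand, so Walras' Law $\bar{p}\cdot\bar{x}(t)=W(t)\geq \beta>0$ (some standard $\beta$) holds on $\mathscr{T}^{\ast}_{s}$.

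Splitting Walras' Law into bads and goods and restricting further to $t$ with $\sum_{s'\leq k}\NSE{\psi}_{s'}(t)\leq \beta/2$ (still a positive-measure subset by S-integrability of $\sum_{s'\leq k}\NSE{\psi}_{s'}$), I obtain $\sum_{s''>k}\bar{p}_{s''}\bar{x}_{s''}(t)\geq \beta/2$; integrating and using that $\sum_{t}\bar{x}_{s''}(t)\,\NSE{\mu}^{\mathscr{T}}(\{t\})$ is near-standard for every $s''$ by market clearing, not all good prices can be infinitesimal. The main obstacle is upgrading this aggregate argument to a lower bound on the specific $\bar{p}_{s}$ fixed above: I expect that exploiting the strong monotonicity in $s$ on $\mathscr{T}^{\ast}_{s}$---which, via demand maximization, forces consumers there to devote non-negligible expenditure to $s$ whenever $\bar{p}_{s}$ is small---combined with the near-standard bound on $\sum_{t}\bar{x}_{s}(t)\,\NSE{\mu}^{\mathscr{T}}(\{t\})$ from market clearing, yields $\ST(\bar{p}_{s})>0$. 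Once $\bar{p}_{s}\geq \alpha>0$ standard is obtained for each $s>k$, the inequality $\bar{x}_{s}(t)\leq H(t)/\alpha$ dominates $\bar{x}_{s}$ by an S-integrable function, so $\bar{x}$ is S-integrable on every coordinate.
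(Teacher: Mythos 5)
Your treatment of the bads coordinates and your final step (bounding goods consumption by an S-integrable function once all goods prices are bounded away from zero) match the paper, and you correctly assemble the key ingredients for the goods prices: strong monotonicity on $\mathscr{T}_{\Omega_{0}^{s}}$ via $(\NSE{E}(\bar{x}),\bar{y})\in\NSE{\cO_{\epsilon_{s}}}$, and the cheaper-point upgrade from quasi-demand to demand. But there is a genuine gap exactly where you flag ``the main obstacle'': your aggregate Walras' Law computation only delivers that \emph{not all} good prices can be infinitesimal, whereas the lemma needs $\ST(\bar{p}_{s})>0$ for \emph{each} $s>k$ (and also needs to rule out $\bar{p}_{s}$ negative), and the mechanism you propose for closing it does not work. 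Strong monotonicity in $s$ does not force non-negligible \emph{expenditure} on $s$ when $\bar{p}_{s}$ is infinitesimal or negative --- the product $\bar{p}_{s}\bar{x}_{s}(t)$ can be negligible no matter how much of $s$ is consumed --- so that route cannot produce a lower bound on $\bar{p}_{s}$.

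The paper closes this step with a direct individual-level contradiction rather than an aggregate expenditure argument. Suppose $\bar{p}_{s}$ is infinitesimal or negative for some good $s$. Pick $t_{0}\in\mathscr{T}_{\Omega}\cap\NSE{\Omega_{0}}$ with $\NSE{\mu}^{\mathscr{T}}(\{t_{0}\})>0$, $\ST(\bar{x}(t_{0}))$ existing, and $\NSE{P}^{\mathscr{T}}_{t_{0}}(\bar{x},\bar{y},\bar{p})\in\NSE{M_{s}}$. The cheaper-point argument (as in \cref{omega0equil}, using \cref{defhate}) gives $\bar{x}(t_{0})\in\mathscr{D}_{t_{0}}(\bar{x},\bar{y},\bar{p})$, and \cref{demand_standard_part} pushes this down: no $w$ with $\ST(\bar{p})\cdot w\leq\ST(\bar{p})\cdot\ST(\hat{e}(t_{0}))+\sum_{j}\hat{\theta}_{t_{0}j}\ST(\bar{p})\cdot\ST(\bar{y}(j))$ satisfies $\bigl(w,\ST(\bar{x}(t_{0}))\bigr)\in\ST(\NSE{P}^{\mathscr{T}}_{t_{0}}(\bar{x},\bar{y},\bar{p}))$. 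Since $M_{s}$ is compact (Proposition 2 of Grodal), the standard part of the preference is still strongly monotone in $s$, so $w'=\ST(\bar{x}(t_{0}))$ plus one unit of commodity $s$ is strictly preferred yet costs no more (as $\ST(\bar{p}_{s})\leq 0$) --- a contradiction. Note that the compactness of $M_{s}$, which guarantees strong monotonicity survives the passage to standard parts, is an ingredient your sketch omits entirely. Once each $\bar{p}_{s}$, $s>k$, is positive and non-infinitesimal, your concluding domination argument goes through as in the paper.
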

\begin{proof}
By \cref{badintegrable} of \cref{assumptionalocbound}, $\proj{k}\circ \bar{x}$ is S-integrable. 
We now show that $\bar{p}_{j}$ is positive and non-infinitesimal for all $j>k$.  
Suppose not. Without loss of generality, we assume that $\bar{p}_{k+1}$ is infinitesimal or negative.
As $(\bar{x},\bar{y},\bar{p})$ is a hyperfinite quasi-equilibrium, by the same argument in \cref{attainpreserve}, $(\NSE{E}(\bar{x}),\bar{y})\in \NSE{\cO_{\epsilon_{k+1}}}$ for the same $\epsilon_{k+1}$ in \cref{goodmonotone} of \cref{assumptionalocbound}. 
Thus, there exists $t_0\in \mathscr{T}_{\Omega}\cap \NSE{\Omega_0}$ such that 
\begin{enumerate}[leftmargin=*]
    \item $\NSE{\mu}^{\mathscr{T}}(\{t_0\})>0$ and $\ST(\bar{x}(t_0))$ exists;
    \item $\NSE{P}^{\mathscr{T}}_{t_0}(\bar{x},\bar{y},\bar{p})\in \NSE{M_{k+1}}$.
\end{enumerate}
Note that $\bar{x}(t_0)\in \bar{\mathscr{D}}_{t_0}(\bar{x},\bar{y}, \bar{p})$. By \cref{defhate}, the fact that $t_0\in \NSE{\Omega_0}$ and the same proof as in \cref{omega0equil}, we have $\bar{x}(t_0)\in \mathscr{D}_{t_0}(\bar{x},\bar{y}, \bar{p})$. 
Hence, by \cref{esto0} in \cref{assumptionsurvival}, \cref{defhate} and \cref{demand_standard_part}, there is no $w$ with $\ST(\bar{p})\cdot w\leq \ST(\bar{p})\cdot \ST(\hat{e}(t_0))+\sum_{j\in J}\hat{\theta}_{t_0j}\ST(\bar{p})\cdot \ST(\bar{y}(j))$ such that $\big(w,\ST(\bar{x}(t_0))\big)\in \ST(\NSE{P}^{\mathscr{T}}_{t_0}(\bar{x},\bar{y},\bar{p}))$.
By Proposition 2 of \citet{grodal74}, $M_{k+1}$ is compact. Hence, the preference $\ST(\NSE{P}^{\mathscr{T}}_{t_0}(\bar{x},\bar{y},\bar{p}))$ is strongly monotonic on the commodity $k+1$. 
As the price of commodity $k+1$ is infinitesimal or negative, we can pick $w'$ to be $\ST(\bar{x}(t_0))$ plus one extra unit of good $k+1$. We then have $\big(w',\ST(\bar{x}(t_0))\big)\in \ST(\NSE{P}^{\mathscr{T}}_{t_0}(\bar{x},\bar{y},\bar{p}))$ and $\ST(\bar{p})\cdot w'\leq \ST(\bar{p})\cdot \ST(\hat{e}(t_0))+\sum_{j\in J}\hat{\theta}_{t_0j}\ST(\bar{p})\cdot \ST(\bar{y}(j))$.
This is a contradiction, hence $\bar{p}_{j}$ is strictly positive and non-infinitesimal for all $j>k$. 

Let $\proj{(\ell-k)}$ be the projection onto the coordinates $k+1,\dotsc,\ell$.
Recall that $\psi$ is the integrable function in \cref{badintegrable} of \cref{assumptionalocbound}.
As $\bar{y}$ is near-standard, there are $r\in \PosReals$ and $n\leq k$ such that $\|\proj{(\ell-k)}\left(\bar{x}(t)\right)\|\leq r\|\hat{e}(t)\|+\|\NSE{\psi}_{n}(t)\|$\footnote{As usual, $\NSE{\psi}_{n}(t)$ is the $n$-th coordinate of $\NSE{\psi}(t)$.} for all $t$ with $\NSE{\mu}^{\mathscr{T}}(\{t\})>0$. 
As $\hat{e}$ and $\NSE{\psi}$ are S-integrable, so is $\bar{x}$.
\end{proof}

We now present the main result of this section:

\begin{theorem}\label{alocboundlemma}
Suppose $\mathcal{E}$ satisfies \cref{assumptionsurvival}, \cref{assumptionmseco}, \cref{assumptionalocbound}, \cref{assumptionuc}, and the following conditions:
\begin{enumerate}[(i), leftmargin=*]
\item for almost all $\omega\in \Omega$, $P_\omega$ takes value in  $\mathcal{P}_{H}$;
\item\label{loebsatiate} for some $\epsilon>0$, for almost all $\omega\in \Omega$ and all $(x,y)\in \cO_{\epsilon}$ such that $x(\omega)\in X_{\omega}$, there exists $u\in X_{\omega}$ such that $(u, x(\omega))\in \bigcap_{p\in \Delta}P_{\omega}(x,y,p)$;

\item $\bar{Y}$ is closed and convex, $\bar Y \cap (-\bar Y)=\bar Y \cap \NNReals^{\ell}=\{0\}$;

\item $Y_j$ is closed for all $j\in J$.
 \end{enumerate}
Then, $\Loeb{\mathscr{E}}$ has a Loeb quasi-equilibrium.\footnote{Using a similar argument as in \cref{interioritylm}, we can in fact establish the existence of a Loeb equilibrium unde the same set of assumptions. On the other hand, we do not need the full strength of \cref{cdexplain} in \cref{assumptionsurvival} to establish the existence of a Loeb quasi-equilibrium. In fact, if we instead assume $e(\omega)\in X_{\omega}-\sum_{j\in J}\theta_{\omega j}Y_j$ for almost all $\omega\in \Omega$, then we can establish the existence of a hyperfinite quasi-equilibrium in the hyperfinite production economy $\mathscr{E}$, which, by \cref{monotonesint}, implies $\Loeb{\mathscr{E}}$ has a Loeb quasi-equilibrium.}
\end{theorem}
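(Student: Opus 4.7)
The plan is to assemble the pieces already developed in the paper by first obtaining a hyperfinite quasi-equilibrium for $\mathscr{E}$, then verifying the two hypotheses of \cref{estfreenonfree} — S-integrability of the allocation and near-standardness of the production plan — and finally invoking \cref{estfreenonfree} to push the quasi-equilibrium down to $\Loeb{\mathscr{E}}$.

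First I would check that $\mathscr{E}$, constructed in \cref{fixhyper} from the standard economy $\mathcal{E}$ via the hyperfinite partition of \cref{hyptsexist} and the adjusted endowment of \cref{defhate}, satisfies the assumptions of \cref{hypereqexist}. The convexity assumption on $P_\omega$, the assumptions on $\bar Y$, and the local non-satiation condition \ref{loebsatiate} are part of the hypothesis of \cref{alocboundlemma}; \cref{assumptionsurvival} and \cref{assumptionmseco} are also assumed. Applying \cref{hypereqexist} produces a hyperfinite quasi-equilibrium $(\bar x,\bar y,\bar p)\in \mathscr{A}\times \NSE{Y}\times \NSE{\Delta}$ for $\mathscr{E}$.

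Next I would verify that $\bar y$ is near-standard and that $\ST(\bar y)\in Y$. The hyperfinite feasibility condition in \cref{def_mameqprodhyper} together with $\bar x(t)\in \NSE{X}_t\subset \NSE{\NNReals^{\ell}}$ forces $\sum_{j\in J}\bar y(j)\geq -\sum_{t}\hat e(t)\NSE{\mu}^{\mathscr{T}}(\{t\})$, and the right-hand side is finite (near-standard) since $\hat e$ is S-integrable by \cref{defhate}. Thus the aggregate production $\sum_j \bar y(j)$ lies in the $\NSE{}$transfer of the attainable aggregate production set. By the transfer principle applied to Debreu's theorem (Theorem~2 on p.~77 of \citet{de59}) — which gives compactness of the individually attainable production sets from closedness of the $Y_j$, closedness and convexity of $\bar Y$, irreversibility $\bar Y\cap(-\bar Y)=\{0\}$, and no-free-lunch $\bar Y\cap\NNReals^{\ell}=\{0\}$ — each $\bar y(j)$ lies in a near-standard compact subset of $\NSE{\Reals^{\ell}}$, hence is near-standard. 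Closedness of $Y_j$ then yields $\ST(\bar y)(j)\in Y_j$, so $\ST(\bar y)\in Y$.

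With $\bar y$ near-standard, I would invoke \cref{monotonesint}: under \cref{assumptionmseco}, \cref{assumptionalocbound}, \cref{assumptionuc}, and \cref{esto0} of \cref{assumptionsurvival} — all of which are assumed here — the hyperfinite quasi-equilibrium allocation $\bar x$ is S-integrable. The core mechanism is that \cref{badintegrable} of \cref{assumptionalocbound} gives the integrable bound on the bads $\proj{k}\circ \bar x$, while \cref{goodmonotone} combined with the interiority in \cref{esto0} forces the equilibrium prices of goods to be strictly positive and non-infinitesimal, which translates the S-integrability of $\hat e$ into S-integrability of the goods' components of $\bar x$. Finally, with both conditions of \cref{estfreenonfree} established and \cref{assumptionuc} in hand, \cref{estfreenonfree} yields that $\bigl(\ST(\bar x),\ST(\bar y),\ST(\bar p)\bigr)$ is a Loeb quasi-equilibrium for $\Loeb{\mathscr{E}}$, completing the proof. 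The main obstacle I anticipate is the second step: verifying rigorously that the hyperfinite attainable individual production sets are bounded in $\NSE{\Reals^{\ell}}$ by a standard constant — i.e., executing the transfer of Debreu's compactness argument while handling the fact that attainability in $\mathscr{E}$ is expressed through hyperfinite sums and the perturbed endowment $\hat e$ rather than the standard endowment $e$.
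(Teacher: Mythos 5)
Your proposal is correct and follows essentially the same route as the paper's own proof: obtain a hyperfinite (quasi-)equilibrium from \cref{hypereqexist}, deduce near-standardness of $\bar{y}$ from feasibility together with Theorem 2 on p.~77 of Debreu, obtain S-integrability of the allocation from \cref{monotonesint}, use closedness of the $Y_j$ to get $\ST(\bar{y})\in Y$, and conclude via \cref{estfreenonfree}. The only difference is that you spell out the lower bound $\sum_{j}\bar{y}(j)\geq -\sum_{t}\hat{e}(t)\NSE{\mu}^{\mathscr{T}}(\{t\})$ behind the appeal to Debreu, which the paper leaves implicit.
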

\begin{proof}
By \cref{hypereqexist}, the hyperfinite weighted production economy $\mathscr{E}$ has a hyperfinite equilibrium $(\bar{f},\bar{y},\bar{p})$. Hence,we have:  
\[
\sum_{t\in \mathscr{T}_{\Omega}}\bar{f}(t)\NSE{\mu}^{\mathscr{T}}(\{t\})-\sum_{t\in \mathscr{T}_{\Omega}}\hat{e}(t)\NSE{\mu}^{\mathscr{T}}(\{t\})-\sum_{j\in J}\bar{y}(j)=0.\nonumber
\]
Since $\sum_{t\in \mathscr{T}_{\Omega}}\bar{f}(t)\NSE{\mu}^{\mathscr{T}}(\{t\})$ and $\sum_{t\in \mathscr{T}_{\Omega}}\hat{e}(t)\NSE{\mu}^{\mathscr{T}}(\{t\})$ are near-standard,
by Theorem 2 in Page 77 of \citet{de59}, $\bar{y}$ is near-standard. 
By \cref{monotonesint}, $\bar{f}$ is S-integrable. 
As $Y_{j}$'s are closed, we have $\ST(\bar{y})\in Y$.
By \cref{estfreenonfree}, $(\ST(\bar{f}), \ST(\bar{y}), \ST(\bar{p}))$ is a Loeb quasi-equilibrium.   
\end{proof}

\subsection{Measure-theoretic Production Economy}\label{seceqestorig}

In this section, we establish equilibrium existence for the measure-theoretic production economy $\mathcal{E}$, by  constructing an equilibrium from a Loeb quasi-equilibrium of the Loeb production economy $\Loeb{\mathscr{E}}$. 

\subsubsection{Convexity of the Quasi-Demand Set}\label{cqdappendix}
In this section, we provide sufficient conditions on the preference map under which the quasi-demand set is convex. The convexity of the quasi-demand is needed for the push down of Loeb quasi-equilibrium allocation to be in the quasi-demand set of the measure theoretic economy. 
For $(C,\succ)\in \mathcal P$, let $\succsim$ be the derived weak preference on $C$.\footnote{For $a, b\in C$, we say $a$ is weakly preferred to $b$ and write $a\succsim b$ if $b\not\succ a$. It is easy to verify that $\succsim$ is complete and reflexive. $\succsim$ is in addition transitive if $\succ$ is negatively transitive.} The following result is stated in \citet{de59} without a proof. 

\begin{lemma}[{\citep[][Page.~59]{de59}}]\label{thm_convex}
Let $(C,\succ)\in \mathcal P_H^{-}$ be a preference. 
Then the derived weak preference $\succsim$ is convex. 
\end{lemma}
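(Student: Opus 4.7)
The plan is to prove directly that, for every $x\in C$, the weak upper contour set $\{y\in C : y\succsim x\}$ is convex, by combining negative transitivity of $\succ$ with the convexity of strict upper contour sets and with irreflexivity. Unfolding the definition, $a\succsim x$ means $x\not\succ a$, so I need to show that $x\not\succ a$ and $x\not\succ b$ imply $x\not\succ \lambda a+(1-\lambda)b$ for every $\lambda\in[0,1]$.

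Fix $x\in C$ and pick $a,b\in C$ with $x\not\succ a$ and $x\not\succ b$. Set $c=\lambda a+(1-\lambda)b$ with $\lambda\in[0,1]$, and argue by contradiction: suppose $x\succ c$. Recall that negative transitivity of $\succ$ is the statement that whenever $x\succ c$, then for every $z\in C$ one has $x\succ z$ or $z\succ c$ (this is just transitivity of the complement $\not\succ$). Applying this with $z=a$ and using $x\not\succ a$ gives $a\succ c$; applying it with $z=b$ and using $x\not\succ b$ gives $b\succ c$.

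Therefore both $a$ and $b$ lie in the strict upper contour set $U_c\defas\{y\in C:y\succ c\}$. Since $(C,\succ)\in\mathcal{P}_H^{-}\subset\mathcal{P}_H$, the set $U_c$ is convex, hence $c=\lambda a+(1-\lambda)b\in U_c$, i.e., $c\succ c$. This contradicts irreflexivity of $\succ$, so the assumption $x\succ c$ must fail. Thus $x\not\succ c$, which is exactly $c\succsim x$, and convexity of the weak upper contour set is established.

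I do not expect any genuine obstacle: the argument is a short textbook-style deduction, and all three hypotheses (negative transitivity, convexity of strict upper contour sets, and irreflexivity) are used exactly once. The only mild subtlety is choosing the right form of negative transitivity; I am using the equivalent formulation that the complement relation $\not\succ$ is transitive, which is the standard definition and is what allows the step ``$x\succ c$ and $x\not\succ a$ imply $a\succ c$.''
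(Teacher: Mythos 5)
Your proof is correct and follows essentially the same route as the paper: assume $x\succ c$ for the convex combination $c$, show both endpoints strictly dominate $c$, and use convexity of the strict upper contour set to derive $c\succ c$, contradicting irreflexivity. The only difference is that you obtain $a\succ c$ and $b\succ c$ in one clean application of negative transitivity (``$x\succ c$ and $x\not\succ a$ imply $a\succ c$''), whereas the paper proves this intermediate claim via a slightly more roundabout case analysis using transitivity of $\succ$ and of $\sim$; your version is, if anything, tidier.
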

 
\begin{proof}
Assume that $\succsim$ is not convex. Then there exist $x,y,z\in C$ with $y\neq z$ and $\lambda \in (0,1)$ such that $y,z\succsim x$ but $\lambda y +(1-\lambda)z\not\succsim x$. By the definition of $\succsim$, $x\succ \lambda y +(1-\lambda)z$. 
\begin{claim}\label{succclaim}
$y,z \succ \lambda y +(1-\lambda)z$.
\end{claim}
\begin{proof}
It is sufficient to show that $y\succ \lambda y+(1-\lambda) z$. 
Suppose not. Then we have $\lambda y+(1-\lambda) z \succsim y$. 
By the negative transitivity of $\succ$, we have $x\succsim y$. 
If $\lambda y+(1-\lambda) z\succ y$, by the transitivity of $\succ$, we have $x\succ y$, which is a contradiction. If $\lambda y+(1-\lambda) z\sim y$, by the transitivity of $\sim$, we have $x\sim \lambda y+(1-\lambda) z$, which is also a contradiction. 
\end{proof}
By \cref{succclaim} and the convexity of $\succ$,   $\lambda y +(1-\lambda)z\succ \lambda y +(1-\lambda)z$, which yields a contradiction. Hence, $\succsim$ is convex. 
\end{proof}

\begin{theorem}\label{thm_convexdemandsupply} 
Suppose the preference map $P_{\omega}$ takes value in $\mathcal{P}_{H}^{-}$. Then, the quasi-demand set $\bar{D}_{\omega}(x,y,p)$ is convex for every $(x,y,p)\in \mathcal{L}^{1}(\Omega, \NNReals^{\ell})\times Y\times \Delta$.
\end{theorem}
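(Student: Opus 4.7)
The plan is to reformulate the quasi-demand condition in contrapositive form, passing to the derived weak preference so that \cref{thm_convex} can be applied. Write $W := p\cdot e(\omega) + \sum_{j\in J}\theta_{\omega j}\,p\cdot y(j)$ and, to lighten notation, $\succ$ for $\succ_{x,y,\omega,p}$ and $\succsim$ for the derived weak preference $(a\succsim b \iff b\not\succ a)$. Membership $z \in \bar{D}_{\omega}(x,y,p)$ is then equivalent to $z\in B_{\omega}(y,p)$ together with the statement
\[
\forall w\in X_{\omega}\ :\ p\cdot w < W\ \Longrightarrow\ w\not\succ z,
\]
i.e.\ $z\succsim w$ for every $w\in X_{\omega}$ strictly cheaper than $W$.

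Now pick $z_{1},z_{2}\in \bar{D}_{\omega}(x,y,p)$ and $\lambda\in[0,1]$, and let $z := \lambda z_{1} + (1-\lambda)z_{2}$. Convexity of $X_{\omega}$ and linearity of the price functional give $z\in B_{\omega}(y,p)$. For any $w\in X_{\omega}$ with $p\cdot w < W$, the contrapositive characterization applied to $z_{1}$ and $z_{2}$ yields $z_{1}\succsim w$ and $z_{2}\succsim w$. Since $P_{\omega}(x,y,p)\in \mathcal{P}_{H}^{-}$, \cref{thm_convex} asserts that the derived weak preference $\succsim$ is convex, so
\[
z \;=\; \lambda z_{1}+(1-\lambda)z_{2}\;\succsim\; w,
\]
equivalently $w\not\succ z$. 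Taking the contrapositive once more, every $w\in X_{\omega}$ with $w\succ z$ must satisfy $p\cdot w \geq W$, and therefore $z\in \bar{D}_{\omega}(x,y,p)$, completing the argument.

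The proof contains no real obstacle: the substantive content has already been packaged into \cref{thm_convex}, where transitivity and negative transitivity of $\succ$ combine with convexity of the strict preference to produce convexity of $\succsim$. The only conceptual move here is to observe that the quasi-demand set is parameterized most usefully by the weak upper-contour side (``nothing strictly cheaper is strictly preferred''), after which convexity of $\succsim$ gives convexity of $\bar D_{\omega}(x,y,p)$ directly.
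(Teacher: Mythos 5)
Your proof is correct and follows essentially the same route as the paper's: restate membership in $\bar{D}_{\omega}(x,y,p)$ contrapositively as ``$z\succsim w$ for every $w$ strictly cheaper than the wealth level,'' then apply \cref{thm_convex} to conclude that the convex combination inherits this property. The only (welcome) addition is that you explicitly verify budget-set membership of the convex combination, which the paper leaves implicit.
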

\begin{proof}
Fix $(x,y,p)\in \mathcal{L}^{1}(\Omega, \NNReals^{\ell})\times Y\times \Delta$. 
Suppose $z_1, z_2$ are two elements of $\bar{D}_{\omega}(x,y,p)$. 
Pick $\lambda\in (0, 1)$. 
For every $w\in X_{\omega}$ such that $p\cdot w<p\cdot e(\omega)+\sum_{j\in J}\theta_{\omega j}p\cdot y(j)$, we have $z_1, z_2\succsim_{x,y,\omega,p} w$.
By \cref{thm_convex}, we have $\lambda z_1+(1-\lambda)z_2\succsim_{x,y,\omega,p} w$. Hence, we have $\lambda z_1+(1-\lambda)z_2 \in \bar{D}_{\omega}(x,y,p)$, completing the proof. 
\end{proof}

\subsubsection{Extension of the Strong Lusin Theorem}
The strong Lusin theorem is equivalent to the Lusin theorem if the Tietze extension theorem holds.
The classical Tietze extension theorem assumes that the range space is a Euclidean space.\footnote{The Tietze extension theorem can fail if the domain is connected while the range is disconnected.} In this section, we present an extension of the strong Lusin Theorem when the range is a space of subsets of $\NNReals^{\ell}$. 

\citet{Dugun51} provides the following generalization of the Tietze extension theorem:
\begin{theorem}[{\citep[][Theorem.~4.1]{Dugun51}}]\label{exttieze}
Let $X$ be an arbitrary metric space, $X'$ a closed subset of $X$, $\mathcal{L}$ a locally convex topological vector space, and $f: X'\to \mathcal{L}$ a continuous map. Then there exists a continuous extension $F: X\to \mathcal{L}$ of $f$. Further more, the range of $F$ is a subset of the convex hull of the range of $f$.
\end{theorem}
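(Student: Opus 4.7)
The plan is a partition-of-unity construction on the open complement $U := X \setminus X'$. Since $X'$ is closed, the distance $d(x) := d(x, X')$ is strictly positive on $U$; for each $x \in U$ I would pick a ``witness'' $a(x) \in X'$ with $d(x, a(x)) \leq 2 d(x)$. The balls $\{B(x, d(x)/2) : x \in U\}$ form an open cover of $U$, and since every metric space is paracompact (A.~H.~Stone's theorem), this cover admits a locally finite open refinement $\{U_i\}_{i \in I}$ together with a subordinate continuous partition of unity $\{\phi_i\}_{i \in I}$. For each $i$, choose $x_i \in U$ with $U_i \subseteq B(x_i, d(x_i)/2)$ and put $a_i := a(x_i) \in X'$.

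Define $F : X \to \mathcal{L}$ by $F(x) = f(x)$ for $x \in X'$ and $F(x) = \sum_{i \in I} \phi_i(x) f(a_i)$ for $x \in U$. On $U$ the sum is locally finite and each summand is continuous (the $f(a_i)$ being constant vectors in $\mathcal{L}$), so $F$ is continuous on $U$; moreover every value of $F$ on $U$ is a convex combination of values of $f$ on $X'$, so $F(X) \subseteq \conv(f(X'))$, giving the ``furthermore'' clause immediately.

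The main obstacle is continuity of $F$ at a boundary point $x_0 \in X'$. My plan is: if $x_n \to x_0$ with $x_n \in U$, then for any index $i$ with $\phi_i(x_n) > 0$ one has $x_n \in U_i \subseteq B(x_i, d(x_i)/2)$; since $d(\cdot, X')$ is $1$-Lipschitz, this forces $d(x_i) \leq 2 d(x_n, X')$, and then $d(x_n, a_i) \leq d(x_n, x_i) + d(x_i, a(x_i)) \leq d(x_i)/2 + 2 d(x_i) \leq 5 d(x_n, X') \to 0$. Hence every ``active'' $a_i$ converges to $x_0$, so $f(a_i) \to f(x_0)$ by continuity of $f$ at $x_0$. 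The concluding step uses local convexity of $\mathcal{L}$ in an essential way: for any convex open neighborhood $V$ of $f(x_0)$, eventually all active $f(a_i)$ lie in $V$, hence the convex combination $F(x_n)$ does too, yielding $F(x_n) \to F(x_0)$. It is exactly this last step that fails without local convexity, explaining why the result is not a trivial coordinatewise corollary of the scalar Tietze theorem.
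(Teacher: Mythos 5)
Your proof is correct. Note that the paper does not prove this statement at all: it is imported verbatim as Theorem~4.1 of Dugundji (1951), so there is no internal argument to compare against. What you have written is essentially Dugundji's original construction — a locally finite refinement of the cover of $X\setminus X'$ by the balls $B(x,d(x,X')/2)$, a subordinate partition of unity (available since metric spaces are paracompact), and the extension $F=\sum_i\phi_i f(a_i)$ whose values are convex combinations of values of $f$, which yields the range statement for free. The quantitative chain $d(x_i)\leq 2\,d(x_n,X')$ and $d(x_n,a_i)\leq 5\,d(x_n,X')$ is the right way to force all ``active'' witnesses $a_i$ to converge to the boundary point uniformly, and you correctly isolate local convexity as the ingredient that turns ``all active $f(a_i)$ lie in a convex neighborhood $V$ of $f(x_0)$'' into ``the convex combination $F(x_n)$ lies in $V$.'' The only cosmetic omission is that continuity at $x_0\in X'$ should also cover sequences approaching through $X'$ itself, which is immediate from $F\restriction X'=f$; this is not a gap.
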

We are particularly interested in the case where the range space is the set of bounded, closed and convex subsets of $\NNReals^{\ell}$, which we denote by $\cK_{\mathrm{bd}}(\NNReals^{\ell})$. However, $\cK_{\mathrm{bd}}(\NNReals^{\ell})$ equipped with the Minkowski sum and the scalar multiplication is not a vector space since there does not exist an additive inverse for a generic element of $\cK_{\mathrm{bd}}(\NNReals^{\ell})$. On the other hand, it is easy to verify that $\cK_{\mathrm{bd}}(\NNReals^{\ell})$ satisfies the following conditions:
\begin{itemize}[leftmargin=*]
    \item $\cK_{\mathrm{bd}}(\NNReals^{\ell})$ is closed under the Minkowski sum and non-negative scalar multiplication;
    \item If $A\in \cK_{\mathrm{bd}}(\NNReals^{\ell})$ and $S$ is the unit sphere of $\NNReals^{\ell}$, then $A+S$ is closed;
    \item $\cK_{\mathrm{bd}}(\NNReals^{\ell})$ is metrized by the Hausdorff metric. 
\end{itemize}
Theorem 2 in \citet{rd51} implies that $\cK_{\mathrm{bd}}(\NNReals^{\ell})$ can be embedded as a convex cone in a real normed vector space $\mathcal{N}$ such that:
\begin{itemize}[leftmargin=*]
    \item the embedding is isometric;
    \item addition in $\mathcal{N}$ induces addition in $\cK_{\mathrm{bd}}(\NNReals^{\ell})$;
    \item multiplication by non-negative scalars induces the corresponding operation in $\cK_{\mathrm{bd}}(\NNReals^{\ell})$. 
\end{itemize}
\begin{theorem}\label{stlusinset}
Suppose $(M,\BorelSets M, P)$ is a Borel probability space where $M$ is Polish. Let $\cK_{\mathrm{bd}}(\NNReals^{\ell})$ be endowed with the closed convergence topology, and $\Phi: M\to \cK_{\mathrm{bd}}(\NNReals^{\ell})$ be a measurable mapping. Then, for every $\epsilon>0$, there is a compact set $K\subset M$ and a continuous function $\Phi': M\to \cK_{\mathrm{bd}}(\NNReals^{\ell})$ such that $P(K)>1-\epsilon$ and $\Phi=\Phi'$ on $K$. 
\end{theorem}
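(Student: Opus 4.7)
The plan is to combine the classical Lusin theorem on Polish spaces with the Rådström embedding of $\cK_{\mathrm{bd}}(\NNReals^{\ell})$ into a normed vector space, so that we can apply Dugundji's generalization of the Tietze extension theorem (\cref{exttieze}). The difficulty is that $\cK_{\mathrm{bd}}(\NNReals^{\ell})$ is only a convex cone, not a vector space; so a direct appeal to Dugundji is not available. The Rådström embedding cited in the paragraph preceding the statement resolves precisely this.

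First I would invoke Ulam's theorem: since $M$ is Polish, every Borel probability measure on $M$ is Radon, hence tight. Because $\cK_{\mathrm{bd}}(\NNReals^{\ell})$ under the Hausdorff metric is a separable metric space (the closed convergence topology on bounded closed convex sets coincides with the Hausdorff metric topology), the classical Lusin theorem applies to the measurable map $\Phi$. This yields, for the given $\epsilon>0$, a compact set $K \subset M$ with $P(K) > 1 - \epsilon$ such that the restriction $\Phi|_{K}$ is continuous into $\cK_{\mathrm{bd}}(\NNReals^{\ell})$.

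Next, let $\iota: \cK_{\mathrm{bd}}(\NNReals^{\ell}) \to \mathcal{N}$ be the Rådström embedding discussed immediately before the theorem: $\iota$ is an isometry onto a convex cone $\iota\bigl(\cK_{\mathrm{bd}}(\NNReals^{\ell})\bigr)$ in the real normed vector space $\mathcal{N}$, and $\iota$ preserves Minkowski addition and non-negative scalar multiplication. The map $\iota \circ \Phi|_{K}: K \to \mathcal{N}$ is then continuous, and $K$ is a closed subset of the metric space $M$. Since $\mathcal{N}$, being normed, is a locally convex topological vector space, \cref{exttieze} produces a continuous extension $F: M \to \mathcal{N}$ of $\iota \circ \Phi|_{K}$ whose range is contained in the convex hull of $\iota(\Phi(K))$.

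Finally, because $\iota\bigl(\cK_{\mathrm{bd}}(\NNReals^{\ell})\bigr)$ is a convex cone in $\mathcal{N}$ and therefore convex, the convex hull of the subset $\iota(\Phi(K))$ is contained in $\iota\bigl(\cK_{\mathrm{bd}}(\NNReals^{\ell})\bigr)$. Hence the range of $F$ lies entirely inside the image of $\iota$, and we may define
\[
\Phi' \defas \iota^{-1} \circ F : M \to \cK_{\mathrm{bd}}(\NNReals^{\ell}).
\]
Since $\iota$ is an isometric embedding, its inverse on the image is continuous, and so $\Phi'$ is continuous on $M$. By construction $\Phi'(x) = \iota^{-1}(F(x)) = \iota^{-1}(\iota(\Phi(x))) = \Phi(x)$ for every $x \in K$, completing the proof. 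The main obstacle, namely the lack of a vector space structure on $\cK_{\mathrm{bd}}(\NNReals^{\ell})$, is handled entirely by the Rådström embedding together with the observation that Dugundji's extension keeps us inside the convex cone $\iota\bigl(\cK_{\mathrm{bd}}(\NNReals^{\ell})\bigr)$.
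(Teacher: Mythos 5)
Your proof is correct and follows essentially the same route as the paper's: classical Lusin on the compact set $K$, the R{\aa}dstr\"om isometric embedding into a normed space, Dugundji's extension theorem, and the observation that the extension's range stays in the convex cone so it can be pulled back by the embedding. You in fact spell out two details the paper leaves implicit or relegates to a footnote (why the range of the Dugundji extension lies in the image of the embedding, and why the inverse of the embedding is continuous), so nothing is missing.
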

\begin{proof}
Pick $\epsilon>0$. By the Lusin theorem, there is a compact set $K\subset M$ such that $\Phi$ is continuous on $K$ and $P(K)>1-\epsilon$. Let $\kappa$ be the isometric embedding in Theorem 2 of \citet{rd51}. Then $\kappa\circ\Phi: M\to \kappa\big(\cK_{\mathrm{bd}}(\NNReals^{\ell})\big)$ is continuous on $K$. By \cref{exttieze}, there is a continuous function $\Xi: M\to \kappa\big(\cK_{\mathrm{bd}}(\NNReals^{\ell})\big)$\footnote{By Theorem 2 of \citet{rd51}, the set $\kappa\big(\cK_{\mathrm{bd}}(\NNReals^{\ell})\big)$ is convex.} such that $\Xi=\kappa\circ\Phi$ on $K$. Let $\Phi'=\kappa^{-1}\circ \Xi$. Then $\Phi'$ is continuous from $M$ to $\cK_{\mathrm{bd}}(\NNReals^{\ell})$ such that $\Phi'=\Phi$ on $K$.    
\end{proof}

\subsubsection{Existence of Equilibrium in the Original Measure-theoretic Economy}
We start with the following technical result on S-integrable functions and the weak topology. 

\begin{lemma}\label{sintns}
Let $F\in \NSE{\mathcal{L}^{1}}(\NSE{\Omega},\NSE{\NNReals^{\ell}})$ be S-integrable. 
Then $F$ is a near-standard element in $\NSE{\mathcal{L}^{1}}(\NSE{\Omega}, \NSE{\NNReals^{\ell}})$ under the weak topology. 
\end{lemma}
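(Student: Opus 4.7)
The plan is to exhibit the weak standard part of $F$ explicitly as a conditional expectation on the Loeb space, using the standard-part map $\ST:\NS{\NSE{\Omega}}\to\Omega$ to disintegrate. Near-standardness of $F$ in the weak topology on $\mathcal{L}^1(\Omega,\NNReals^\ell)$ means the existence of $f\in\mathcal{L}^1(\Omega,\NNReals^\ell)$ such that $\int_{\NSE{\Omega}} \NSE{g}(\omega)\cdot F(\omega)\,\NSE{\mu}(\dee\omega)\approx\int_\Omega g(\omega)\cdot f(\omega)\,\mu(\dee\omega)$ for every $g\in\mathcal{L}^\infty(\Omega,\Reals^\ell)$, since such test functions generate the weak topology. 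I will construct $f$ and verify the near-equality.

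First, since $F$ is S-integrable, $F(x)$ is a near-standard element of $\NSE{\Reals^\ell}$ for $\Loeb{\NSE{\mu}}$-a.e.\ $x$, and the pointwise standard part $\ST(F)$ lies in $\mathcal{L}^1(\NSE{\Omega},\Loeb{\NSE{\mu}})$. Let $\mathcal{C}:=\{\ST^{-1}(B):B\in\BorelSets{\Omega}\}$ be the sub-$\sigma$-algebra of the Loeb $\sigma$-algebra defined on $\NS{\NSE{\Omega}}$ (which has full Loeb measure). By standard Loeb theory, $\Loeb{\NSE{\mu}}\circ\ST^{-1}=\mu$ on $\BorelSets{\Omega}$, and every $\mathcal{C}$-measurable function agrees $\Loeb{\NSE{\mu}}$-a.e.\ with $h\circ\ST$ for some Borel $h:\Omega\to\NNReals^\ell$. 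I then define $f$ so that $f\circ\ST$ is (a Borel version of) $E[\ST(F)\mid\mathcal{C}]$; the push-forward identity gives $f\in\mathcal{L}^1(\Omega,\NNReals^\ell)$.

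Second, fix $g\in\mathcal{L}^\infty(\Omega,\Reals^\ell)$ with a Borel representative bounded by some $M$. The desired near-equality decomposes into three steps. (i) Since $|\NSE{g}|\leq M$ and $F$ is S-integrable, the product $\NSE{g}\cdot F$ is S-integrable, so $\int \NSE{g}\cdot F\,\NSE{\mu}(\dee\omega)\approx\int \ST(\NSE{g}\cdot F)\,\Loeb{\NSE{\mu}}(\dee\omega)$. (ii) \cref{measurescts} applied to the Borel representative of $g$ yields $\NSE{g}(x)\approx g(\ST(x))$ for $\Loeb{\NSE{\mu}}$-a.e.\ $x$, so $\ST(\NSE{g}\cdot F)=(g\circ\ST)\cdot\ST(F)$ a.e. (iii) Since $g\circ\ST$ is bounded and $\mathcal{C}$-measurable, the defining property of the conditional expectation replaces $\ST(F)$ by $f\circ\ST$ in the integral, and the change of variables $\Loeb{\NSE{\mu}}\circ\ST^{-1}=\mu$ collapses the result to $\int_\Omega g\cdot f\,\mu(\dee\omega)$. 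Chaining these three near-equalities proves the lemma.

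The main obstacle I anticipate is the careful handling of measurability: \cref{measurescts} applies to genuine Borel functions, while $\mathcal{L}^\infty$ consists of equivalence classes, so I must fix a Borel representative of $g$ and observe that $\mu$-null sets have $\Loeb{\NSE{\mu}}$-null preimage under $\ST$. A secondary subtlety is that $E[\ST(F)\mid\mathcal{C}]$ is vector-valued, but this is routine componentwise. All remaining ingredients---S-integrability, \cref{measurescts}, and the representation $\Loeb{\NSE{\mu}}\circ\ST^{-1}=\mu$---are already in the paper's toolkit, so the argument is short once this setup is in place.
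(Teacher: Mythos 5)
Your proposal is correct and follows essentially the same route as the paper's proof: both define the weak standard part as the conditional expectation of $\ST(F)$ with respect to the $\sigma$-algebra generated by $\{\ST^{-1}(B):B\in\BorelSets{\Omega}\}$, and both verify near-standardness by testing against $g\in\mathcal{L}^{\infty}$ via S-integrability of the product, \cref{measurescts}, and the pushforward identity $\Loeb{\NSE{\mu}}\circ\ST^{-1}=\mu$. The measurability caveats you flag (Borel representatives, factoring $\mathcal{C}$-measurable functions through $\ST$) are handled implicitly in the paper by noting that the conditional expectation is constant over monads.
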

\begin{proof}
Let $F\in \NSE{\mathcal{L}^{1}}(\NSE{\Omega},\NSE{\NNReals^{\ell}})$ be S-integrable. 
Then $\ST(F): \NSE{\Omega}\to \NNReals^{\ell}$ is Loeb measurable. 
Let $\mathcal{G}$ denote the $\sigma$-algebra generated by $\{\ST^{-1}(B): B\in \BorelSets {\Omega}\}$.
Let $\bar{F}=\mathbb{E}(\ST(F)|\mathcal{G}): \NSE{\Omega}\to \NNReals^{\ell}$ be the conditional expectation of $\ST(F)$ with respect to the $\sigma$-algebra $\mathcal{G}$. 
Note that $\bar{F}$ is constants over monads. 
Define $f: \Omega\to \NNReals^{\ell}$ to be $f(\omega)=\bar{F}(\omega)$.  
Since we have
\[
\int_{\Omega}f(\omega)\mu(\dee \omega)=\int_{\NS{\NSE{\Omega}}}\bar{F}(\omega)\Loeb{\NSE{\mu}}(\dee \omega)=\int_{\NS{\NSE{\Omega}}}\ST(F)(\omega)\Loeb{\NSE{\mu}}(\dee \omega),\nonumber
\]
we conclude that $f\in \mathcal{L}^{1}(\Omega, \NNReals^{\ell})$.

Pick any $g\in \mathcal{L}^{\infty}(\Omega, \NNReals^{\ell})$. 
As $g$ is essentially uniformly bounded, by \cref{measurescts}, we have $\ST(\NSE{g}(\omega))=g(\ST(\omega))$ for $\Loeb{\NSE{\mu}}$-almost all $\omega\in \NSE{\Omega}$. Then we have
\[
\int_{\NSE{\Omega}}F(\omega)\NSE{g}(\omega)\NSE{\mu}(\dee \omega)&\approx \int_{\NS{\NSE{\Omega}}} \ST(F)(\omega) \ST(\NSE{g})(\omega)\Loeb{\NSE{\mu}}(\dee \omega)
=\int_{\NS{\NSE{\Omega}}}\mathbb{E}(\ST(F))\ST(\NSE{g})|\mathcal{G})(\omega)\Loeb{\NSE{\mu}}(\dee \omega)\nonumber \\
&=\int_{\NS{\NSE{\Omega}}}\bar{F}(\omega)\ST(\NSE{g})(\omega)\Loeb{\NSE{\mu}}(\dee \omega)
=\int_{\Omega}f(\omega)g(\omega)\mu(\dee \omega).\nonumber
\]
Thus, $F$ is in the monad of $f$ with respect to the weak topology on $\mathcal{L}^{1}(\Omega, \NNReals^{\ell})$. 
\end{proof}

We use $\ST_{w}$ to denote the standard part map from $\NSE{\mathcal{L}^{1}}(\NSE{\Omega}, \NSE{\NNReals^{\ell}})$ to $\mathcal{L}^{1}(\Omega, \NNReals^{\ell})$ with respect to the weak topology. 
In particular, for an S-integrable function $F$, $\ST_{w}(F)$ is the standard function $f\in \mathcal{L}^{1}(\Omega, \NNReals^{\ell})$ such that $f(\omega)=\mathbb{E}(\ST(F)|\mathcal{G})(\omega)$ for all $\omega\in \Omega$, where $\mathcal{G}$ is the $\sigma$-algebra generated by $\{\ST^{-1}(B): B\in \BorelSets {\Omega}\}$.

\begin{lemma}\label{inconsume}
Suppose that the measure-theoretic production economy $\mathcal{E}$ satisfies \cref{assumptionmseco} and \cref{assumptionalocbound}.
Let $f$ be an element in the Loeb allocation set $\Loeb{\mathscr{A}}$ and $F\in \NSE{\mathcal{L}}^{1}(\mathscr{T}_{\Omega}, \NSE{\NNReals^{\ell}})$ be the S-integrable lifting associated with $f$ specified in \cref{choicesint} in the construction of $\Loeb{\mathscr{E}}$. 
Let $\mathcal{G}=\{\ST^{-1}(B): B\in \BorelSets {\Omega}\}$. 
Then $\bar{f}$ is an element of the allocation set $\cA$, where 
$\bar{f}(\omega)=\ST_{w}(\NSE{E}(F))=\mathbb{E}(\ST(\NSE{E}(F))|\mathcal{G})(\omega)$ for every $\omega\in \Omega$.
\end{lemma}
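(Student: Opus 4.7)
The plan is to verify the two properties defining the allocation set $\cA$: that $\bar{f}$ is measurable and integrable on $(\Omega, \BorelSets{\Omega}, \mu)$, and that $\bar{f}(\omega) \in X_{\omega}$ for $\mu$-almost all $\omega$.

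First, I would argue that $\NSE{E}(F) \in \NSE{\mathcal{L}^{1}}(\NSE{\Omega}, \NSE{\NNReals^{\ell}})$ is itself S-integrable. Since $\NSE{E}(F)$ is constant on each partition cell $T_i$ with value $F(t_i)$ and $\NSE{\mu}(T_i) = \NSE{\mu}^{\mathscr{T}}(\{t_i\})$, for every internal set $A \subset \NSE{\Omega}$ the sum $\int_A |\NSE{E}(F)|\, d\NSE{\mu}$ equals a corresponding hyperfinite sum of $|F(t_i)|\NSE{\mu}^{\mathscr{T}}(\{t_i\})$, so the S-integrability of $F$ transfers to $\NSE{E}(F)$. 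By \cref{sintns}, $\NSE{E}(F)$ is then weakly near-standard in $\NSE{\mathcal{L}^{1}}(\NSE{\Omega}, \NSE{\NNReals^{\ell}})$, and its weak standard part is exactly the conditional expectation $\bar{f} = \E(\ST(\NSE{E}(F)) \mid \mathcal{G})$, which therefore lies in $\mathcal{L}^{1}(\Omega, \NNReals^{\ell})$. This disposes of the measurability and integrability requirements.

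Second, I would show that $\ST(\NSE{E}(F)(a)) \in X_{\ST(a)}$ for $\Loeb{\NSE{\mu}}$-almost all $a \in \NSE{\Omega}$. By \cref{hyptsexist}, there is a full-measure set $Y \subset \NSE{\Omega}$ on which $\NSE{X}(a) \approx X(\ST(a))$ and whose intersection with each cell $T_i$ is nonempty; on such $T_i$, $\NSE{X}(a) \approx \NSE{X}(t_i)$ and $\ST(a) = \ST(t_i)$. Because $f \in \Loeb{\mathscr{A}}$, $f(t_i) \in \ST(\NSE{X}(t_i))$ for $\Loeb{\NSE{\mu}^{\mathscr{T}}}$-almost all $t_i$, and since $F$ is a lifting of $f$, $\ST(F(t_i)) = f(t_i)$. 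Combining these, $\ST(\NSE{E}(F)(a)) = \ST(F(t_i)) \in X(\ST(t_i)) = X_{\ST(a)}$ for $\Loeb{\NSE{\mu}}$-almost all $a$.

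Finally, I would invoke the closed-convex-valued nature of $X$ to transfer this property to $\bar{f}$. The $\sigma$-algebra $\mathcal{G}$ is generated by $\ST^{-1}(B)$ for $B \in \BorelSets{\Omega}$, so $a \mapsto X_{\ST(a)}$ is $\mathcal{G}$-measurable with closed convex values; because \cref{projform} and \cref{badintegrable} of \cref{assumptionalocbound} give $X_{\omega} = \proj{k}(X_{\omega}) \times \NNReals^{\ell-k}$ with an integrable dominant $\psi$ for $\proj{k}\circ X$, the conditional expectation of a closed-convex-valued integrable selection remains a selection: $\E(\ST(\NSE{E}(F)) \mid \mathcal{G})(a) \in X_{\ST(a)}$ almost surely. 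Since $\bar{f}(\omega)$ is exactly this value at any $a$ with $\ST(a) = \omega$, we conclude $\bar{f}(\omega) \in X_{\omega}$ for $\mu$-almost all $\omega$, and hence $\bar{f} \in \cA$. The main obstacle is the last step: one must handle the fact that the constraint sets $X_{\omega}$ vary with $\omega$ and are unbounded in the last $\ell-k$ coordinates, so preservation of the constraint under conditional expectation is not entirely automatic; the integrable bound in \cref{assumptionalocbound} on the bads-coordinates and non-negativity of the goods-coordinates together make the standard closed-convex-preservation argument for conditional expectations go through.
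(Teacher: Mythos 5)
Your proposal is correct, but it reaches the conclusion by a genuinely different route from the paper. Your first two steps (S-integrability of $\NSE{E}(F)$ via characterization (iii) of \cref{sintmain}, hence weak near-standardness by \cref{sintns}; and $\ST(\NSE{E}(F))(a)\in X_{\ST(a)}$ for $\Loeb{\NSE{\mu}}$-almost all $a$ via \cref{hyptsexist} and the infinitesimal diameter of the cells) match what the paper does implicitly. The divergence is in the final step. You treat $a\mapsto X_{\ST(a)}$ as a $\mathcal{G}$-measurable, closed-convex-valued correspondence and invoke the fact that conditional expectation with respect to $\mathcal{G}$ maps integrable selections of such a correspondence to selections; with the product structure $X_\omega=\proj{k}(X_\omega)\times\NNReals^{\ell-k}$, the integrable bound $\psi$ on the first $k$ coordinates, and preservation of non-negativity on the rest, this is a legitimate and essentially standard argument (Hiai--Umegaki type): for each $p$ in a countable dense set of directions one has $p\cdot\E(h\mid\mathcal{G})=\E(p\cdot h\mid\mathcal{G})\leq\E(s_p\mid\mathcal{G})=s_p$ a.s., where $s_p(a)$ is the support function of $\proj{k}(X_{\ST(a)})$ (integrable because dominated by $\|p\|\,\psi(\ST(a))$ and already $\mathcal{G}$-measurable), and a separation argument then gives membership in the closed convex set. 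You should spell this out rather than merely asserting that ``the standard closed-convex-preservation argument goes through,'' since it is the entire content of the lemma; but once written, it is a complete proof. The paper instead proceeds via its extension of the strong Lusin theorem (\cref{stlusinset}, resting on \cref{exttieze} and the embedding of $\cK_{\mathrm{bd}}(\NNReals^{k})$ into a normed space) to replace $\proj{k}\circ X$ by a continuous, integrably bounded correspondence $X^{\epsilon}$ off a set of measure $\epsilon$, then approximates $\E(\cdot\mid\mathcal{G})$ by conditional expectations with respect to $\sigma$-algebras $\mathcal{G}_n$ generated by countable partitions of mesh $1/n$, uses martingale convergence (\cref{fnconverge}), and concludes by upper hemicontinuity plus convexity that the local averages land in every open neighborhood of $X^{\epsilon}(\omega_0)$. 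Your route avoids the Lusin/Dugundji machinery and the martingale argument entirely at the cost of importing (or proving) the support-function lemma; the paper's route is longer but self-contained and reuses \cref{stlusinset}, which it needs again in the proof of \cref{steqexist}.
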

\begin{proof}
Let $g=\proj{k}\circ f$ be the projection of $f$ to the first $k$ coordinates. Let $\bar{g}=\ST_{w}\big(\NSE{E}(\proj{k}\circ F)\big)=\mathbb{E}\big(\ST(\NSE{E}(\proj{k} \circ F))|\mathcal{G}\big)(\omega)$. Note that $\bar{g}=\proj{k}\circ \bar{f}$. By \cref{projform} of \cref{assumptionalocbound}, it is sufficient to show that $\bar{g}(\omega)\in \proj{k}\big(X(\omega)\big)$ for almost all $\omega\in \Omega$. 

By \cref{hyptsexist}, there exists a $\Loeb{\NSE{\mu}^{\mathscr{T}}}$-measurable set $V\subset \mathscr{T}_{\Omega}$ with $\Loeb{\NSE{\mu}^{\mathscr{T}}}(V)=1$ such that $\NSE{X}(v)\approx X(\ST(v))$ for all $v\in V$.
Hence, we have $\NSE{(\proj{k}\circ X)}(v)\approx \proj{k}\circ X(\ST(v))$ for all $v\in V$. By \cref{badintegrable} of \cref{assumptionalocbound}, the map $\proj{k}\circ X$ maps from $\Omega$ to $\mathcal{K}_{\mathrm{bd}}(\NNReals^{k})$, and is integrably bounded. 
Pick $\epsilon>0$.
By \cref{stlusinset},
there exists a compact set $K_{\epsilon}\subset \Omega$ with $\mu(K_{\epsilon})>1-\epsilon$ and a map $X^{\epsilon}: \Omega\to \cK_{\mathrm{bd}}(\NNReals^{k})$ such that:
\begin{enumerate}[leftmargin=*]
    \item $X^{\epsilon}$ is continuous, hence is upper hemicontinuous as a correspondence;\footnote{For every $\omega\in \Omega$ and every $\omega'\in \NSE{\Omega}$ with $\omega'\approx \omega$, we have $\ST\big(\NSE{X^{\epsilon}}(\omega)\big)=\ST\big(\NSE{X^{\epsilon}}(\omega')\big)$. The result follows from the nonstandard characterization of upper hemicontinuity from \citet{adku1}.}
     \item $X^{\epsilon}(\omega)=\proj{k}\circ X(\omega)$ for all $\omega\in K_{\epsilon}$.
    \item By \cref{exttieze}, the range of $X^{\epsilon}$ is a subset of the convex hull of $\proj{k}\circ X(\omega)$, hence $X^{\epsilon}$ is also integrably bounded.
\end{enumerate}
Define the map $\mathbb{X}^{\epsilon}: \mathscr{T}_{\Omega}\to \cK_{\mathrm{bd}}(\NNReals^{k})$ by letting $\mathbb{X}^{\epsilon}(t)=X^{\epsilon}(\ST(t))$ for $t\in \NS{\mathscr{T}_{\Omega}}$ and $\mathbb{X}^{\epsilon}(t)=\{0\}$ otherwise. Let $V_{\epsilon}=\{v\in V: T(v)\cap \ST^{-1}(K_{\epsilon})\neq \emptyset\}$, where $T(v)$ is the unique element in the hyperfinite partition $\mathscr{T}$ that contains $v$. 
For every $v\in V_{\epsilon}$, we have $X^{\epsilon}(\ST(v))=\proj{k} \circ X(\ST(v))=
\ST\big(\NSE{(\proj{k} \circ X)}(v)\big)=\ST\big(\NSE{(\proj{k}\circ X)}\big)(v)$.
Hence, as $\Loeb{\NSE{\mu}^{\mathscr{T}}}(V_{\epsilon})=\mu(K_{\epsilon})>1-\epsilon$, we have 
$\Loeb{\NSE{\mu}^{\mathscr{T}}}(\{t\in \mathscr{T}_{\Omega}: \mathbb{X}^{\epsilon}(t)=\ST(\NSE{(\proj{k}\circ X)})(t)\})>1-\epsilon$. 
As $X^{\epsilon}$ is integrably bounded,
we can find a Loeb integrable function $f': \mathscr{T}_{\Omega}\to \NNReals^{k}$: 
\begin{enumerate}[leftmargin=*]
    \item $f'(t)\in \mathbb{X}^{\epsilon}(t)$ for all $t\in \mathscr{T}_{\Omega}$;
    \item $f'=g$ on $V_{\epsilon}$, hence    $\Loeb{\NSE{\mu}^{\mathscr{T}}}(\{t\in \mathscr{T}_{\Omega}: f'(t)=g(t)\})>1-\epsilon$.
\end{enumerate}
By \cref{assumptionmseco}, the consumer space $\Omega$ is second countable.
For each $n\in \Nats$, we can construct a countable partition $\mathcal{B}_n$ of $\Omega$ such that:
\begin{enumerate}[leftmargin=*]
    \item $\mathcal{B}_n\subset \BorelSets {\Omega}$, and the diameter of each element in $\mathcal{B}_{n}$ is no greater than $\frac{1}{n}$;
    \item $\mathcal{B}_{n+1}$ is a refinement of $\mathcal{B}_n$;
    \item the $\sigma$-algebra generated by $\bigcup_{n\in \Nats}\mathcal{B}_{n}$ equals $\BorelSets {\Omega}$.
\end{enumerate}
For each $n\in \Nats$, let $\cF_{n}$ be the $\sigma$-algebra generated by $\mathcal{B}_{n}$.
Let $\mathcal{G}_{n}$ be the $\sigma$-algebra generated by $\{\ST^{-1}(A): A\in \cF_{n}\}$, note that $\mathcal{G}_n$ is the same as the $\sigma$-algebra generated by $\{\ST^{-1}(A): A\in \mathcal{B}_{n}\}$. 
Let $F'$ be an S-integrable lifting of $f'$. As $f'(t)\in \mathbb{X}^{\epsilon}(t)$ for all $t\in \mathscr{T}_{\Omega}$, $F'(t)$ is in the monad of $X^{\epsilon}(\ST(t))$ for almost all $t\in \mathscr{T}_{\Omega}$.
Let $\bar{f'}_n(\omega)=\mathbb{E}(\ST(\NSE{E}(F'))|\mathcal{G}_n)(\omega)$ and $\bar{f'}(\omega)=\ST_{w}(\NSE{E}(F'))=\mathbb{E}(\ST(\NSE{E}(F'))|\mathcal{G})(\omega)$ for $\omega\in \Omega$.  
Note that $\NSE{E}(F')\approx \NSE{E}(\proj{k}\circ F)$ on a Loeb measure $1$ subset of $\ST^{-1}(K_{\epsilon})$.
Since $\mu(K_{\epsilon})>1-\epsilon$, we conclude that $\mu(\{\omega: \bar{f'}(\omega)=\bar{g}(\omega)\})>1-2\epsilon$. 
\begin{claim}\label{fnconverge}
$\lim_{n\to \infty}\bar{f'}_n(\omega)=\bar{f'}(\omega)$ for almost all $\omega\in \Omega$.
\end{claim}
\begin{proof}
By the Martingale convergence theorem, $\bar{f'}_n$ converges pointwise to some function $h$ almost surely. 
For every element $A\in \bigcup_{n\in \Nats}\cF_{n}$, we have $\int_{A}h(\omega)\mu(\dee \omega)=\int_{A}\bar{f'}(\omega)\mu(\dee \omega)$. 
As $\bigcup_{n\in \Nats}\cF_{n}$ is a $\pi$-system that generates $\BorelSets {\Omega}$, we have $\int_{B}h(\omega)\mu(\dee \omega)=\int_{B}\bar{f'}(\omega)\mu(\dee \omega)$ for all $B\in \BorelSets {\Omega}$, hence $\lim_{n\to \infty}\bar{f'}_n(\omega)=\bar{f'}(\omega)$ for almost all $\omega\in \Omega$.
\end{proof}
We now show that $\bar{g}(\omega)\in \proj{k}\big(X(\omega)\big)$ for almost all $\omega\in \Omega$. 
Pick $\omega_0\in \Omega$ such that $\lim_{n\to \infty}\bar{f'}_n(\omega_0)=\bar{f'}(\omega_0)$ and let $O$ be an open set that contains $X^{\epsilon}(\omega_0)$ as a subset. 
Note that $X^{\epsilon}(\omega)$ is convex for every $\omega\in \Omega$. 
By the upper hemicontinuity of $X^{\epsilon}(\omega_0)$, 
there exists some $n_0\in \Nats$ such that the closed convex hull of $\bigcup\{X^{\epsilon}(\omega): |\omega-\omega_0|<\frac{1}{n_0}\}$ is contained in $O$. 
By the construction of $\bar{f'}_{n_0}$, we know that $\bar{f'}_{n_0}(\omega_0)$ is in the closed convex hull of $\bigcup\{X^{\epsilon}(\omega): |\omega-\omega_0|<\frac{1}{n_0}\}$, hence is in $O$. 
Thus, $\bar{f'}(\omega_0)$ is in $X^{\epsilon}(\omega_0)$. 
As our choice of $\omega_0$ is arbitrary, we have $\bar{f'}(\omega)\in X^{\epsilon}(\omega)$ for almost all $\omega\in \Omega$. 
As our choice of $\epsilon$ is arbitrary, we have $\bar{g}(\omega)\in \proj{k}\big(X(\omega)\big)$ for almost all $\omega\in \Omega$. Hence, $\bar{f}$ is an element of $\cA$. 
\end{proof}

\begin{lemma}\label{prefctswk}
Suppose the measure-theoretic production economy $\mathcal{E}$ satisfies \cref{assumptionmseco} and \cref{assumptionwkcts}. 
Then, for every $g\in \mathcal{L}^{1}\big((\mathscr{T}_{\Omega}, \Loeb{I(\mathscr{T}_{\Omega})}, \Loeb{\NSE{\mu}^{\mathscr{T}}}),\NNReals^{\ell}\big)$ and every $(y, p)\in Y\times \Delta$, we have $\ST(\NSE{P}_{t}^{\mathscr{T}})(g,y,p)=P_{\ST(t)}(\ST_{w}(\NSE{E}(G)),y,p)$
for $\Loeb{\NSE{\mu}^{\mathscr{T}}}$-almost all $t\in \mathscr{T}_{\Omega}$,
where $G$ is the S-integrable lifting associated with $g$ specified in \cref{choicesint} in the construction of $\Loeb{\mathscr{E}}$.
\end{lemma}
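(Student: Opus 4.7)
The plan is to chain two infinitesimal approximations and then take standard parts. Unravelling the definition of $\ST(\NSE{P}_t^{\mathscr{T}})$ from item \ref{choicesint} in the construction of $\Loeb{\mathscr{E}}$, we have $\ST(\NSE{P}_t^{\mathscr{T}})(g,y,p)=\ST\big(\NSE{P}_t^{\mathscr{T}}(G,y,p)\big)=\ST\big(\NSE{P}_t(\NSE{E}(G),y,p)\big)$, so it suffices to show that for $\Loeb{\NSE{\mu}^{\mathscr{T}}}$-almost all $t$, $\NSE{P}_t(\NSE{E}(G),y,p)\approx P_{\ST(t)}(\ST_w(\NSE{E}(G)),y,p)$ in the metric space $\mathcal{P}$.

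First, I would verify that $\NSE{E}(G)$ is S-integrable on $(\NSE{\Omega},\NSE{\mu})$. If $\tau:\NSE{\Omega}\to\mathscr{T}_\Omega$ is the hyperfinite surjection sending $\omega\in T_i$ to $t_i$, then $\tau$ is measure-preserving from $(\NSE{\Omega},\NSE{\mu})$ to $(\mathscr{T}_\Omega,\NSE{\mu}^{\mathscr{T}})$, and $\NSE{E}(G)=G\circ\tau$; consequently, for any unlimited $K$, the tail integral $\int_{\NSE{\Omega}}|\NSE{E}(G)|\mathbf{1}[|\NSE{E}(G)|>K]\,d\NSE{\mu}$ coincides with $\sum_{i:|G(t_i)|>K}|G(t_i)|\NSE{\mu}^{\mathscr{T}}(\{t_i\})$, which is infinitesimal by the S-integrability of $G$. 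Then \cref{sintns} guarantees $\NSE{E}(G)$ is weakly near-standard, and $\ST_w(\NSE{E}(G))\in\mathcal{L}^{1}(\Omega,\NNReals^{\ell})$ is well defined.

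Second, \cref{hyptsexist} provides a $\Loeb{\NSE{\mu}^{\mathscr{T}}}$-full measure set $U\subset\mathscr{T}_\Omega$ on which $\NSE{\chi}(t)\approx\chi(\ST(t))=P_{\ST(t)}$ in the sup-norm metric of $\cC[\mathcal{L}^{1}(\Omega,\NNReals^{\ell})\times Y\times \Delta,\mathcal{P}]$. By transfer, this sup-norm infinitesimal closeness forces pointwise closeness at every nonstandard input, so $\NSE{P}_t(\NSE{E}(G),y,p)\approx\NSE{P}_{\ST(t)}(\NSE{E}(G),y,p)$ for each $t\in U$. Then \cref{assumptionwkcts} gives continuity of the standard map $P_{\ST(t)}$ at $(\ST_w(\NSE{E}(G)),y,p)$ with respect to the product of the weak topology on $\mathcal{L}^{1}(\Omega,\NNReals^{\ell})$ and the norm topology on $Y\times\Delta$. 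Since $\NSE{E}(G)$ lies in the weak monad of $\ST_w(\NSE{E}(G))$ and $(y,p)$ is standard, the nonstandard characterization of continuity yields $\NSE{P}_{\ST(t)}(\NSE{E}(G),y,p)\approx P_{\ST(t)}(\ST_w(\NSE{E}(G)),y,p)$ in the metric space $\mathcal{P}$. Composing the two approximations and taking standard parts finishes the proof.

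The main obstacle is this last step: the weak topology on $\mathcal{L}^{1}(\Omega,\NNReals^{\ell})$ is not metrizable on the full space, so the usual metric formulation of the nonstandard characterization of continuity does not apply directly on the domain side. However, \cref{sintns} supplies exactly what is needed in monadic form---namely, that $\NSE{E}(G)\in\NSE{V}$ for every standard weakly open neighborhood $V$ of $\ST_w(\NSE{E}(G))$---and continuity of $P_{\ST(t)}$ then places $\NSE{P}_{\ST(t)}(\NSE{E}(G),y,p)$ in every standard open neighborhood of $P_{\ST(t)}(\ST_w(\NSE{E}(G)),y,p)$ in the metric space $\mathcal{P}$, which is precisely infinitesimal closeness there.
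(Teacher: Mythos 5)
Your proposal is correct and follows essentially the same route as the paper's proof: decompose via the definition of $\ST(\NSE{P}_t^{\mathscr{T}})$, use \cref{hyptsexist} to replace $\NSE{P}_t$ by $\NSE{P}_{\ST(t)}$ on a full-measure set, and then use \cref{sintns} together with the weak continuity in \cref{assumptionwkcts} to pass from $\NSE{E}(G)$ to $\ST_w(\NSE{E}(G))$. You additionally spell out two details the paper leaves implicit --- the S-integrability of $\NSE{E}(G)$ on $(\NSE{\Omega},\NSE{\mu})$ via the measure-preserving map $\tau$, and the monadic formulation of continuity needed because the weak topology is not metrizable --- both of which are handled correctly.
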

\begin{proof}
Pick $g\in \mathcal{L}^{1}\big((\mathscr{T}_{\Omega}, \Loeb{I(\mathscr{T}_{\Omega})}, \Loeb{\NSE{\mu}^{\mathscr{T}}}),\NNReals^{\ell}\big)$, $y\in Y$ and $p\in \Delta$.
By \cref{hyptsexist},  there exists a $\Loeb{\NSE{\mu}^{\mathscr{T}}}$-measurable set $U\subset \mathscr{T}_{\Omega}$ with $\Loeb{\NSE{\mu}^{\mathscr{T}}}(U)=1$ such that $\NSE{\chi}(u)\approx \chi(\ST(u))$ for all $u\in U$.
For every $t\in U$, we have $\ST(\NSE{P}_{t}^{\mathscr{T}})(g,y,p)=\ST\big(\NSE{P}_{t}^{\mathscr{T}}(G,y,p)\big)
=\ST\big(\NSE{P}_{t}(\NSE{E}(G),y,p)\big)
=\ST\big(\NSE{P}_{\ST(t)}(\NSE{E}(G),y,p)\big)
=P_{\ST(t)}(\ST_{w}(\NSE{E}(G)),y, p)$.
\end{proof}

\begin{theorem}\label{steqexist}
Suppose that the measure theoretic production economy $\mathcal{E}$ satisfies \cref{assumptionmseco}, \cref{assumptionalocbound} and \cref{assumptionwkcts}.
Suppose $P_{\omega}$ takes value in $\mathcal{P}_{H}^{-}$ for almost all $\omega\in \Omega$.
Then, if the Loeb production economy $\Loeb{\mathscr{E}}$ has a Loeb quasi-equilibrium such that the equilibrium prices of the commodities $k+1,\dotsc, \ell$ are positive, then $\mathcal{E}$ has a quasi-equilibrium. 
\end{theorem}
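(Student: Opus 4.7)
The plan is to push down the given Loeb quasi-equilibrium $(\bar{x}, \bar{y}, \bar{p})$ of $\Loeb{\mathscr{E}}$ to a quasi-equilibrium of $\mathcal{E}$ via the weak-topology standard part $\ST_{w}$ introduced in \cref{sintns}. Let $\bar{X}\in \NSE{\mathcal{L}}^{1}(\mathscr{T}_{\Omega},\NSE{\NNReals^{\ell}})$ be the S-integrable lifting of $\bar{x}$ fixed in \cref{choicesint}, and define $\bar{f}:=\ST_{w}(\NSE{E}(\bar{X}))\in \mathcal{L}^{1}(\Omega,\NNReals^{\ell})$; the candidate quasi-equilibrium is $(\bar{f},\bar{y},\bar{p})$. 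The fact that $\bar{f}\in \mathcal{A}$ follows from \cref{inconsume}, which uses \cref{assumptionalocbound} to confirm that the pushed-down allocation stays in the consumption set a.e. The supply condition $\bar{y}(j)\in S_{j}(\bar{p})$ is immediate since the Loeb supply set $\mathbb{S}_{j}(\bar{p})$ coincides with $S_{j}(\bar{p})$. Market clearing follows from the defining property $\int_{\Omega}\bar{f}\,d\mu=\int_{\mathscr{T}_{\Omega}}\bar{x}\,d\Loeb{\NSE{\mu}^{\mathscr{T}}}$ of $\ST_{w}$, together with $\int_{\mathscr{T}_{\Omega}}\ST(\hat{e})\,d\Loeb{\NSE{\mu}^{\mathscr{T}}}=\int_{\Omega}e\,d\mu$ (which follows from S-integrability of $\hat{e}$ and \cref{hyptsexist}), transferring the Loeb identity $\int \bar{x}-\int \ST(\hat{e})-\sum_{j}\bar{y}(j)=0$ directly to the standard one.

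The crux is to show $\bar{f}(\omega)\in \bar{D}_{\omega}(\bar{f},\bar{y},\bar{p})$ for almost all $\omega\in \Omega$. By \cref{prefctswk}, $\ST(\NSE{P}^{\mathscr{T}}_{t})(\bar{x},\bar{y},\bar{p})=P_{\ST(t)}(\bar{f},\bar{y},\bar{p})$ for almost all $t\in\mathscr{T}_{\Omega}$; combined with \cref{hyptsexist}, which gives $\ST(\hat{e})(t)=e(\ST(t))$, $\ST(\hat{\theta})(t)=\theta(\ST(t))$, and $\ST(\NSE{X})(t)=X(\ST(t))$ a.s., the Loeb quasi-demand set $\bar{\mathbb{D}}_{t}(\bar{x},\bar{y},\bar{p})$ coincides with the standard quasi-demand set $\bar{D}_{\ST(t)}(\bar{f},\bar{y},\bar{p})$ for almost all $t$. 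Thus $\bar{x}(t)\in \bar{D}_{\ST(t)}(\bar{f},\bar{y},\bar{p})$ a.s. Since $\bar{f}(\omega)=\mathbb{E}(\ST(\NSE{E}(\bar{X}))\mid \mathcal{G})(\omega)$ with $\mathcal{G}=\{\ST^{-1}(B):B\in\BorelSets{\Omega}\}$, a disintegration argument (exploiting that $\ST$ pushes $\Loeb{\NSE{\mu}^{\mathscr{T}}}$ forward to $\mu$) shows that for a.e.\ $\omega\in\Omega$, the value $\bar{f}(\omega)$ is a conditional average of values lying in the fixed set $\bar{D}_{\omega}(\bar{f},\bar{y},\bar{p})$. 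This set is convex by \cref{thm_convexdemandsupply} (since $P_{\omega}\in \mathcal{P}_{H}^{-}$) and closed by continuity of $\succ$ under the closed-convergence topology; the conditional Jensen property (if $X\in C$ a.s.\ for a closed convex $C$, then $\mathbb{E}(X\mid\mathcal{G})\in C$ a.s.) then yields $\bar{f}(\omega)\in \bar{D}_{\omega}(\bar{f},\bar{y},\bar{p})$ a.s., completing the proof.

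The main obstacle is precisely this last step: passing a set-valued containment from the hyperfinite/Loeb consumer space $\mathscr{T}_{\Omega}$ down to the original space $\Omega$ via conditional expectation against the coarser $\sigma$-algebra $\mathcal{G}$. Three ingredients must line up for the argument to close. First, the weak-topology continuity of preferences (\cref{assumptionwkcts}), acting through \cref{prefctswk}, is what identifies the preference at the pushed-down allocation $\bar{f}$ with the standard part of the preference at the lifting $\bar{X}$; without it, the convex target set would vary with $t$ inside the monad and the Jensen step would fail. Second, the convexity of the quasi-demand demands the full strength of $\mathcal{P}_{H}^{-}$ (negative transitivity on top of convexity of strict preferences), as established in \cref{thm_convex,thm_convexdemandsupply}. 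Third, the standing hypothesis that the goods' prices $\bar{p}_{k+1},\dotsc,\bar{p}_{\ell}$ are strictly positive, together with \cref{assumptionalocbound}, is what bounds goods consumption by wealth and guarantees that the resulting $\bar{f}$ is integrable and its components honor the budget constraint under the conditional average.
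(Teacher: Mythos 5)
Your overall architecture coincides with the paper's: the same candidate $(\bar f,\bar y,\bar p)$ with $\bar f=\ST_{w}(\NSE{E}(F))=\mathbb{E}(\ST(\NSE{E}(F))\mid\mathcal G)$, the same use of \cref{inconsume} to get $\bar f\in\cA$, the same identification of the Loeb quasi-demand set with $\bar D_{\ST(t)}(\bar f,\bar y,\bar p)$ via \cref{prefctswk} and \cref{hyptsexist}, and the same routine verification of the supply and market-clearing conditions. The gap is in the one step that carries all the weight: showing that the conditional expectation lands in the quasi-demand set. You invoke a ``conditional Jensen property'' stated for a \emph{fixed} closed convex set $C$, but the target set $\bar D_{\omega}(\bar f,\bar y,\bar p)$ varies with $\omega$; and the ``disintegration argument'' you appeal to is not available off the shelf, because the Loeb space is not a standard Borel space (its $\sigma$-algebra is not countably generated), so the classical disintegration theorem does not yield regular conditional probabilities over $\mathcal G=\sigma\{\ST^{-1}(B):B\in\BorelSets{\Omega}\}$. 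To close the step along your lines you would need at least (i) measurability of $\omega\mapsto\bar D_{\omega}(\bar f,\bar y,\bar p)$ as a map into $\cK_{\mathrm{bd}}(\NNReals^{\ell})$ --- this is exactly where the positive goods prices and \cref{assumptionalocbound} enter, and it is not free --- and (ii) a regular conditional distribution of the $\Reals^{\ell}$-valued variable $\ST(\NSE{E}(F))$ given $\mathcal G$ (which does exist, since the range is $\Reals^{\ell}$), followed by a fiberwise barycenter argument. None of this is supplied in your write-up, and the boundedness point you attribute to the budget constraint is really needed for integrability of selections and for the extension of the correspondence.

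The paper avoids disintegration altogether. In \cref{instqd} it first shows that $\bar D_{\omega}(\bar f,\bar y,\bar p)$ is a measurable map with values a.s.\ in $\cK_{\mathrm{bd}}(\NNReals^{\ell})$, applies the Lusin--Tietze extension \cref{stlusinset} to replace it off a set of measure $\epsilon$ by a \emph{continuous}, integrably bounded correspondence $\bar D^{\epsilon}_{\omega}$, approximates $\bar f$ by the martingale $\bar f'_{n}=\mathbb{E}(\,\cdot\mid\mathcal G_{n})$ along refining countable partitions of $\Omega$ with vanishing mesh, and then uses martingale convergence (as in \cref{fnconverge}) together with upper hemicontinuity and convexity of the values to conclude that the limit lies in $\bar D^{\epsilon}_{\omega}$, finally letting $\epsilon\to0$. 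Your route can in principle be made rigorous, but as written the crucial containment is asserted rather than proved; you must either supply the measurability of the quasi-demand correspondence and the regular-conditional-distribution argument, or adopt the paper's continuous-extension-plus-martingale scheme.
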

\begin{proof}
Let $(f, \bar{y}, \bar{p})$ be a Loeb quasi-equilibrium for $\Loeb{\mathscr{E}}$ such that every coordinate of $\proj{\ell-k}(\bar{p})$ is positive. 
Let $F$ be the S-integrable lifting associated with $f$ specified in \cref{choicesint} in the construction of $\Loeb{\mathscr{E}}$. Hence, we have $\ST(\NSE{P}_{t}^{\mathscr{T}})(f,y,p)=\ST(\NSE{P}_{t}^{\mathscr{T}}(F,y,p))$
for all $t\in \mathscr{T}_{\Omega}$ and all $(y, p)\in Y\times \Delta$. 
Let $\bar{f}=\ST_{w}(\NSE{E}(F))$ and $\mathcal{G}$ be the $\sigma$-algebra generated by $\{\ST^{-1}(B): B\in \BorelSets {\Omega}\}$.
By construction in \cref{sintns}, we have $\bar{f}(\omega)=\mathbb{E}(\ST(\NSE{E}(F))|\mathcal{G})(\omega)$ for every $\omega\in \Omega$. 
By \cref{inconsume}, $\bar{f}$ is in the standard allocation set $\cA$. 
We shall show that $(\bar{f}, \bar{y}, \bar{p})$ is a quasi-equilibrium for $\mathcal{E}$.  
\begin{claim}\label{instqd}
For almost all $\omega\in \Omega$, $\bar{f}(\omega)\in \bar{D}_{\omega}(\bar{f},\bar{y},\bar{p})$.
\end{claim}
\begin{proof}
For almost all $\omega\in \Omega$, $P_{\omega}$ is a  function from $\mathcal{L}^{1}(\Omega, \NNReals^{\ell})\times Y\times \Delta$ to $\mathcal{P}_{H}^{-}$. As every coordinate of $\proj{\ell-k}(\bar{p})$ is positive, 
by \cref{assumptionalocbound} and \cref{thm_convexdemandsupply}, the quasi-demand set $\bar{D}_{\omega}(\bar{f}, \bar{y}, \bar{p})$ is a measurable map on $\Omega$, taking values almost surely in $\cK_{\mathrm{bd}}(\NNReals^{\ell})$.\footnote{\label{intselectorfoot} By the same argument in \cref{monotonesint}, the projection $\proj{(\ell-k)}\big(\bar{D}_{\omega}(\bar{f}, \bar{y}, \bar{p})\big)$ of $\bar{D}_{\omega}(\bar{f}, \bar{y}, \bar{p})$ onto coordinates $\{k+1,\dotsc,\ell\}$ is bounded for all $\omega$.}
Pick $\epsilon>0$. By the same proof as in \cref{inconsume}, there exists a compact set $K_{\epsilon}\subset \Omega$ with $\mu(K_{\epsilon})>1-\epsilon$ and a map
$\bar{D}_{\omega}^{\epsilon}(\bar{f}, \bar{y}, \bar{p}): \Omega\to \cK_{\mathrm{bd}}(\NNReals^{\ell})$ such that:
\begin{enumerate}[leftmargin=*]
    \item $\bar{D}_{\omega}^{\epsilon}(\bar{f}, \bar{y}, \bar{p})$ is continuous, hence is upper hemicontinuous as a correspondence;
    \item $\bar{D}_{\omega}^{\epsilon}(\bar{f}, \bar{y}, \bar{p})$ is integrably bounded;
    \item $\bar{D}_{\omega}^{\epsilon}(\bar{f}, \bar{y}, \bar{p})=\bar{D}_{\omega}(\bar{f}, \bar{y}, \bar{p})$ for all $\omega\in K_{\epsilon}$.
\end{enumerate}
Define the correspondence $\bar{\mathbb{D}}_{t}^{\epsilon}(f,\bar{y},\bar{p})$ by letting $\bar{\mathbb{D}}_{t}^{\epsilon}(f,\bar{y},\bar{p})=\bar{D}_{\ST(t)}^{\epsilon}(\bar{f}, \bar{y}, \bar{p})$ for $t\in \NS{\mathscr{T}_{\Omega}}$ and 
$\bar{\mathbb{D}}_{t}^{\epsilon}(f,\bar{y},\bar{p})=\{0\}$ otherwise. 
Note that $f(t)\in \bar{\mathbb{D}}_{t}(f,\bar{y},\bar{p})$\footnote{Recall that $\bar{\mathbb{D}}_{t}(f,\bar{y},\bar{p})$ is the Loeb quasi-demand set at $(f,\bar{y},\bar{p})$.} for $\Loeb{\NSE{\mu}^{\mathscr{T}}}$-almost all $t\in \mathscr{T}_{\Omega}$. 
By \cref{hyptsexist}, \cref{prefctswk} and \cref{qsdemand_standard_part}, there exists a $\Loeb{\NSE{\mu}^{\mathscr{T}}}$-measurable set $U\subset \mathscr{T}_{\Omega}$ with $\Loeb{\NSE{\mu}^{\mathscr{T}}}(U)=1$ such that $\bar{\mathbb{D}}_{t}(f,\bar{y},\bar{p})=\bar{D}_{\ST(t)}(\bar{f},\bar{y},\bar{p})$ for all $t\in U$. 
Using a similar argument as in \cref{inconsume}, we can find a Loeb integrable function
$f': \mathscr{T}_{\Omega}\to \NNReals^{\ell}$ such that: 
\begin{enumerate}[leftmargin=*]
    \item $f'(t)\in \bar{\mathbb{D}}_{t}^{\epsilon}(f, \bar{y}, \bar{p})$ for all $t\in \mathscr{T}_{\Omega}$;
    \item $\Loeb{\NSE{\mu}^{\mathscr{T}}}(\{t\in T: f'(t)=f(t)\})>1-\epsilon$. 
\end{enumerate}

For each $n\in \Nats$, we can construct a countable partition $\mathcal{B}_n$ of $\Omega$ such that:
\begin{enumerate}[leftmargin=*]
    \item $\mathcal{B}_n\subset \BorelSets {\Omega}$, and the diameter of each element in $\mathcal{B}_{n}$ is no greater than $\frac{1}{n}$;
    \item $\mathcal{B}_{n+1}$ is a refinement of $\mathcal{B}_n$, and the $\sigma$-algebra generated by $\bigcup_{n\in \Nats}\mathcal{B}_{n}$ is $\BorelSets {\Omega}$.
\end{enumerate}
Let $\mathcal{G}_{n}$ be the $\sigma$-algebra generated by $\{\ST^{-1}(A): A\in \mathcal{B}_{n}\}$. 
Let $F'$ be the S-integrable lifting associated with $f'$ specified in \cref{choicesint} in the construction of $\Loeb{\mathscr{E}}$. 
Then $F'(t)$ is in the monad of $\bar{D}^{\epsilon}_{\ST(t)}(\bar{f},\bar{y},\bar{p})$ for almost all $t \in \mathscr{T}_{\Omega}$.
Note that we also have
$\ST(\NSE{P}_{t}^{\mathscr{T}})(f',y, p)=\ST(\NSE{P}_{t}^{\mathscr{T}}(F',y, p))$
for all $t\in \mathscr{T}_{\Omega}$ and all $(y,p)\in Y\times \Delta$. 
Let $\bar{f}'_n(\omega)=\mathbb{E}(\ST(\NSE{E}(F'))|\mathcal{G}_n)(\omega)$ and $\bar{f}'(\omega)=\mathbb{E}(\ST(\NSE{E}(F'))|\mathcal{G})(\omega)$ for all $\omega\in \Omega$.  
By the same argument as in \cref{inconsume},
we have $\mu(\{\omega: \bar{f}(\omega)=\bar{f}'(\omega)\})>1-2\epsilon$. 
By the same argument as in \cref{fnconverge}, we have $\lim_{n\to \infty}\bar{f}'_n(\omega)=\bar{f}'(\omega)$ for almost all $\omega\in \Omega$.

Pick $\omega_0\in \Omega$ such that $\lim_{n\to \infty}\bar{f}'_n(\omega_0)=\bar{f}'(\omega_0)$ and let $O$ be an open set such that $\bar{D}_{\omega_0}^{\epsilon}(\bar{f}, \bar{y}, \bar{p})\subset O$. 
As $\bar{D}_{\omega_0}^{\epsilon}(\bar{f}, \bar{y}, \bar{p})$ is convex, by the upper hemicontinuity of $\bar{D}_{\omega}^{\epsilon}(\bar{f}, \bar{y}, \bar{p})$, 
there is some $n_0\in \Nats$ such that the closed convex hull of $\bigcup\{\bar{D}_{\omega}^{\epsilon}(\bar{f}, \bar{y}, \bar{p}): |\omega-\omega_0|<\frac{1}{n_0}\}$ is contained in $O$. 
By construction, $\bar{f}'_{n_0}(\omega_0)$ is in the closed convex hull of 
$\bigcup\{\bar{D}_{\omega}^{\epsilon}(\bar{f}, \bar{y}, \bar{p}): |\omega-\omega_0|<\frac{1}{n_0}\}$, hence is in $O$.
Thus, $\bar{f}'(\omega_0)$ is in $\bar{D}_{\omega_0}^{\epsilon}(\bar{f},\bar{y}, \bar{p})$. 
As our choice of $\omega_0$ is arbitrary, we have $\bar{f}'(\omega)\in \bar{D}_{\omega}^{\epsilon}(\bar{f},\bar{y},\bar{p})$ for almost all $\omega\in \Omega$. 
As our choice of $\epsilon$ is arbitrary, we have $\bar{f}(\omega)\in \bar{D}_{\omega}(\bar{f}, \bar{y}, \bar{p})$ for almost all $\omega\in \Omega$, completing the proof.  
\end{proof}
As the Loeb supply set is the same as the supply set, we have $\bar{y}(j)\in S_{j}(\bar{p})$. We now show that market clears at the candidate quasi-equilibrium. Note that:
\[
\int_{\Omega}\bar{f}(\omega)\mu(\dee \omega)&=\int_{\NS{\NSE{\Omega}}}\mathbb{E}(\ST(\NSE{E}(F))|\mathcal{G})(\omega)\Loeb{\NSE{\mu}}(\dee \omega)
=\int_{\NS{\NSE{\Omega}}}\ST(\NSE{E}(F))(\omega)\Loeb{\NSE{\mu}}(\dee \omega)\nonumber \\
&=\int_{\NSE{\Omega}}\ST(\NSE{E}(F))(\omega)\Loeb{\NSE{\mu}}(\dee \omega)
\approx \int_{\mathscr{T}_{\Omega}}F(t)\NSE{\mu}^{\mathscr{T}}(\dee t)\approx \int_{\mathscr{T}_{\Omega}}f(t)\Loeb{\NSE{\mu}^{\mathscr{T}}}(\dee t).\nonumber
\]
We also have $\int_{\Omega}e(\omega)\mu(\dee \omega)=\int_{\mathscr{T}_{\Omega}}\ST(\NSE{e})(t)\Loeb{\NSE{\mu}^{\mathscr{T}}}(\dee t)=\int_{\mathscr{T}_{\Omega}}\ST(\hat{e})(t)\Loeb{\NSE{\mu}^{\mathscr{T}}}(\dee t)$.
As $(f, \bar{y}, \bar{p})$ is a Loeb quasi-equilibrium, we have 
\[
&\int_{\Omega}\bar{f}(\omega)\mu(\dee \omega)-\int_{\Omega}e(\omega)\mu(\dee \omega)-\sum_{j\in J}\bar{y}(j)\nonumber \\
&=\int_{\mathscr{T}_{\Omega}}f(t)\Loeb{\NSE{\mu}^{\mathscr{T}}}(\dee t)-\int_{\mathscr{T}_{\Omega}}\ST(\hat{e})(t)\Loeb{\NSE{\mu}^{\mathscr{T}}}(\dee t)-\sum_{j\in J}\bar{y}(j)=0.\nonumber
\]
Combining with \cref{instqd} and the fact that $\bar{y}(j)\in S_{j}(\bar{p})$ for all $j\in J$, $(\bar{f},\bar{y},\bar{p})$ is a quasi-equilibrium for the original measure-theoretic production economy $\mathcal{E}$. 
\end{proof}

We are now at the place to prove our main result, \cref{standardmain}.

\begin{proof}[Proof of \cref{standardmain}]
As we have pointed out in \cref{suffsintcond}, \cref{alocboundlemma} follows from the condition in \cref{suffsintcond}, which is implied by \cref{assumptionwkcts}. 
As stated in the proof of \cref{monotonesint}, all equilibrium prices of the commodities $k+1,\dotsc,\ell$ are positive. 
\cref{standardmain} follows from \cref{alocboundlemma}, \cref{steqexist} and \cref{interioritylm}. 
\end{proof}

{\printbibliography}

\pagebreak

\section{Supplementary Material - For Online Publication}\label{sectionquota}

The supplementary material consists of promoting quasi-equilibrium to equilibrium, the proof of \cref{finitemainexpd}, the first welfare theorem for free-disposal equilibrium, and the existence of equilibrium in measure-theoretic quota economy.

\subsection{From Quasi-equilibrium to Equilibrium}\label{secquasieqtoeq}

At a quasi-equilibrium, no consumer could be strictly better spending strictly less than her budget constraint. Unlike equilibrium, quasi-equilibrium is not stable since consumers could do better within their budget sets. Thus, the interest of the quasi-equilibrium concept is purely mathematical, hence it is much more desirable to establish the existence of equilibrium than the existence of quasi-equilibrium. 
In this section, we show that, under \cref{assumptionsurvival}, every quasi-equilibrium is an equilibrium. 

\begin{lemma}\label{interioritylm}
Let $\mathcal{E}=\{(X, \succ_{\omega},P_{\omega}, e_{\omega}, \theta_{\omega})_{\omega\in \Omega}, (Y_j)_{j\in J}, (\Omega, \cB,\mu)\}$ be a measure-theoretic production economy satisfying \cref{assumptionsurvival}, and $(x,y,p)$ be a quasi-equilibrium. Then $(x,y,p)$ is an equilibrium.
\end{lemma}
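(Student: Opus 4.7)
The plan is to upgrade the quasi-demand condition to the demand condition for almost every consumer via the classical cheaper-point argument. Concretely, if for $\omega$ in a full-measure set one can exhibit $\tilde{x}(\omega)\in X_\omega$ with $p\cdot \tilde{x}(\omega) < p\cdot e(\omega) + \sum_{j\in J}\theta_{\omega j}p\cdot y(j)$, then $x(\omega)\in D_\omega(x,y,p)$: if some $w\in X_\omega$ satisfied $(w,x(\omega))\in P_\omega(x,y,p)$ with $p\cdot w$ equal to the budget, the convex combination $w_\lambda = (1-\lambda)w + \lambda \tilde{x}(\omega)$ would lie in $X_\omega$ by convexity, remain strictly preferred for small $\lambda>0$ by continuity of $\succ_\omega$, and be strictly cheaper, contradicting $x(\omega)\in \bar{D}_\omega(x,y,p)$. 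So the task reduces to producing cheaper points almost everywhere.

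For every $\omega\in \Omega_0$, part (i) of \cref{assumptionsurvival} provides a small $\epsilon>0$ with $e(\omega)-\epsilon p \in X_\omega - \sum_{j\in J}\theta_{\omega j}Y_j$; decomposing $e(\omega)-\epsilon p = \tilde{x}(\omega) - \sum_{j\in J}\theta_{\omega j}\tilde{y}_j$ with $\tilde{x}(\omega)\in X_\omega$ and $\tilde{y}_j\in Y_j$, using profit maximization $p\cdot \tilde{y}_j\leq p\cdot y(j)$, and noting that $p\cdot p>0$ since $p\neq 0$, yields $p\cdot \tilde{x}(\omega)\leq p\cdot e(\omega)-\epsilon(p\cdot p)+\sum_{j\in J}\theta_{\omega j}p\cdot y(j)$, which is strictly below the budget. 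Hence $x(\omega)\in D_\omega(x,y,p)$ for every $\omega\in \Omega_0$.

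Next I would establish $p_s>0$. Fix any $\omega\in \Omega_0$, for which $x(\omega)\in D_\omega(x,y,p)$ by the previous step. The strong monotonicity of $\succ_\omega$ in commodity $s$, together with the unboundedness of $\pi_s(X_\omega)$ and convexity and closedness of $X_\omega$, allows one to produce $w\in X_\omega$ agreeing with $x(\omega)$ outside coordinate $s$ and with $w_s>x(\omega)_s$; then $w\succ_\omega x(\omega)$ by strong monotonicity, and if $p_s\leq 0$ we would have $p\cdot w\leq p\cdot x(\omega)\leq p\cdot e(\omega)+\sum_{j\in J}\theta_{\omega j}p\cdot y(j)$, contradicting $x(\omega)\in D_\omega(x,y,p)$.

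Finally, with $p_s>0$ in hand, the second bullet of (ii) produces, for almost every $\omega$, an open set $V_\omega$ around $e(\omega)_s$ with $(e(\omega)_{-s},v)\in X_\omega - \sum_{j\in J}\theta_{\omega j}Y_j$ for all $v\in V_\omega$. Choosing $v\in V_\omega$ with $v<e(\omega)_s$ and decomposing $(e(\omega)_{-s},v)=\tilde{x}(\omega)-\sum_{j\in J}\theta_{\omega j}\tilde{y}_j$ as before, profit maximization gives $p\cdot \tilde{x}(\omega)\leq p\cdot e(\omega)-p_s(e(\omega)_s-v)+\sum_{j\in J}\theta_{\omega j}p\cdot y(j)$, which is strictly less than the budget. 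The cheaper-point principle then promotes $x(\omega)$ to the demand set for almost every $\omega$, completing the passage from quasi-equilibrium to equilibrium. The main technical obstacle is the second step: strong monotonicity applies only to pairs agreeing outside coordinate $s$, so leveraging it against $p_s\leq 0$ requires that the structure of $X_\omega$ (closed, convex, with unbounded $s$-projection) in fact permits coordinate-$s$ perturbations of $x(\omega)$ staying within $X_\omega$; once this is secured, the remaining arguments are standard.
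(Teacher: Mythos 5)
Your proposal is correct and follows essentially the same route as the paper's proof: a cheaper-point argument (using convexity of $X_\omega$ and continuity of the preference) to upgrade quasi-demand to demand, \cref{esto0} of \cref{assumptionsurvival} together with profit maximization to produce cheaper points on $\Omega_0$, strong monotonicity in commodity $s$ to force $p_s>0$, and then the second bullet of \cref{cdexplain} to produce cheaper points almost everywhere. The technical point you flag at the end --- that strong monotonicity only bites if $X_\omega$ actually contains coordinate-$s$ perturbations of $x(\omega)$ --- is likewise left implicit in the paper's own argument, so identifying it is a point in your favor rather than a defect of your proof.
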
 

\begin{proof}
Let $(\bar{x},\bar{y},\bar{p})$ be a quasi-equilibrium.
For each consumer $\omega$, define a correspondence  $\delta_\omega: \Delta\cto X_{\omega}$ as
\[
\delta_{\omega}(p)=\{x_{\omega}\in X_{\omega}: p\cdot x_{\omega} < p\cdot e(\omega) + \sum_{j\in J}\theta_{\omega j}\sup\{p\cdot y: y\in Y_j\}\}. \nonumber
\]
We start by establishing the following claim:
\begin{claim}\label{delta0claim}
For every $\omega\in \Omega_0$, $\delta_{\omega}(\bar{p})\neq \emptyset$.
\end{claim}
\begin{proof}
Note that, for every $\omega\in \Omega_0$, the set $X_{\omega} - \sum_{j\in J}\theta_{\omega j}Y_j$ has non-empty interior $U_{\omega}$ and $e(\omega)\in U_{\omega}$. Hence, we can pick $u_{\omega}\in \Reals^{\ell}$ such that $\bar{p}\cdot u_{\omega}<0$ and that $(e(\omega) + u_{\omega}) \in (X_{\omega}-\sum_{j\in J}\theta_{\omega j}Y_j)$. As $(\bar{x},\bar{y},\bar{p})$ is a quasi-equilibrium, we have
$\bar{p}\cdot \tilde x_{\omega} < \bar{p}\cdot e(\omega)+\sum_{j\in J}\theta_{\omega j}\bar{p}\cdot \bar{y}(j)$
for some $\tilde x_{\omega}\in X_{\omega}$.
So we have $\delta_{\omega}(\bar{p})\neq\emptyset$.
\end{proof}
\cref{delta0claim} leads to the following result:
\begin{claim}\label{omega0equil}
For almost all $\omega\in \Omega_0$,
if $\hat{x}\in X_{\omega}$ with $(\hat{x}, \bar{x}(\omega))\in P_{\omega}(\bar{x},\bar{y},\bar{p})$, then $\bar{p}\cdot \hat{x}>\bar{p}\cdot e(\omega)+\sum_{j\in J}\theta_{\omega j}\bar{p}\cdot \bar{y}(j)$. 
\end{claim}
\begin{proof}
Let $\Omega'_{0}\subset \Omega_0$ be the set of consumers such that their quasi-equilibrium consumption is in their quasi-demand set. Note that $\mu(\Omega'_{0})=\mu(\Omega_0)$. 
Fix some $\omega\in \Omega'_0$.
Let $\hat{x}\in X_{\omega}$ be such that 
$(\hat{x}, \bar{x}(\omega))\in P_{\omega}(\bar{x},\bar{y},\bar{p})$. By \cref{delta0claim}, pick $z_{\omega}\in \delta_{\omega}(\bar{p})$. 
Thus, we have $\bar{p}\cdot \hat{x}\geq \bar{p}\cdot e(\omega)+\sum_{j\in J}\theta_{\omega j}\bar{p}\cdot \bar{y}(j)$ since $(\bar{x},\bar{y},\bar{p})$ is a quasi-equilibrium. As $P_{\omega}(\bar{x},\bar{y},\bar{p})$ is continuous, there exists $\lambda\in (0,1)$ such that 
$(\lambda z_{\omega}+(1-\lambda)\hat{x}, \bar{x}(\omega)) \in P_{\omega}(\bar{x},\bar{y},\bar{p})$.

Assume that $\bar{p}\cdot \hat{x}=\bar{p}\cdot e(\omega)+\sum_{j\in J}\theta_{\omega j}\bar{p}\cdot \bar{y}(j)$. 
Then we have $(\lambda z_{\omega}+(1-\lambda)\hat{x}, \bar{x}(\omega)) \in P_{\omega}(\bar{x},\bar{y},\bar{p})$ and $\lambda z_{\omega}+(1-\lambda)\hat{x}\in \delta_{\omega}(\bar{p})$. 
This furnishes us a contradiction since $(\bar{x},\bar{y},\bar{p})$ is a quasi-equilibrium. So we have $\bar{p}\cdot \hat{x}>\bar{p}\cdot e(\omega)+\sum_{j\in J}\theta_{\omega j}\bar{p}\cdot \bar{y}(j)$.  
\end{proof}
As each consumer $\omega\in \Omega_0$ has a strongly monotone preference on the commodity $s$ and the projection $\pi_{s}(X_{\omega})$ is unbounded, by \cref{omega0equil}, we conclude that $\bar{p}_{s}>0$. 

\begin{claim}\label{deltaclaim}
For almost all $\omega\in \Omega$, $\delta_{\omega}(\bar{p})\neq \emptyset$. 
\end{claim}
\begin{proof}
Note that, for almost all $\omega\in \Omega$, there is an open set $V_{\omega}$ containing the $s$-th coordinate $e(\omega)_{s}$ of $e(\omega)$ such that $(e(\omega)_{-s},v)\in X_{\omega} - \sum_{j\in J}\theta_{\omega j}Y_j$ for all $v\in V_{\omega}$.
As $\bar{p}_{s}>0$, for almost all $\omega\in \Omega$, we can pick $u_{\omega}\in \Reals^{\ell}$ such that $\bar{p}\cdot u_{\omega}<0$ and that $(e(\omega) + u_{\omega}) \in (X_{\omega}-\sum_{j\in J}\theta_{\omega j}Y_j)$.
Thus, for almost all $\omega\in \Omega$, we have
\[
\bar{p}\cdot \tilde x_{\omega} < \bar{p}\cdot e(\omega)+\sum_{j\in J}\theta_{\omega j}\bar{p}\cdot \bar{y}(j) \nonumber
\]
for some $\tilde x_\omega\in X_{\omega}$.
So we have $\delta_{\omega}(\bar{p})\neq\emptyset$ for almost all $\omega\in \Omega$.  
\end{proof}

We now show that $(\bar{x},\bar{y},\bar{p})$ is an equilibrium. The proof is similar to the proof of \cref{omega0equil}. 
For almost all $\omega\in \Omega$, by \cref{deltaclaim},
pick $z_{\omega}\in \delta_{\omega}(\bar{p})$ and $\hat{x}_{\omega}\in X_{\omega}$ such that $(\hat{x}_{\omega}, \bar{x}(\omega))\in P_{\omega}(\bar{x},\bar{y},\bar{p})$. 
Hence, 
we have $\bar{p}\cdot \hat{x}_{\omega}\geq \bar{p}\cdot e(\omega)+\sum_{j\in J}\theta_{\omega j}\bar{p}\cdot \bar{y}(j)$ since $(\bar{x},\bar{y},\bar{p})$ is a quasi-equilibrium. As $P_{\omega}(\bar{x},\bar{y},\bar{p})$ is continuous, there exists $\lambda\in (0,1)$ such that 
$(\lambda z_\omega+(1-\lambda)\hat x_\omega, \bar{x}(\omega)) \in P_{\omega}(\bar{x},\bar{y},\bar{p})$.
Assume that $\bar{p}\cdot \hat{x}_{\omega}=\bar{p}\cdot e(\omega)+\sum_{j\in J}\theta_{\omega j}\bar{p}\cdot \bar{y}(j)$. 
Then we have $(\lambda z_\omega+(1-\lambda)\hat x_\omega, \bar{x}(\omega)) \in P_{\omega}(\bar{x},\bar{y},\bar{p})$ and $\lambda z_\omega+(1-\lambda)\hat x_\omega\in \delta_{\omega}(\bar{p})$. 
This furnishes us a contradiction since $(\bar{x},\bar{y},\bar{p})$ is a quasi-equilibrium. Therefore, we have $\bar{p}\cdot \hat x_\omega>\bar{p}\cdot e(\omega)+\sum_{j\in J}\theta_{\omega j}\bar{p}\cdot \bar{y}(j)$. 
Hence, $(\bar{x},\bar{y},\bar{p})$ is an equilibrium. 
\end{proof}

\subsection{Equilibrium Existence for Weighted Production Economy}\label{secpfweight}

In this section, we provide a rigorous proof of \cref{finitemainexpd}, hence establishing the existence of equilibrium for weight production economies.
We first recall the definitions of attainable production plans and attainable consumption sets.
\begin{definition}\label{def_attainable}
The set $\hat Y_j$ of attainable production plans for the $j$-th producer is the projection of the set $\cO$ of the attainable consumption-production pairs to $Y_j$:
\[
\hat Y_j=\left\{y_j\in Y_j : \exists (x, y')\in \prod_{\omega\in \Omega}\NNReals^{\ell}\times \prod_{i\neq j}Y_{i}, \sum_{\omega\in \Omega} x_{\omega}\mu(\{\omega\})-\sum_{\omega\in \Omega} e(\omega) \mu(\{\omega\})- y_j-\sum_{i\neq j}{y'}(i)=0\right\}.\nonumber
\] 
The set $\hat{X}_{i}$ of attainable consumption for the $i$-th consumer is the projection of the set $\cO$ to $X_i$. In particular, $\hat{X}_{i}$ is given by:
\[
\left\{x_i\in X_i : \exists (x', y)\in \prod_{\omega\neq i}\NNReals^{\ell}\times \prod_{j\in J}Y_{j}, x_{i}\mu(\{i\})+\sum_{\omega\neq i} x'_{\omega}\mu(\{\omega\})-\sum_{\omega\in \Omega} e(\omega) \mu(\{\omega\})-\sum_{j\in J}y(i)=0\right\}.\nonumber
\]
\end{definition}
\cref{finitemainexpd} is closely related to Proposition 3.2.3 in \citet{fl03bk}.
The proof of \cref{finitemainexpd} is broken into the following three steps: 
\begin{enumerate}[leftmargin=*]
    \item We first consider the unweighted production economy;
    \item We then consider weighted production economies with positive weights;
    \item We finally prove \cref{finitemainexpd} for general weighted production economy. 
\end{enumerate}

\begin{proof}[Proof of the Unweighted Case]
We first consider the unweighted production economy $\mathcal{F}=\{(X, \succ_{\omega}, P_\omega, e_\omega, \theta_\omega)_{\omega\in \Omega}, (Y_j)_{j\in J}\}$.
In this case, \cref{finitemainexpd} is similar to Proposition 3.2.3 in \citet{fl03bk}. 
For every $\omega\in \Omega$, let $P'_{\omega}: \prod_{i\in \Omega}\NNReals^{\ell}\times Y\times \Delta\cto X_\omega$ be 
\[
P'_{\omega}(x,y,p)=\{a\in X_\omega|(a, x_\omega)\in P_{\omega}(x,y,p)\}.\nonumber
\]
Note that $P'_{\omega}$ is lower hemicontinuous since $P_{\omega}$ is continuous. As $P_{\omega}$ takes value in $\mathcal{P}_{H}$, we have $x_{\omega}\not\in \conv(P'_{\omega}(x,y,p))$ for all 
$(x,y,p)\in \prod_{i\in \Omega}\NNReals^{\ell}\times Y\times \Delta$ and all $\omega\in \Omega$. By \cref{satiate}, we have $\bigcap_{p\in \Delta}P'_{\omega}(x,y,p)\neq \emptyset$ for all $(x,y)\in \cO$ with $x_\omega\in X_\omega$. By the second bullet of \cref{cdexplain} in \cref{assumptionsurvival}, $e(\omega)\in X_{\omega}-\sum_{j\in J}\theta_{\omega j}Y_j$ for all $\omega\in \Omega$. 

\begin{claim}\label{thm_attainable}
$\hat{X}_{\omega}$ is compact for all $\omega\in \Omega$ and $\hat{Y}_{j}$ is relatively compact for all $j\in J$.
\end{claim}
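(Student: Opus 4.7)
The plan is to use the standard asymptotic (recession) cone argument from general equilibrium theory, leveraging the assumptions $\bar Y\cap (-\bar Y)=\bar Y\cap \NNReals^\ell=\{0\}$. Let $E=\sum_{\omega\in\Omega}e(\omega)\mu(\{\omega\})$. For any $(x,y)\in\cO$, market clearing together with $x_\omega\in\NNReals^\ell$ gives $\sum_j y_j=\sum_\omega x_\omega\mu(\{\omega\})-E\geq -E$ coordinatewise. Note in particular that $\bar Y\cap\NNReals^\ell=\{0\}$ forces $0\in\bar Y$.

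I would first establish that the aggregate set $\{\sum_j y_j : (x,y)\in\cO\}$ is bounded. Suppose for contradiction that some sequence $z^n\in\bar Y$ with $z^n+E\geq 0$ satisfies $\|z^n\|\to\infty$. By closedness and convexity of $\bar Y$, after passing to a subsequence, $z^n/\|z^n\|\to\bar z$ with $\|\bar z\|=1$, and $\bar z$ is a recession direction of $\bar Y$. Since $z^n/\|z^n\|+E/\|z^n\|\geq 0$ and $E/\|z^n\|\to 0$, we have $\bar z\in\NNReals^\ell$. Because $0\in\bar Y$ and $\bar z$ is a recession direction, $t\bar z\in\bar Y\cap\NNReals^\ell=\{0\}$ for every $t>0$, giving $\bar z=0$, which contradicts $\|\bar z\|=1$. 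Hence $\sum_j y_j$ is bounded on $\cO$ by some constant $B$.

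Boundedness of each $\hat X_\omega$ is then immediate: for $(x,y)\in\cO$, nonnegativity yields $x_\omega\mu(\{\omega\})\leq\sum_i x_i\mu(\{i\})=E+\sum_j y_j$, so $x_\omega\leq (E+B)/\mu(\{\omega\})$ coordinatewise. Closedness of $\hat X_\omega$ follows from closedness of $X_\omega$ together with a standard diagonal argument: from a convergent sequence in $\hat X_\omega$ with bounded witnesses for the remaining consumers and producers, extract convergent subsequences and pass to the limit using closedness of $\bar Y$.

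For the relative compactness of each $\hat Y_j$, an analogous argument uses $\bar Y\cap(-\bar Y)=\{0\}$. Assume for contradiction an attainable sequence $y^n=(y^n_1,\dotsc,y^n_{|J|})$ with $\lambda_n:=\max_i\|y^n_i\|\to\infty$. Normalize and extract limits $\bar y_i$ with $\max_i\|\bar y_i\|=1$. Using the just-established boundedness of the aggregate, $\sum_i y^n_i/\lambda_n\to 0$, so $\sum_i\bar y_i=0$, and consequently $\bar y_j=-\sum_{i\neq j}\bar y_i$ for every $j$. The key step is to show that each $\bar y_j$ lies in the recession cone of $\bar Y$; once this is in hand, $-\bar y_j$ is also a recession direction, and since $0\in\bar Y$, this forces $t\bar y_j\in\bar Y\cap(-\bar Y)=\{0\}$ for all $t>0$, so $\bar y_j=0$ for every $j$, contradicting the normalization. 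The main obstacle is justifying this recession-cone membership for each individual $\bar y_j$ when only the aggregate $\bar Y=\sum_jY_j$ is assumed convex; this is handled by exploiting the Minkowski-sum structure and carefully taking convex combinations of partial sums, in line with the treatment in \citet{fl03bk}.
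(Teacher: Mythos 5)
Your argument is correct, but it is worth noting that the paper's own proof is essentially a two-line verification of hypotheses followed by a citation: it checks that $\rc(\bar X)\cap(-\rc(\bar X))=\{0\}$, $\rc(\bar X)\cap\rc(\bar Y)=\{0\}$ and $\rc(\bar Y)\cap(-\rc(\bar Y))=\{0\}$ and then invokes Proposition~2.2.4 of \citet{fl03bk}, whereas you reconstruct the content of that proposition from first principles. Your reconstruction is sound: the aggregate-boundedness step, the coordinatewise bound on $x_\omega$, and the closedness of $\hat X_\omega$ via closedness of $\bar Y$ are all fine (the division by $\mu(\{\omega\})$ is harmless here because the claim is invoked only in the unweighted case, where all weights equal one). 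The one step you flag but do not execute --- showing that each individual limit direction $\bar y_j$ lies in $\rc(\bar Y)$ when only the aggregate $\bar Y$ is assumed closed and convex --- is genuinely the crux, and the standard resolution is slightly different from what you describe: one does not need convex combinations, but rather fixed reference productions. Pick $y_i^0\in Y_i$ for each $i$ (nonempty by assumption); then for any subset $S\subset J$ the point $\sum_{i\in S}y_i^n+\sum_{i\notin S}y_i^0$ lies in $\bar Y$, and dividing by $\lambda_n$ and letting $n\to\infty$ shows $\sum_{i\in S}\bar y_i\in\rc(\bar Y)$ for every $S$. Taking $S=\{j\}$ and $S=J\setminus\{j\}$, and using $\sum_i\bar y_i=0$ together with $\rc(\bar Y)\cap(-\rc(\bar Y))\subset\bar Y\cap(-\bar Y)=\{0\}$ (valid since $0\in\bar Y$), yields $\bar y_j=0$ for all $j$, contradicting the normalization. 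With that step filled in, your proof is a complete, self-contained substitute for the citation; what it buys is independence from the external reference, at the cost of length.
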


\begin{proof}
For any set $B\subset \Reals^{\ell}$, let $\rc(B)$ denote the recession cone of $B$. Note that $\bar{X}=\sum_{\omega\in \Omega}X_\omega$ is a convex subset of $\NNReals^{\ell}$,  hence $\rc(\bar{X})\subset \NNReals^{\ell}$. Thus, we have $\rc(\bar{X})\cap (-\rc(\bar{X}))=\{0\}$. As $\bar{Y}\cap \NNReals^{\ell}=\{0\}$, we have $\rc(\bar{X})\cap \rc(\bar{Y})=\{0\}$. By Proposition 2.2.4 in \citet{fl03bk}, $\hat{X}_{\omega}$ is compact for every $\omega\in \Omega$. 
Note that $\bar{Y}\cap (-\bar{Y})=\{0\}$ implies that $\rc(\bar{Y})\cap (-\rc(\bar{Y}))=\{0\}$. By Proposition 2.2.4 in \citet{fl03bk} again, $\hat{Y}_{j}$ is relatively compact for every $j\in J$. 
\end{proof}
By Proposition 3.2.3 in \citet{fl03bk}, we conclude that $\mathcal{F}$ has a quasi-equilibrium $(\bar{x}, \bar{y}, \bar{p})\in \cA\times Y\times \Delta$. 
\end{proof}

We consider weighted production economies such that each consumer's weight is positive:
\begin{proof}[Positive weighted production economy:]
Let $\mu_{\omega}=\mu(\{\omega\})$ for $\omega\in \Omega$. 
Note that $\mu_{\omega}>0$. We consider the unweighted production economy $\mathcal{E}'=\{(X', \succ'_{\omega}, P'_\omega, e'_\omega, \theta'_\omega)_{\omega\in \Omega}, (Y_j)_{j\in J}\}$: 
\begin{itemize}[leftmargin=*]
    \item $\Omega$ is a finite set of consumers, and $J$ is a finite set of producers;
    \item for $\omega\in \Omega$, $X'_\omega=\mu_{\omega}X_{\omega}$ is the consumption set. Let $\mathcal{A}'=\prod_{\omega\in \Omega}X'_\omega$;
    \item $Y_j$ is the production set for producer $j$;
    \item We only provide a rigorous definition of the induced preference map $P'_{\omega}$.\footnote{The consumer's global preference relation $\succ'_{\omega}$ is defined similarly. To establish the existence of an equilibrium, it is sufficient to work with the preference map $P'_{\omega}$.} $P'_{\omega}: \prod_{i\in \Omega}\NNReals^{\ell}\times Y\times \Delta\to \mathcal{P}$ is the preference map for consumer $\omega$ such that $P'_{\omega}(x',y,p)=\mu_{\omega}P_{\omega}(x,y,p)$ where $x_{i}=\frac{x'_i}{\mu_{i}}$ for all $i\in \Omega$. Then $P'_{\omega}$ is a continuous function from $\prod_{i\in \Omega}\NNReals^{\ell}\times Y\times \Delta$ to $\mathcal{P}_{H}$;
    \item $\theta'_{\omega}=\mu_{\omega}\theta_{\omega}$ is the share for consumer $\omega$. It is clear that $\theta'_{\omega}\in \NNReals^{|J|}$ and $\sum_{k\in \Omega}\theta'_{kj}=\sum_{k\in \Omega}\mu_{k}\theta_{kj}=1$ for all $j\in J$; 
    \item $e'_{\omega}=\mu_{\omega}e_{\omega}$ is the initial endowment of consumer $\omega$. In addition, we have
    \[
    e'_{\omega}=\mu_{\omega}e_{\omega}\in \mu_{\omega}X_{\omega}-\sum_{j\in J}\mu_{\omega}\theta_{\omega j}Y_j=X'_\omega-\sum_{j\in J}\theta'_{\omega j}Y_j.\nonumber
    \]
\end{itemize}
Clearly, $\bar{Y}$ is closed, convex, and $\bar Y \cap (-\bar Y)=\{0\}=\bar Y \cap \NNReals^{\ell}$. Let
\[
\cO'=\left\{(x',y')\in \prod_{\omega\in \Omega}\NNReals^{\ell}\times Y: \sum_{\omega\in \Omega} x'_{\omega}-\sum_{\omega\in \Omega}e'(\omega)-\sum_{j\in J}{y'}(j)=0\right\}.\nonumber
\]
Note that $P'_{\omega}$ takes value in $\mathcal{P}_{H}$ for all $\omega\in \Omega$. 

\begin{claim}\label{inzweightclaim}
For each $(x',y')\in \cO'$ with $x'_{\omega}\in X'_\omega$, there exists $u\in X'_{\omega}$ such that $(u, x'_\omega)\in \bigcap_{p\in \Delta}P'_{\omega}(x',y',p)$. 
\end{claim}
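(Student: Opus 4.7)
The plan is to transfer the non-satiation hypothesis from the original weighted economy $\mathcal{E}$ (\cref{satiate} in \cref{finitemainexpd}) to the rescaled unweighted economy $\mathcal{E}'$ by simply undoing the rescaling, applying the hypothesis, and then rescaling back.

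First I would construct the candidate pre-image in $\mathcal{E}$. Given $(x',y')\in \cO'$ with $x'_{\omega}\in X'_{\omega}$, define $x\in \prod_{i\in \Omega}\NNReals^{\ell}$ by $x_i \defas x'_i/\mu_i$ for each $i\in \Omega$ (which is well-defined since $\mu_i>0$). Then $x_\omega \in X_\omega$ because $x'_\omega = \mu_\omega x_\omega \in \mu_\omega X_\omega = X'_\omega$.

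Next I would verify that $(x,y')\in \cO$. By the definition of $\cO'$,
\[
0 = \sum_{\omega\in \Omega} x'_\omega - \sum_{\omega\in \Omega} e'(\omega) - \sum_{j\in J} y'(j) = \sum_{\omega\in \Omega} \mu_\omega x_\omega - \sum_{\omega\in \Omega} \mu_\omega e(\omega) - \sum_{j\in J} y'(j),
\]
which is exactly the market-clearing identity $\int_\Omega x(\omega)\mu(\dee\omega) - \int_\Omega e(\omega)\mu(\dee\omega) - \sum_{j\in J} y'(j) = 0$ defining $\cO$. Hence $(x,y')\in \cO$ with $x_\omega \in X_\omega$.

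Then I would invoke \cref{satiate} of \cref{finitemainexpd} applied to $\mathcal{E}$: there exists $v\in X_\omega$ with $(v,x_\omega)\in \bigcap_{p\in\Delta}P_\omega(x,y',p)$. Finally I would set $u\defas \mu_\omega v \in \mu_\omega X_\omega = X'_\omega$ and unwind the definition $P'_\omega(x',y',p)=\mu_\omega P_\omega(x,y',p)$ (i.e., rescale both the consumption set and the binary relation by $\mu_\omega$) to conclude $(u,x'_\omega)=(\mu_\omega v, \mu_\omega x_\omega)\in P'_\omega(x',y',p)$ for every $p\in \Delta$, as required.

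The argument is essentially bookkeeping; the only delicate point is to keep the rescaling of the \emph{preference} (not just of the consumption set) consistent with the rescaling of the arguments fed into it. Concretely, one has to verify that the definition $P'_\omega(x',y',p)=\mu_\omega P_\omega(x,y',p)$ with $x_i = x'_i/\mu_i$ makes $(a,b)\in P'_\omega(x',y',p)$ equivalent to $(a/\mu_\omega, b/\mu_\omega)\in P_\omega(x,y',p)$, so that membership in the intersection over $p\in\Delta$ is preserved under the scaling. Once this identification is fixed, the proof is a direct two-line substitution.
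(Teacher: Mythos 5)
Your proposal is correct and follows exactly the paper's own argument: divide by $\mu_\omega$ to pull $(x',y')$ back to an attainable pair $(x,y')\in\cO$ of the weighted economy, apply the non-satiation hypothesis there to get $v$, and multiply by $\mu_\omega$ to obtain $u$. The only difference is that you spell out the market-clearing verification and the compatibility of the preference rescaling, both of which the paper leaves implicit.
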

\begin{proof}
Pick $(x',y')\in \cO'$ with $x'_{\omega}\in X'_{\omega}$. 
Let $x_{\omega}=\frac{1}{\mu_{\omega}}x'_{\omega}$. 
Then, we have $(x,y')\in \cO$ with $x_{\omega}\in X_\omega$. 
There exists $v\in X_\omega$ such that $(v, x_\omega)\in \bigcap_{p\in \Delta}P_{\omega}(x,y',p)$. 
Let $u=\mu_{\omega}v$. 
Then $u\in X'_{\omega}$ and $(u, x'_\omega)\in \bigcap_{p\in \Delta}P'_{\omega}(x',y',p)$.
\end{proof}
Hence, there is a quasi-equilibrium $(\bar{x}',\bar{y},\bar{p})$ for the unweighted production economy $\mathcal{E}'$. Let $\bar{x}\in X$ be such that $\bar{x}_{\omega}=\frac{\bar{x}'_{\omega}}{\mu_{\omega}}$. Clearly, we have $(\bar{x},\bar{y},\bar{p})\in \mathcal{A}\times Y\times \Delta$, where $\mathcal{A}=\prod_{\omega\in \Omega}X_{\omega}$. 
\begin{claim}\label{indemandclaim}
$\bar{x}_{\omega}\in \bar{D}_{\omega}(\bar{x},\bar{y}, \bar{p})$ for all $\omega\in \Omega$
\end{claim}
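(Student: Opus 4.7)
The plan is to exploit the scaling between the weighted economy $\mathcal{E}$ and the unweighted economy $\mathcal{E}'$ constructed above, where $X'_{\omega}=\mu_{\omega}X_{\omega}$, $e'_{\omega}=\mu_{\omega}e_{\omega}$, $\theta'_{\omega}=\mu_{\omega}\theta_{\omega}$, and $P'_{\omega}(x',y,p)=\mu_{\omega}P_{\omega}(x,y,p)$ with $x_i=x'_i/\mu_i$. Since $(\bar{x}',\bar{y},\bar{p})$ is a quasi-equilibrium of $\mathcal{E}'$ with $\bar{x}_{\omega}=\bar{x}'_{\omega}/\mu_{\omega}$, I would verify the two defining conditions of $\bar{D}_{\omega}(\bar{x},\bar{y},\bar{p})$ by dividing each inequality in $\mathcal{E}'$ by the positive scalar $\mu_{\omega}$.

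First I would handle the budget condition: from $\bar{x}'_{\omega}\in B'_{\omega}(\bar{y},\bar{p})$, i.e., $\bar{p}\cdot\bar{x}'_{\omega}\leq \bar{p}\cdot e'_{\omega}+\sum_{j\in J}\theta'_{\omega j}\bar{p}\cdot \bar{y}(j)$, substituting the definitions of $\bar{x}'_{\omega}$, $e'_{\omega}$, $\theta'_{\omega}$ and dividing by $\mu_{\omega}>0$ yields $\bar{p}\cdot \bar{x}_{\omega}\leq \bar{p}\cdot e_{\omega}+\sum_{j\in J}\theta_{\omega j}\bar{p}\cdot \bar{y}(j)$, so $\bar{x}_{\omega}\in B_{\omega}(\bar{y},\bar{p})$.

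Next I would verify the no-cheaper-strictly-preferred-bundle condition. Suppose $w\in X_{\omega}$ satisfies $(w,\bar{x}_{\omega})\in P_{\omega}(\bar{x},\bar{y},\bar{p})$. By the defining relation $P'_{\omega}(\bar{x}',\bar{y},\bar{p})=\mu_{\omega}P_{\omega}(\bar{x},\bar{y},\bar{p})$, the scaled bundle $\mu_{\omega}w\in X'_{\omega}$ satisfies $(\mu_{\omega}w,\bar{x}'_{\omega})\in P'_{\omega}(\bar{x}',\bar{y},\bar{p})$. The quasi-equilibrium property of $\bar{x}'_{\omega}$ in $\mathcal{E}'$ then gives $\bar{p}\cdot (\mu_{\omega}w)\geq \bar{p}\cdot e'_{\omega}+\sum_{j\in J}\theta'_{\omega j}\bar{p}\cdot \bar{y}(j)=\mu_{\omega}\bigl(\bar{p}\cdot e_{\omega}+\sum_{j\in J}\theta_{\omega j}\bar{p}\cdot \bar{y}(j)\bigr)$, and dividing by $\mu_{\omega}>0$ yields the required inequality $\bar{p}\cdot w\geq \bar{p}\cdot e_{\omega}+\sum_{j\in J}\theta_{\omega j}\bar{p}\cdot \bar{y}(j)$. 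Combining the two steps establishes $\bar{x}_{\omega}\in \bar{D}_{\omega}(\bar{x},\bar{y},\bar{p})$ for every $\omega\in \Omega$.

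There is essentially no obstacle here; the only point requiring care is the bookkeeping of the scaling factor $\mu_{\omega}$, and the crucial fact that $\mu_{\omega}>0$ in this positively-weighted sub-case, which legitimizes dividing inequalities through by $\mu_{\omega}$ and lets the bijection $w\leftrightarrow \mu_{\omega}w$ transfer preference comparisons faithfully between $P_{\omega}$ and $P'_{\omega}$. Market clearing and producer profit maximization then transfer from $\mathcal{E}'$ to $\mathcal{E}$ by the analogous (and simpler) rescaling of the aggregate feasibility identity $\sum_{\omega}\bar{x}'_{\omega}-\sum_{\omega}e'_{\omega}-\sum_{j}\bar{y}(j)=0$, completing the positively-weighted case of \cref{finitemainexpd}.
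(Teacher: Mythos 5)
Your proof is correct and follows essentially the same route as the paper's own argument: verify the budget inequality by substituting $\bar{x}_{\omega}=\bar{x}'_{\omega}/\mu_{\omega}$, $e'_{\omega}=\mu_{\omega}e_{\omega}$, $\theta'_{\omega}=\mu_{\omega}\theta_{\omega}$ and dividing by $\mu_{\omega}>0$, then transfer any strictly preferred bundle $w$ to $w'=\mu_{\omega}w$ in $\mathcal{E}'$ and pull the quasi-equilibrium inequality back by the same rescaling. The only difference is cosmetic bookkeeping; no gap.
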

\begin{proof}
Clearly, we have $\bar{x}_{\omega}\in X_\omega$ and
\[
\bar{p}\cdot \bar{x}_{\omega}&=\bar{p}\cdot \frac{\bar{x}'_{\omega}}{\mu_{\omega}}
\leq \bar{p}\cdot \frac{e'(\omega)}{\mu_{\omega}}+\sum_{j\in J}\frac{\theta'_{\omega j}}{\mu_{\omega}}\bar{p}\cdot \bar{y}(j)
=\bar{p}\cdot e(\omega)+\sum_{j\in J}\theta_{\omega j}\bar{p}\cdot \bar{y}(j).\nonumber
\]
Hence, we conclude that $\bar{x}_{\omega}\in B_{\omega}(\bar{y}, \bar{p})$. 
Suppose $(w,\bar{x}_{\omega})\in P_{\omega}(\bar{x}, \bar{y}, \bar{p})$. 
Let $w'=\mu_{\omega}w$. Then, we have $(w',\bar{x}'_{\omega})\in P'_{\omega}(\bar{x}', \bar{y}, \bar{p})$. 
Hence, we have
\[
\bar{p}\cdot w&=\bar{p}\cdot \frac{w'}{\mu_{\omega}}
\geq\bar{p}\cdot \frac{e'(\omega)}{\mu_{\omega}}+\sum_{j\in J}\frac{\theta'_{\omega j}}{\mu_{\omega}}\bar{p}\cdot \bar{y}(j)
=\bar{p}\cdot e(\omega)+\sum_{j\in J}\theta_{\omega j}\bar{p}\cdot \bar{y}(j),\nonumber
\]
completing the proof. 
\end{proof}

We now show that $(\bar{x},\bar{y},\bar{p})$ is a quasi-equilibrium for $\mathcal{E}$:
\begin{itemize}[leftmargin=*]
    \item By \cref{indemandclaim}, $\bar{x}(\omega)\in \bar{D}_{\omega}(\bar{x},\bar{y}, \bar{p})$ for all $\omega\in \Omega$;
    \item $\bar{y}(j)\in S_{j}(\bar{p})$  for all $j\in J$;
    \item $\sum_{\omega\in \Omega} \bar{x}(\omega) \mu(\{\omega\})-\sum_{\omega\in \Omega}e(\omega)\mu(\{\omega\})-\sum_{j\in J} \bar{y}(j)=0$.
\end{itemize}
Thus, $(\bar{x},\bar{y},\bar{p})$ is a $\mathcal{Z}$-disposal quasi-equilibrium for $\mathcal{E}$. 
\end{proof}

We now prove the general weighted case, hence proving \cref{finitemainexpd}. 

\begin{proof}[Proof of \cref{finitemainexpd}]
Let $\Omega'=\{\omega\in \Omega: \mu(\{\omega\})>0\}$. 
For every $\omega\in \Omega\setminus \Omega'$, pick $\epsilon_{\omega}\in X_\omega$. 
For $a\in \prod_{\omega\in \Omega'}X_\omega=\mathcal{A}'$, let $E(a)\in \mathcal{A}=\prod_{\omega\in \Omega}X_\omega$ be:
\begin{equation}
  E(a)_{\omega} =
    \begin{cases}
      a_{\omega} & \text{for all $\omega\in \Omega'$}\\
      \epsilon_{\omega} & \text{for all $\omega\not\in \Omega'$}\\  
    \end{cases}\nonumber    
\end{equation}
Consider the weighted production economy $\mathcal{E}'=\{(X, \succ^{\Omega'}_{\omega}, P^{\Omega'}_\omega, e_\omega, \theta_\omega)_{\omega\in \Omega'}, (Y_j)_{j\in J}, \mu\}$ where $P^{\Omega'}_{\omega}(x,y,p)=P_{\omega}(E(x),y,p)$. It is easy to verify that $\mathcal{E}'$ satisfies all the conditions of \cref{finitemainexpd} and every consumer in $\mathcal{E}'$ has positive weight. Hence, there is a quasi-equilibrium 
$(\bar{x},\bar{y}, \bar{p})\in \mathcal{A}'\times Y\times \Delta$ for $\mathcal{E}'$. 
Then, $(E(\bar{x}), \bar{y}, \bar{p})\in \mathcal{A}\times Y\times \Delta$ is a   quasi-equilibrium for $\mathcal{E}$. By \cref{interioritylm},  $(E(\bar{x}), \bar{y}, \bar{p})$ is an equilibrium.
\end{proof}



\subsection{First Welfare Theorem for Free-disposal Equilibrium}
\label{section-First_No_Externality}

In this section, we show that, in the absence of externalities, free-disposal equilibria associated with nonnegative equilibrium prices are Pareto optimal even with the presence of bads. For simplicity, we prove the result for economies with finitely many consumers. Note that it is straightforward to generalize the following result to economies with a measure-theoretic space of consumers. 

\begin{theorem} \label{theorem-First_No_Externality}
Let $\mathcal{E}$ be a finite production economy such that each consumer's preference exhibits no externality, is negatively transitive and locally non-satiated. Let $(\bar{x},\bar{y},\bar{p})$ be a free-disposal equilibrium such that $\bar{p}\geq 0$. Then $\bar{x}$ is Pareto optimal.   
\end{theorem}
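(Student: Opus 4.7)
The plan is to run the classical First Welfare argument, but to be explicit about where local non-satiation, negative transitivity, non-negativity of $\bar p$, and free disposal in equilibrium each enter. Suppose for contradiction that $\bar x$ is not Pareto optimal, so there is a feasible pair $(x',y')$ with $x'(\omega)\succsim_\omega \bar x(\omega)$ for every $\omega$ and $x'(\omega_0)\succ_{\omega_0}\bar x(\omega_0)$ for at least one $\omega_0$. By feasibility, $\sum_\omega x'(\omega)\le \sum_\omega e(\omega)+\sum_{j\in J}y'(j)$ (inequality rather than equality because we only posit a free-disposal equilibrium, and $(x',y')$ need only be free-disposal feasible).

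The first step is the standard two-level budget comparison. For the strict preferer $\omega_0$, equilibrium optimality at $\bar p$ gives $\bar p\cdot x'(\omega_0)>\bar p\cdot e(\omega_0)+\sum_j\theta_{\omega_0 j}\,\bar p\cdot\bar y(j)$. For every other consumer $\omega$ I claim $\bar p\cdot x'(\omega)\ge \bar p\cdot e(\omega)+\sum_j\theta_{\omega j}\,\bar p\cdot\bar y(j)$. If not, then $x'(\omega)$ lies strictly inside the budget set; by local non-satiation pick $z$ near $x'(\omega)$ with $z\succ_\omega x'(\omega)$ and $\bar p\cdot z$ still strictly below the budget. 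Negative transitivity applied to $z\succ_\omega x'(\omega)$ yields either $z\succ_\omega\bar x(\omega)$ or $\bar x(\omega)\succ_\omega x'(\omega)$; the latter contradicts the weak Pareto dominance, so $z\succ_\omega\bar x(\omega)$ with $z$ in budget, contradicting $\bar x(\omega)\in D_\omega(\bar y,\bar p)$.

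Summing the resulting inequalities over consumers gives $\bar p\cdot\sum_\omega x'(\omega)>\bar p\cdot\sum_\omega e(\omega)+\sum_j\bar p\cdot\bar y(j)$, using $\sum_\omega\theta_{\omega j}=1$. Profit maximization of each $\bar y(j)$ at $\bar p$ gives $\bar p\cdot \bar y(j)\ge\bar p\cdot y'(j)$, hence $\bar p\cdot\sum_\omega x'(\omega)>\bar p\cdot\sum_\omega e(\omega)+\sum_j\bar p\cdot y'(j)$. The final step is where $\bar p\ge 0$ is essential: non-negativity of $\bar p$ combined with the free-disposal feasibility inequality for $(x',y')$ yields $\bar p\cdot\sum_\omega x'(\omega)\le \bar p\cdot\sum_\omega e(\omega)+\sum_j\bar p\cdot y'(j)$, which directly contradicts the strict inequality just derived.

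The only step that needs care is the weak-preference budget inequality; the rest is arithmetic. I expect that to be the main obstacle because the paper works with strict preferences $\succ$ rather than utility functions, so one cannot simply invoke ``utility is at least as high at $x'(\omega)$''. The fix is the local non-satiation plus negative transitivity detour sketched above, which is standard and exactly why those hypotheses are imposed in the statement. Non-negativity of $\bar p$ is used exactly once, at the very last step, to translate a physical free-disposal inequality into a value inequality; this is also precisely why the conclusion can fail when negative prices on bads are permitted.
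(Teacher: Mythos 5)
Your proposal is correct and follows essentially the same route as the paper's proof: the key step in both is establishing the budget inequality for weakly-preferring consumers by combining local non-satiation with negative transitivity, then summing, invoking profit maximization, and using $\bar p\geq 0$ against the free-disposal feasibility inequality to reach a contradiction. Your write-up is in fact slightly more careful than the paper's in explicitly separating the roles of $\bar y$ and $y'$ via profit maximization.
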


\begin{proof}
Since consumers' preferences exhibit no externality, we use $\succ_{\omega}$ to denote consumer $\omega$'s preference. 
Suppose that there is an attainable allocation $\hat{x}$ that Pareto dominates $\bar{x}$. Since $\hat{x}$ is attainable, we can choose $\hat{y}\in Y$ such that $\sum_{\omega\in \Omega}\hat{x}(\omega)-\sum_{\omega\in \Omega}e(\omega)-\sum_{j\in J} \hat{y}(j)\leq 0$. That is, $(\hat{x}, \hat{y})$ is an attainable consumption-production pair. 
Then $\bar x (\omega) \not \succ_\omega \hat x (\omega)$ for every $\omega$, and there is some $\omega_0\in \Omega$ such that $\hat{x}(\omega_0)\succ_{\omega_0} \bar{x}(\omega_0)$. As $(\bar{x}, \bar{y}, \bar{p})$ is a free-disposal equilibrium, we have $\bar{p}\cdot \hat{x}(\omega_0)>\bar{p}\cdot e(\omega_0)+\sum_{j\in J}\theta_{\omega_{0} j}\bar{p}\cdot \hat{y}(j)$. 

\begin{claim}\label{nolesshatf}
For every $\omega\in \Omega$, we have
$\bar{p}\cdot \hat{x}(\omega)\geq \bar{p}\cdot e(\omega)+\sum_{j\in J}\theta_{\omega j}\bar{p}\cdot \bar{y}(j)$. 
\end{claim}
\begin{proof}
Suppose there is a $\omega_1\in \Omega$ so that
$\bar{p}\cdot \hat{x}(\omega_{1})<\bar{p}\cdot e(\omega_{1})+\sum_{j\in J}\theta_{\omega_{1} j}\bar{p}\cdot \bar{y}(j)$.
As $\succ_{\omega_{1}}$ is locally non-satiated, there is a $u\in X_{\omega_{1}}$ such that $u\succ_{\omega_1}\hat{x}(\omega_{1})$ and
$\bar{p}\cdot u<\bar{p}\cdot e(\omega_{1})+\sum_{j\in J}\theta_{\omega_{1} j}\bar{p}\cdot \bar{y}(j)$.
Note that we have $\bar{x}(\omega_1)\not\succ_{\omega_1} \hat{x}(\omega_1)$. 
If $u\not\succ_{\omega_1} \bar{x}(\omega_1)$, by negative transitivity of $\succ_{\omega_1}$, we have $u\not\succ_{\omega_1} \hat{x}(\omega_1)$, a contradiction. Hence, we must have $u\succ_{\omega_1} \bar{x}(\omega_1)$.
This leads to a contradiction since $(\bar{x}, \bar{y}, \bar{p})$ is a free-disposal equilibrium.
\end{proof}

By \cref{nolesshatf},
$\bar{p}\cdot \hat{x}(\omega)\geq \bar{p}\cdot e(\omega)+\sum_{j\in J}\theta_{\omega j}\bar{p}\cdot \hat{y}(j)$
for all $\omega\in \Omega$. So, we have $\bar{p}\big(\sum_{\omega\in \Omega}\hat{x}(\omega)-\sum_{\omega\in \Omega}e(\omega)-\sum_{j\in J} \hat{y}(j)\big)>0$. But this is impossible since $\bar{p}\geq 0$ and $\sum_{\omega\in \Omega}\hat{x}(\omega)-\sum_{\omega\in \Omega}e(\omega)-\sum_{j\in J} \hat{y}(j)\leq 0$, contradiction. Hence, $\bar{x}$ is Pareto optimal. 
\end{proof}

\subsection{Notation from Non-standard Analysis}\label{secnspre}
In this section, we give a gentle introduction to nonstandard analysis. 
We use $\NSE{}$ to denote the nonstandard extension map taking elements, sets, functions, relations, etc., to their nonstandard counterparts.
In particular, $\HReals$ and $\NSE{\Nats}$ denote the nonstandard extensions of the reals and natural numbers, respectively.
An element $r\in \HReals$ is \emph{infinite} if $|r|>n$ for every $n\in \Nats$ and is \emph{finite} otherwise. An element $r \in \HReals$ with $r > 0$ is \textit{infinitesimal} if $r^{-1}$ is infinite. For $r,s \in \HReals$, we use the notation $r \approx s$ as shorthand for the statement ``$|r-s|$ is infinitesimal,'' and use use $r \gtrapprox s$ as shorthand for the statement ``either $r \geq s$ or $r \approx s$.''

Given a topological space $(X,\topology)$,
the monad of a point $x\in X$ is the set $\bigcap_{ U\in \topology \, : \, x \in U}\NSE{U}$.
An element $x\in \NSE{X}$ is \emph{near-standard} if it is in the monad of some $y\in X$.
We say $y$ is the standard part of $x$ and write $y=\ST(x)$. 
Note that such $y$ is unique provided that $X$ is a Hausdorff space.
The \emph{near-standard part} $\NS{\NSE{X}}$ of $\NSE{X}$ is the collection of all near-standard elements of $\NSE{X}$.
The standard part map $\ST$ is a function from $\NS{\NSE{X}}$ to $X$, taking near-standard elements to their standard parts.
In both cases, the notation elides the underlying space $Y$ and the topology $\topology$,
because the space and topology will always be clear from context.
For a metric space $(X,d)$, two elements $x,y\in \NSE{X}$ are \emph{infinitely close} if $\NSE{d}(x,y)\approx 0$.
An element $x\in \NSE{X}$ is near-standard if and only if it is infinitely close to some $y\in X$.
An element $x\in \NSE{X}$ is finite if there exists $y\in X$ such that $\NSE{d}(x,y)<\infty$ and is infinite otherwise.

Let $X$ be a topological space endowed with Borel $\sigma$-algebra $\BorelSets X$ and
let $\FM{X}$ denote the collection of all finitely additive probability measures on $(X,\BorelSets X)$.
An internal probability measure $\mu$ on $(\NSE{X},\NSE{\BorelSets X})$ is an element of $\NSE{\FM{X}}$.
The Loeb space of the internal probability space $(\NSE{X},\NSE{\BorelSets X}, \mu)$ is a countably additive probability space $(\NSE{X},\Loeb{\NSE{\BorelSets X}}, \Loeb{\mu})$ such that
$
\Loeb{\NSE{\BorelSets X}}=\{A\subset \NSE{X}|(\forall \epsilon>0)(\exists A_i,A_o\in \NSE{\BorelSets X})(A_i\subset A\subset A_o\wedge \mu(A_o\setminus A_i)<\epsilon)\}
$ and
$
\Loeb{\mu}(A)=\sup\{\ST(\mu(A_i))|A_i\subset A,A_i\in \NSE{\BorelSets X}\}=\inf\{\ST(\mu(A_o))|A_o\supset A,A_o\in \NSE{\BorelSets X}\}.
$

Every standard model is connected to its nonstandard extension via the \emph{transfer principle}, which asserts that a first order statement is true in the standard model if and only if it is true in the nonstandard model.
Given a cardinal number $\kappa$, a nonstandard model is $\kappa$-saturated if the following condition holds:
Let $\cF$ be a family of internal sets with cardinality less than $\kappa$. If $\cF$ has the finite intersection property, then the total intersection of $\cF$ is non-empty. In this paper, we assume our nonstandard model is as saturated as we need.\footnote{see \textit{e.g.} {\citet[][Thm.~1.7.3]{NSAA97}} for the existence of $\kappa$-saturated nonstandard models for any uncountable cardinal $\kappa$.}

\subsubsection{Loeb Probability Space}
In this section, we provide a brief introduction of Loeb spaces introduced by \citet{Loeb75}. 
We focus on hyperfinite probability spaces and their corresponding Loeb spaces. 

A hyperfinite set $S$ is equipped with the internal algebra $I(S)$, consisting of all internal subsets of $S$. 
Let $P$ be an internal probability measure on $S$. We use $(S, \Loeb{I(S)}, \Loeb{P})$ to denote the Loeb probability space generated from $(S, I(S), P)$.

\begin{definition}\label{deflifting}
Let $(S, I(S), P)$ be a hyperfinite probability space, and $(S, \Loeb{I(S)}, \Loeb{P})$ be the Loeb space. 
Let $X$ be a Hausdorff topological space, and $f$ be a Loeb measurable function from $S$ to $X$. 
An internal function $F: S\to \NSE{X}$ is a lifting of $f$ provided that $f(s)=\ST(F(s))$ for $\Loeb{P}$-almost all $s\in S$. 
\end{definition}

\begin{lemma}[{\citep[][Section.~4, Corollary.~5.1]{NSAA97}}]
Every Loeb measurable function into a second countable topological space has a lifting. 
\end{lemma}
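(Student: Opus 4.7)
The plan is to follow the classical three-stage template for lifting theorems, reducing the general case to liftings of indicator functions and then patching countably many simple-function liftings by means of $\aleph_{1}$-saturation.

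\emph{Stage 1 (indicators and simple functions).} For any $A \in \Loeb{I(S)}$, the definition of the Loeb extension yields an internal $B \in I(S)$ with $\Loeb{P}(A \triangle B) = 0$, so $\mathbf{1}_{B}$ is an internal lifting of $\mathbf{1}_{A}$. If $g = \sum_{i=1}^{n} x_{i}\,\mathbf{1}_{A_{i}}$ is a Loeb-measurable simple function with values $x_{i} \in X$ and $\{A_{i}\}$ a partition of $S$, apply this to each $A_{i}$ and disjointify to obtain an internal partition $\{B_{i}\}$ of $S$ with $\Loeb{P}(A_{i} \triangle B_{i}) = 0$; then $G := \sum_{i} x_{i}\,\mathbf{1}_{B_{i}}$ is internal (the partition is finite and the values $x_{i}$ are standard elements of $\NSE{X}$) and is a lifting of $g$.

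\emph{Stage 2 (approximation).} In the intended applications $X$ is metrizable (indeed Polish), so fix a compatible metric $d$ and a countable dense set $\{y_{m}\} \subset X$. Define Loeb-measurable simple approximations $g_{k}(s) := y_{m(s,k)}$, where $m(s,k)$ is the least $m \leq k$ with $d(f(s), y_{m}) < 1/k$ (assigning a default value on the negligible set where no such $m$ exists). Then $d(g_{k}(s), f(s)) \to 0$ $\Loeb{P}$-almost surely. Stage 1 provides internal liftings $G_{k}: S \to \NSE{X}$, and by Egorov-type bookkeeping one arranges $\Loeb{P}\{s : \NSE{d}(G_{k}(s), f(s)) < 2/k\} > 1 - 2^{-k}$ for each $k \in \Nats$.

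\emph{Stage 3 (saturation and overspill).} By $\aleph_{1}$-saturation, the countable internal sequence $(G_{k})_{k \in \Nats}$ extends to an internal sequence $(G_{k})_{k \leq K}$ indexed up to some infinite $K \in \NSE{\Nats}$. The internal function $\Phi(k) := \NSE{P}\{s : \NSE{d}(G_{K}(s), G_{k}(s)) < 2/k\}$ satisfies $\Phi(k) > 1 - 2^{-k}$ for every standard $k$, so overspill yields an infinite $K_{0} \leq K$ with the same bound. Combining with Stage 2 gives $\NSE{d}(G_{K_{0}}(s), f(s)) \approx 0$ for $\Loeb{P}$-a.e.~$s$, whence $\ST(G_{K_{0}}(s)) = f(s)$ almost surely, and $F := G_{K_{0}}$ is the desired internal lifting.

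The main obstacle is Stage 3: patching the countable family of simple-function liftings into a single internal function that is infinitesimally close to $f$ on a Loeb-full set. Everything else is routine once the approximation scheme is in place. For second countable $X$ that are not metrizable, Stages 2--3 can be replaced by a direct basis construction: fix a countable basis $\{U_{n}\}$ of $X$, lift each $f^{-1}(U_{n})$ to some $A_{n} \in I(S)$, extend $(A_{n})_{n \in \Nats}$ to a hyperfinite internal sequence $(A_{n})_{n \leq N}$, and define $F(s)$ by internal choice in $\bigcap\{\NSE{U_{n}} : n \leq N,\ s \in A_{n}\}$ on the complement of the countable Loeb-null union of symmetric differences $f^{-1}(U_{n}) \triangle A_{n}$.
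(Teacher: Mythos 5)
The paper does not prove this lemma; it simply cites it from \citet{NSAA97}, so there is no in-paper proof to compare against, and your proposal must stand on its own. Its overall architecture --- lift indicators, hence simple functions; approximate $f$ by simple functions; patch the countable family of liftings via saturation and overspill --- is indeed the standard route. But Stage 3 as written has a genuine gap. After extending $(G_k)_{k\in\Nats}$ to an internal sequence $(G_k)_{k\le K}$ by saturation, you assert that $\Phi(k)=\NSE{P}\{s:\NSE{d}(G_K(s),G_k(s))<2/k\}$ exceeds $1-2^{-k}$ for every standard $k$. Nothing supports this: saturation only forces the extended sequence to agree with the given one at standard indices, and its terms at nonstandard indices --- in particular $G_K$ itself --- are completely arbitrary, with no relation to $f$ or to the standard $G_k$. (Note also that if the asserted bound on $\Phi$ were available, you would already be done with $F=G_K$ by Borel--Cantelli, and the overspill step would be superfluous; as actually applied, overspill on the index $k$ of $\Phi$ would only produce an infinite $k_0$ with $G_K\approx G_{k_0}$ a.e., which says nothing about closeness to $f$.) The correct patching argument runs through an internal condition involving only pairs of constructed liftings: Stage 2 gives, for all standard $j\le k$, that $\NSE{P}\{s:\NSE{d}(G_j(s),G_k(s))<2/j+2/k\}>1-2^{-j}-2^{-k}$, a statement not mentioning $f$. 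The internal set of $K'\le K$ for which this holds for all $j\le k\le K'$ contains every standard $K'$, so by overspill it contains an infinite $K_0$; then $G_{K_0}$ is within roughly $2/j$ of $G_j$ off a set of measure at most $2^{-j}$ for every standard $j$, and combining with Stage 2 yields $\ST(G_{K_0}(s))=f(s)$ a.e.

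Two smaller points. In Stage 2, the rule ``least $m\le k$ with $d(f(s),y_m)<1/k$'' does not give $g_k(s)\to f(s)$ pointwise: the least index of a $1/k$-approximant of $f(s)$ may grow faster than $k$, so the default value can recur for infinitely many $k$. The standard fix is to choose, for each $k$, a finite index set whose $1/k$-balls cover all but $2^{-k}$ of the measure of $S$ under $f$; this is presumably what your ``Egorov-type bookkeeping'' intends, and the stated output of Stage 2 is then correct. Finally, the closing sketch for non-metrizable second countable spaces suffers from the same defect as Stage 3: ``internal choice in $\bigcap\{\NSE{U_n}:n\le N,\ s\in A_n\}$'' ranges over an intersection that includes arbitrary internal sets $A_n$ at infinite indices $n\le N$ and may be empty, while restricting to standard $n$ makes the family external, so the selection must again be organized through a saturation or overspill argument rather than simply declared.
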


We now introduce the S-integrability notion, which guarantees that the Loeb integral of a Loeb integrable function almost agrees with the internal integral of its lifting. 

\begin{definition}\label{defsint}
Let $(S, I(S), P)$ be a hyperfinite probability space, and $(S, \Loeb{I(S)}, \Loeb{P})$ be the corresponding Loeb space. 
Let $F: S\to \NSE{\Reals}$ be an internally integrable function such that $\ST(F)$ exists $\Loeb{P}$-almost surely. 
$F$ is said to be S-integrable if $\ST(F)$ is $\Loeb{P}$-integrable and $\int |F|(s) P(\dee s)\approx \int \ST(|F|)(s) \Loeb{P}(\dee s)$. 
\end{definition}

\begin{theorem}[{\citep[][Section.~4, Theorem.~6.2]{NSAA97}}]\label{sintmain}
Let $(S, I(S), P)$ be a hyperfinite probability space, and $(S, \Loeb{I(S)}, \Loeb{P})$ be the Loeb space. 
Let $F: S\to \NSE{\Reals}$ be an internally integrable function such that $\ST(F)$ exists $\Loeb{P}$-almost surely. 
The following are equivalent: 
\begin{enumerate}[(i), leftmargin=*]
    \item $F$ is S-integrable;
    \item $\ST(\int |F(s)| P(\dee s))$ exists and equals to $\lim_{n\to \infty} \ST(\int |F_{n}(s)| P(\dee s))$ (where for $n\in \Nats$, $F_n=\min\{F,n\}$ when $F\geq 0$ and $F_n=\max\{F,-n\}$ when $F<0$);
    \item For every infinite $K>0$, $\int_{|F|>K}|F(s)|P(\dee s)\approx 0$;
    \item $\ST(\int |F(s)| P(\dee s))$ exists, and $\int_{B}|F(s)|P(\dee s)\approx 0$ for all $B$ with $P(B)\approx 0$.
\end{enumerate}
\end{theorem}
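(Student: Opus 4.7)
My plan is to prove the cycle $(i) \Rightarrow (ii) \Rightarrow (iii) \Rightarrow (i)$ together with the equivalence $(iii) \Leftrightarrow (iv)$. Throughout I reduce to the case $F \geq 0$ by splitting $F = F^{+} - F^{-}$, since each piece inherits the hypotheses. The workhorse lemma, which I establish first, states that for every standard $n \in \Nats$ the clipped function $F_n = \min(F,n)$ is internally bounded by $n$, and hence $\ST\!\left(\int F_n\, dP\right) = \int \ST(F_n)\, d\Loeb P = \int (\ST F)_n\, d\Loeb P$. This bounded-case identity follows from approximation by internal simple functions and the definition of the Loeb integral.

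For $(iii) \Leftrightarrow (iv)$, the direction $(iv) \Rightarrow (iii)$ uses Markov's inequality: for infinite $K$, $P(|F|>K) \leq \int |F|\, dP / K \approx 0$ because $\ST\!\left(\int |F|\, dP\right)$ is finite by (iv), so (iv) applied to $B = \{|F|>K\}$ gives $\int_{|F|>K}|F|\, dP \approx 0$. Conversely, given (iii), finiteness of $\ST\!\left(\int |F|\, dP\right)$ follows by underspill on the internal set $\{n \in \NSE\Nats : \int_{|F|>n}|F|\, dP < 1\}$, which contains all infinite $n$ and hence some standard $n_0$, yielding $\int |F|\, dP \leq n_0 + 1$. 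For any $B$ with $P(B) \approx 0$, I invoke saturation to pick an infinite $K$ with $K \cdot P(B) \approx 0$ (possible since $P(B)$ is infinitesimal) and estimate $\int_{B}|F|\, dP \leq K P(B) + \int_{|F|>K}|F|\, dP \approx 0$.

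The core of the argument is $(iii) \Rightarrow (i)$. Under (iii), for each standard $\epsilon > 0$ the internal set $\{n : \int_{|F|>n}(|F|-n)\, dP < \epsilon\}$ contains every infinite $n$, so by underspill it contains some standard $n_0$; therefore $\int |F|\, dP - \int |F|_{n_0}\, dP < \epsilon$. Combined with the bounded-case identity $\ST\!\left(\int |F|_{n_0}\, dP\right) = \int (\ST|F|)_{n_0}\, d\Loeb P$ and ordinary monotone convergence $\int (\ST|F|)_{n_0}\, d\Loeb P \uparrow \int \ST|F|\, d\Loeb P$ as $n_0 \to \infty$, letting $\epsilon \downarrow 0$ through standard values gives $\ST|F|$ Loeb-integrable with $\int \ST|F|\, d\Loeb P = \ST\!\left(\int |F|\, dP\right)$, which is exactly S-integrability. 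The implication $(i) \Rightarrow (ii)$ is then immediate from the definition of S-integrability combined with the bounded-case identity and monotone convergence on the standard side.

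Finally, for $(ii) \Rightarrow (iii)$: given standard $\epsilon > 0$, (ii) yields standard $n_0$ with $\int |F|\, dP - \int |F|_{n_0}\, dP \lesssim \epsilon$, i.e.\ $\int_{|F|>n_0}(|F|-n_0)\, dP \lesssim \epsilon$. For infinite $K$ the naive split $\int_{|F|>K}|F|\, dP = \int_{|F|>K}(|F|-K)\, dP + K \cdot P(|F|>K)$ fails to control the last term directly, and this is the main obstacle. The crux is the doubling inequality $|F| \cdot \mathbf{1}_{\{|F|>K\}} \leq 2(|F|-K/2) \mathbf{1}_{\{|F|>K/2\}}$, which gives $\int_{|F|>K}|F|\, dP \leq 2 \int_{|F|>K/2}(|F|-K/2)\, dP \lesssim 2\epsilon$ since $K/2 > n_0$. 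As $\epsilon$ is arbitrary, $\int_{|F|>K}|F|\, dP$ is infinitesimal, closing the cycle.
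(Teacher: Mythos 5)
The paper never proves this statement: it is quoted as Theorem~6.2 of Section~4 of \citet{NSAA97} in the background appendix on nonstandard analysis, so there is no in-paper argument to compare yours against. Judged on its own, your proof is correct and is essentially the classical Anderson--Loeb argument for this characterization of S-integrability. The two load-bearing ingredients --- the bounded-case identity $\ST\bigl(\int F_n\,\dee P\bigr)=\int \ST(F_n)\,\dee\Loeb{P}$ for standard truncation levels, and underspill applied to internal sets of the form $\{n: \int_{|F|>n}(|F|-n)\,\dee P<\epsilon\}$ --- are exactly the right tools, and the doubling inequality $|F|\,\mathbf{1}_{\{|F|>K\}}\le 2\,(|F|-K/2)\,\mathbf{1}_{\{|F|>K/2\}}$ cleanly disposes of the one genuinely delicate step, (ii)$\Rightarrow$(iii), where the naive tail split leaves the term $K\cdot P(|F|>K)$ uncontrolled. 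Two cosmetic remarks. First, the appeal to saturation in (iii)$\Rightarrow$(iv) is unnecessary: if $P(B)>0$ is infinitesimal then $K=P(B)^{-1/2}$ already satisfies $K\,P(B)\approx 0$, and if $P(B)=0$ any infinite $K$ works. Second, in (iii)$\Rightarrow$(i) you should state explicitly that finiteness of $\ST\bigl(\int|F|\,\dee P\bigr)$ is secured before monotone convergence is invoked; your inequality $\int|F|\,\dee P\le \int|F|_{n_0}\,\dee P+\epsilon\le n_0+\epsilon$ does supply this, so the gap is only expository.
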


We conclude this section with the following theorem which guarantees the existence of an S-integrable lifting for every real-valued Loeb integrable function. 

\begin{theorem}[{\citep[][Section.~4, Theorem.~6.4]{NSAA97}}]\label{extsintlft}
Let $(S, I(S), P)$ be a hyperfinite probability space, and $(S, \Loeb{I(S)}, \Loeb{P})$ be the Loeb space. 
Let $f: S\to \Reals$ be Loeb measurable. Then $f$ is integrable if and only if it has an S-integrable lifting. 
\end{theorem}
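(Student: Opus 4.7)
The plan is to establish both implications separately. The backward direction is immediate from Definition \ref{defsint}: if $F$ is an S-integrable lifting of $f$, then $f = \ST(F)$ holds $\Loeb{P}$-almost surely, and the S-integrability condition directly forces $\Loeb{P}$-integrability of $f = \ST(F)$.

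For the forward direction, suppose $f \in \mathcal{L}^{1}(\Loeb{P})$. By decomposing $f = f^{+} - f^{-}$, reduce to the case $f \geq 0$. First apply the standard lifting theorem (valid because $\Reals$ is second countable) to obtain an internal $F_{0}: S \to \NSE{\Reals}$ with $\ST(F_{0}) = f$ $\Loeb{P}$-a.s. Integrability forces $f < \infty$ almost surely, so $F_{0}$ is near-standard on a $\Loeb{P}$-conull set. However, $F_{0}$ itself may fail to be S-integrable, since it could be enormous on a Loeb-null set, so the strategy is to kill this pathology by truncating at an infinite level.

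The central step is the truncation via overspill. For each standard $n \in \Nats$, the internal function $F_{0} \wedge n$ is bounded, hence trivially S-integrable, and since $x \mapsto x \wedge n$ is continuous it is a lifting of $f \wedge n$; therefore $\int (F_{0} \wedge n)\, dP \approx \int (f \wedge n)\, d\Loeb{P}$, and by monotone convergence the right-hand side converges upward to $c := \int f\, d\Loeb{P} < \infty$. For each $m \in \Nats$, define the internal set $A_{m} := \{n \in \NSE{\Nats} : n \geq m \text{ and } |\int (F_{0} \wedge n)\, dP - c| \leq 1/m\}$. Since $c$ is a specific (standard) hyperreal constant, each $A_{m}$ is genuinely internal, and by the standard convergence above each $A_{m}$ contains all sufficiently large standard $n$. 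Thus $\{A_{m}\}_{m \in \Nats}$ has the finite intersection property, and countable saturation yields some $N \in \bigcap_{m} A_{m}$, which is necessarily infinite and satisfies $\int (F_{0} \wedge N)\, dP \approx c$.

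Set $F := F_{0} \wedge N$. Because $N$ is infinite and $f$ is $\Loeb{P}$-a.s.\ finite, $F$ and $F_{0}$ agree on a conull set, so $\ST(F) = f$ $\Loeb{P}$-a.s., confirming $F$ is still a lifting of $f$. Moreover $\int F\, dP \approx c = \int f\, d\Loeb{P} = \int \ST(F)\, d\Loeb{P}$, which, together with Loeb integrability of $\ST(F) = f$, is exactly the defining condition of S-integrability in Definition \ref{defsint}. The main obstacle is the overspill step: one must carefully package the mixed internal/external data so that the auxiliary sets $A_{m}$ are genuinely internal. This is why $c$ is introduced as a single standard real constant rather than left implicit as a limit, and why the truncation is performed at an infinite level rather than on an external parameter.
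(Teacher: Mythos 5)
This theorem is not proved in the paper at all: it is stated in the background section on nonstandard analysis and imported verbatim from Loeb--Wolff \citep[Section~4, Theorem~6.4]{NSAA97}, so there is no in-paper argument to compare against. Your proof is essentially the standard textbook argument for this fact (truncate an arbitrary lifting at an infinite level chosen by saturation so that the internal integral lands on the standard value), and as written it is correct in structure: the backward direction is indeed immediate from \cref{defsint}, the sets $A_m$ are internal because $c$ is a standard constant and $n\mapsto \int (F_0\wedge n)\,\dee P$ is an internal map, each $A_m$ contains a tail of $\Nats$, the family is decreasing so countable saturation applies, and $F_0\wedge N=F_0$ on the conull set where $\ST(F_0)$ exists and is finite. (Terminology aside: you announce overspill but actually run a saturation argument; either works here.)

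Two small patches are needed. First, after reducing to $f\ge 0$ you take an arbitrary lifting $F_0$, but the claim that $F_0\wedge n$ is ``bounded, hence trivially S-integrable'' requires a bound from below as well; $F_0$ could be hugely negative on a $P$-null internal set, in which case $\int (F_0\wedge n)\,\dee P$ need not be near $\int (f\wedge n)\,\dee \Loeb{P}$. Replace $F_0$ by $F_0\vee 0$ (still a lifting of $f$ since $f\ge 0$) before truncating; then $0\le F_0\wedge n\le n$ and the approximation of integrals for S-bounded internal functions applies, and at the end $|F|=F$ so the defining condition $\int|F|\,\dee P\approx\int\ST(|F|)\,\dee\Loeb{P}$ is exactly what you verified. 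Second, the reduction $f=f^{+}-f^{-}$ silently uses that the difference of two nonnegative S-integrable liftings is an S-integrable lifting of the difference; this is routine via criterion (iv) of \cref{sintmain} (since $|F^{+}-F^{-}|\le F^{+}+F^{-}$), but it should be said. With these one-line repairs the proof is complete.
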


\subsection{Existence of Equilibrium in Measure-theoretic Quota Economy}

Both \cref{finitemainexpd} and \cref{standardmain} consider non-free-disposal equilibrium, which requires that demand exactly equals supply for each commodity. In this section, we incorporate the quota regulatory scheme, developed in \citet{ad25}, into measure-theoretic production economies. Doing so allows one to limit the total amount of bads disposed to a prespecified positive level.

\begin{definition}\label{defmeasureqteco}
A measure-theoretic quota economy  
\[
\mathcal E\equiv \{(X, \succ_{\omega}, P_{\omega}, e_{\omega}, \theta)_{\omega\in \Omega}, (Y_j)_{j\in J}, (\Omega, \cB, \mu), (m^{(j)})_{j\in J}, \mathcal{Z}(m)\}\nonumber
\] 
is a list such that: 
\begin{enumerate}[(i), leftmargin=*]
\item $(X, \succ_{\omega}, P_{\omega}, e_{\omega}, \theta)_{\omega\in \Omega}$ and $(\Omega, \cB, \mu)$ are defined the same as in \cref{defmeasureeco};
\item As in \cref{defmeasureeco}, $J$ is a finite set of firms. However, firms are categorized into two types: private firms and a single government firm. The government firm, denoted as firm $0$, has the production set $\{0\}$. For each private firm $j\in J$, its production set $Y_{j}\subset \Reals^{\ell}$ is a non-empty subset. We write $Y=\prod_{j\in J}Y_{j}$; 
\item The government chooses to regulate the first $t\leq \ell$ commodities and assigns quotas on regulated commodities to the firms. For each $j\in J$, define $m^{(j)}\in \NegReals^{t}$ to be the negative of the quota for the firm $j$. Let $m=\sum_{j\in J}m^{(j)}$. 
The \emph{quota-compliance region}
$\mathcal{Z}(m)=\{m\}\times \{0\}^{\ell-t}$ is a convex subset of $\NegReals^{\ell}$. 
\end{enumerate}
\end{definition}

\cref{defmeasureqteco} is the measure-theoretic version of the quota equilibrium model in \citet{ad25}. We note that the set of regulated commodities need not be the same as the set of bads in \cref{assumptionalocbound}, since the society may choose to tolerate certain bads.

For every $\omega\in \Omega$, $p\in \Delta$ and $y\in Y$, the \emph{quota budget set} $B_{\omega}^{m}(y, p)$ is defined as
\[
\{z\in X_{\omega}: p\cdot z \leq p\cdot e(\omega) + \sum_{j\in J}\theta_{\omega j}\big(p\cdot y(j)+\proj{t}(p)\cdot m^{(j)}\big)\}.\nonumber
\]
For each private firm $j\in J$, since the firm can emit the first $t$ commodities freely up to its quota $m^{(j)}$, the firm's profit at a given price $p$ is $p\cdot y(j)+\proj{t}(p)\cdot m^{(j)}$.\footnote{If a private firm emits less than its quota, then the firm generates additional revenue by selling its remaining quota to other firms. If a private firm emits more than its quota, then the firm needs to purchase quota from other firms.} The government firm's profit comes solely from selling its quota. In particular, the government firm's profit at a given price $p$ is $p\cdot y(0)+\proj{t}(p)\cdot m^{(0)}=\proj{t}(p)\cdot m^{(0)}$. Hence, the consumer's budget consists of the value of her endowment and dividend from firms. 
For $\omega\in \Omega$ and $(x,y,p)\in \mathcal{L}^{1}(\Omega, \NNReals^{\ell})\times Y\times \Delta$, the \emph{quota demand set} $D^{m}_{\omega}(x,y,p)$ is
\[
\{z \in B^{m}_{\omega}(y,p): w \succ_{x,y,\omega,p} z\implies w\not\in B^{m}_{\omega}(y,p)\}.\nonumber
\]
Given a price $p$, the firm $j$'s supply set $S_{j}^{m}(p)$ is $\argmax_{z\in Y_{j}}\big(p\cdot z+\proj{t}(p)\cdot m^{(j)}\big)$. As $\proj{t}(p)\cdot m^{(j)}$ does not depend on the firm's production plan, $S_{j}^{m}(p)=\argmax_{z\in Y_{j}}p\cdot z$. 
All firms' profits depend only on prices and their own production.

\begin{definition}\label{def_mameqprodqt}
Let $\mathcal{E}=\{(X, \succ_{\omega}, P_\omega, e_\omega, \theta)_{\omega\in \Omega}, (Y_j)_{j\in J}, (\Omega, \cB, \mu), (m^{(j)})_{j\in J}, \mathcal{Z}(m)\}$ be a measure-theoretic quota economy. 
A $\mathcal{Z}(m)$-compliant quota equilibrium is $(\bar{x}, \bar{y}, \bar{p})\in \mathcal{A}\times Y\times \Delta$ such that the following conditions are satisfied:
\begin{enumerate}[(i), leftmargin=*]
\item $\bar{x}(\omega)\in D^{m}_{\omega}(\bar{x},\bar{y},\bar{p})$ for almost all $\omega\in \Omega$;
\item $\bar{y}(j)\in S^{m}_{j}(\bar{p})$ for all $j\in J$. Every firm is profit maximizing given the price $\bar{p}$;
\item $\int_{\Omega}\bar{x}(\omega)\mu(\dee \omega)-\int_{\Omega}e(\omega)\mu(\dee \omega)-\sum_{j\in J}\bar{y}(j)\in \mathcal{Z}(m)$.
\end{enumerate} 
\end{definition}
The quota-compliance region $\mathcal{Z}(m)$ and the feasibility constraint $\sum_{\omega\in \Omega}\bar{x}(\omega)-\sum_{\omega\in \Omega}e(\omega)-\sum_{j\in J} \bar{y}(j)\in \mathcal{Z}(m)$ jointly imply that, at equilibrium, the total net emission of the regulated commodities equals the pre-specified total quota, which is the aggregation of the government firm's quota and private firms' quota. 
The set of quota-compliant consumption-production pair of $\mathcal{E}$ is
\[
\cO^{m}=\left\{(x,y)\in \mathcal{L}^{1}(\Omega, \NNReals^{\ell})\times Y: \int_{\Omega}x(\omega) \mu(\dee \omega)-\int_{\Omega}e(\omega)\mu(\dee \omega)-  \sum_{j\in J}y(j)\in \mathcal{Z}(m)\right\}.\nonumber
\]
For $\epsilon>0$, let $\mathcal{Z}(m)_{\epsilon}$ be the $\epsilon$-neighborhood of $\{m\}\times \{0\}^{\ell-t}$.    
The set of \emph{$\epsilon$-quota-compliant consumption-production pair} for the measure-theoretic quota economy $\mathcal{E}$ is
\[
\cO^{m}_{\epsilon}=\left\{(x,y)\in \mathcal{L}^{1}(\Omega, \NNReals^{\ell})\times Y: \int_{\Omega} x(\omega)\mu(\dee \omega)-\int_{\Omega}e(\omega)\mu(\dee \omega)-\sum_{j\in J}{y}(j)\in \mathcal{Z}(m)_{\epsilon}\right\}.\nonumber
\]
Our main result of this section establishes the existence of a quota equilibrium for measure-theoretic quota economies: 

\begin{customthm}{3}\label{standardmainqt}
Let $\mathcal{E}$ be a measure-theoretic quota economy as in \cref{defmeasureqteco}. Suppose $\mathcal{E}$
satisfies \cref{assumptionmseco}, \cref{assumptionalocbound},\footnote{The set $\cO_{\epsilon_{s}}$ in \cref{goodmonotone} of \cref{assumptionalocbound} needs to be replaced by $\cO^{m}_{\epsilon_{s}}$.} \cref{assumptionwkcts}, and the following conditions:
\begin{enumerate}[(i), leftmargin=*]
\item for almost all $\omega\in \Omega$, $P_\omega$ takes value in  $\mathcal{P}_{H}^{-}$;
\item\label{qtmeresto0} there exists $\Omega_0\subset \Omega$ of positive measure such that, for every $\omega\in \Omega_0$, the set $X_{\omega} - \sum_{j\in J}\theta_{\omega j}(Y_j+\{E(m^{(j)})\})$\footnote{For all $j\in J$, $E(m^{(j)})\in \NegReals^{\ell}$ is the vector such that its projection to the first $t$-th coordinates is $m^{(j)}$ and its other coordinates are $0$.} has non-empty interior $U_{\omega}\subset \Reals^{\ell}$ and $e(\omega)\in U_{\omega}$;
\item \label{qtmercdexplain} there exists 
a commodity $s\in \{1,2,\dotsc,\ell\}$ such that:
\begin{itemize}[leftmargin=*]
    \item for every $\omega\in \Omega_0$, the projection $\pi_{s}(X_{\omega})$ is unbounded, and the consumer $\omega$ has a strongly monotone preference on the commodity $s$;
    \item for almost all $\omega\in \Omega$, there is an open set $V_{\omega}$ containing the $s$-th coordinate $e(\omega)_{s}$ of $e(\omega)$ such that $(e(\omega)_{-s},v)\in X_{\omega} - \sum_{j\in J}\theta_{\omega j}(Y_j+\{E(m^{(j)})\})$ for all $v\in V_{\omega}$.
\end{itemize}
\item\label{qtmersatiate} for some $\epsilon>0$, for almost all $\omega\in \Omega$ and all $(x,y)\in \cO^{m}_{\epsilon}$ such that $x(\omega)\in X_{\omega}$, there exists $u\in X_{\omega}$ such that $(u, x(\omega))\in \bigcap_{p\in \Delta}P_{\omega}(x,y,p)$;

\item The aggregate production set $\bar{Y}$ is closed and convex, $\bar Y \cap (-\bar Y)=\bar Y \cap \NNReals^{\ell}=\{0\}$, and $Y_j$ is closed for all $j\in J$.
\end{enumerate}
Then, $\cO^{m}$ is non-empty, i.e., it is feasible to achieve the quota, and $\mathcal{E}$ has a quota equilibrium. 
\end{customthm}

Since firms obtain profit from the property right of pre-assigned quota, the relevant production set for firm $j$ is $Y_{j}+E(m^{(j)})$. Thus, \cref{qtmeresto0} and \cref{qtmercdexplain} of \cref{standardmainqt} are similar, and play the same role as our survival assumption \cref{assumptionsurvival}. 
The proof of \cref{standardmainqt} follows from \cref{standardmain} and shifting the production set of each firm by its pre-assigned quota. 

\begin{proof}[Proof of \cref{standardmainqt}]
By \cref{qtmeresto0} and the second bullet of \cref{qtmercdexplain} in the assumptions of \cref{standardmainqt}, we have $e(\omega)\in X_{\omega}-\sum_{j\in J}\theta_{\omega j}(Y_j+\{E(m^{(j)})\})$ for almost all $\omega\in \Omega$. So the set $\cO^{m}$ of quota-compliant consumption-production pairs is non-empty, hence is feasible to achieve the quota. 
Let $\mathcal{E}'=\{(X, \succ'_{\omega}, P'_{\omega}, e_{\omega}, \theta)_{\omega\in \Omega}, (Y'_j)_{j\in J}, (\Omega, \cB, \mu)\}$ be a measure-theoretic production economy with quota: 
\begin{enumerate}[leftmargin=*]
    \item $Y'_{j}=Y_{j}+\{E(m^{(j)})\}$ for all $j\in J$. Let $Y'=\prod_{j\in J'}Y'_{j}$; 
    \item We only provide a rigorous definition of the induced preference map $P'_{\omega}$ \footnote{The consumer's global preference relation $\succ'_{\omega}$ is defined similarly. To establish the existence of a quota equilibrium, one only needs to work with the preference map $P'_{\omega}$.}. For $y\in Y'$, let $y(\mathcal{E})\in Y$ be such that $y(\mathcal{E})_{j}=y_{j}-E(m^{(j)})$ for all $j\in J$. 
    For $\omega\in \Omega$, the preference map $P'_{\omega}: \mathcal{L}^{1}(\Omega, \NNReals^{\ell})\times Y'\times \Delta\to \mathcal{P}$ is given by
    \[
    P'_{\omega}(x,y,p)=(X_{\omega},\{(a, b)\in X_{\omega}\times X_{\omega} |(x,y(\mathcal{E}),p,a)\succ_{\omega}(x,y(\mathcal{E}),p,b)\})=P_{\omega}(x,y(\mathcal{E}),p).\nonumber
    \]  
\end{enumerate}
To show that the derived economy $\mathcal{E}'$ has an equilibrium, we must verify that $\mathcal{E}'$ satisfies the assumptions of \cref{standardmain}. It is easy to see that: 
\begin{enumerate}[leftmargin=*]
    \item  \cref{assumptionmseco}, \cref{assumptionalocbound} and \cref{assumptionwkcts} are satisfied;
    \item By the construction of $P'_{\omega}$, $P'_{\omega}$ takes value in $\mathcal{P}_{H}^{-}$ for almost all $\omega\in \Omega$;
    \item there exists $\Omega_0\subset \Omega$ of positive measure such that, for every $\omega\in \Omega_0$, the set $X_{\omega} - \sum_{j\in J}\theta_{\omega j}Y'_j$ has non-empty interior $U_{\omega}\subset \Reals^{\ell}$ and $e(\omega)\in U_{\omega}$;
\item there exists 
a commodity $s\in \{1,2,\dotsc,\ell\}$ such that:
\begin{itemize}
    \item for every $\omega\in \Omega_0$, the projection $\pi_{s}(X_{\omega})$ is unbounded, and the consumer $\omega$ has a strongly monotone preference on the commodity $s$;
    \item for almost all $\omega\in \Omega$, there is an open set $V_{\omega}$ containing the $s$-th coordinate $e(\omega)_{s}$ of $e(\omega)$ such that $(e(\omega)_{-s},v)\in X_{\omega} - \sum_{j\in J}\theta_{\omega j}Y'_j$ for all $v\in V_{\omega}$;
\end{itemize}  
\item $\bar{Y'}$ is closed and convex, and $Y'_{j}$ is closed for all $j\in J$.
\end{enumerate}

Let $\cO'$ be the set of attainable consumption-production pairs for $\mathcal{E}'$. For $\epsilon>0$, let $\cO'_{\epsilon}$ be the set of $\epsilon$-attainable consumption-production pairs for $\mathcal{E}'$. 
\begin{claim}\label{eattainmod}
For some $\epsilon>0$, almost all $\omega\in \Omega$ and all $(x, y)\in \cO'_{\epsilon}$ such that $x(\omega)\in X_{\omega}$, there exists $u\in X_{\omega}$ such that $(u, x(\omega))\in \bigcap_{p\in \Delta}P'_{\omega}(x,y,p)$.
\end{claim}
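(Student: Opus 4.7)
The plan is to derive \cref{eattainmod} directly from \cref{qtmersatiate} by using the natural bijection between production plans in $Y'$ and $Y$ induced by the shift $y \mapsto y(\mathcal{E})$, where $y(\mathcal{E})_j = y(j) - E(m^{(j)})$. The key observation is that this shift converts $\epsilon$-attainability in $\mathcal{E}'$ into $\epsilon$-quota-compliance in $\mathcal{E}$, and it preserves the preference maps by construction of $P'_\omega$.

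First I would compute the shift's effect on the aggregate constraint. For any $(x,y) \in \mathcal{L}^{1}(\Omega, \NNReals^{\ell}) \times Y'$, setting $y' = y(\mathcal{E})$ gives $y'(j) = y(j) - E(m^{(j)}) \in Y_j$, and hence
\[
\int_{\Omega} x\, d\mu - \int_{\Omega} e\, d\mu - \sum_{j\in J} y'(j) = \left(\int_{\Omega} x\, d\mu - \int_{\Omega} e\, d\mu - \sum_{j\in J} y(j)\right) + E(m),
\]
where $m = \sum_{j \in J} m^{(j)}$ and $E(m) \in \mathcal{Z}(m)$. Thus if $(x,y) \in \cO'_{\epsilon}$, the first term on the right lies in the $\epsilon$-ball around $0$, so the left-hand side lies in $\mathcal{Z}(m)_{\epsilon}$, which is precisely the condition $(x, y(\mathcal{E})) \in \cO^{m}_{\epsilon}$.

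Next I would choose the $\epsilon > 0$ for \cref{eattainmod} to be the same $\epsilon$ provided by \cref{qtmersatiate}, and let $N \subset \Omega$ be the (measure-zero) exceptional set from that assumption. For every $\omega \in \Omega \setminus N$ and every $(x,y) \in \cO'_{\epsilon}$ with $x(\omega) \in X_\omega$, applying \cref{qtmersatiate} to $(x, y(\mathcal{E})) \in \cO^{m}_{\epsilon}$ yields some $u \in X_\omega$ with $(u, x(\omega)) \in \bigcap_{p\in \Delta} P_\omega(x, y(\mathcal{E}), p)$. By the very definition of $P'_\omega$ in the construction of $\mathcal{E}'$, namely $P'_\omega(x,y,p) = P_\omega(x, y(\mathcal{E}), p)$ for all $p \in \Delta$, this element $u$ also satisfies $(u, x(\omega)) \in \bigcap_{p\in \Delta} P'_\omega(x, y, p)$, which is the conclusion of the claim.

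There is essentially no obstacle here, since the argument is a routine change of variables: the shift by $E(m)$ is an affine bijection between the relevant feasibility sets, and the preference maps of $\mathcal{E}'$ were designed to be unchanged under this shift. The only subtlety worth noting is that the null set $N$ from \cref{qtmersatiate} is independent of the pair $(x,y)$, so the same null set works uniformly for all $(x,y) \in \cO'_{\epsilon}$ in the statement of \cref{eattainmod}.
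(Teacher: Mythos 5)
Your proposal is correct and follows essentially the same route as the paper's own proof: pick the $\epsilon$ from the quota non-satiation assumption, observe that the shift $y\mapsto y(\mathcal{E})$ carries $\cO'_{\epsilon}$ into $\cO^{m}_{\epsilon}$, and transfer the satiation point $u$ via the identity $P'_{\omega}(x,y,p)=P_{\omega}(x,y(\mathcal{E}),p)$. The only difference is that you spell out the aggregate-constraint computation that the paper leaves implicit, which is a harmless (and arguably welcome) addition.
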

\begin{proof}
Pick the same $\epsilon$ as in \cref{qtmersatiate} of \cref{standardmainqt}. For almost all $\omega\in \Omega$ and all $(x, y)\in \cO'_{\epsilon}$ such that $x(\omega)\in X_{\omega}$, we have $(x, y(\mathcal{E}))\in \cO^{m}_{\epsilon}$ and $x(\omega)\in X_{\omega}$, hence there exists $u\in X_{\omega}$ such that $(u, x(\omega))\in \bigcap_{p\in \Delta}P_{\omega}(x,y(\mathcal{E}),p)$. As $P'_{\omega}(x,y,p)=P_{\omega}(x,y(\mathcal{E}),p)$ for all $p\in \Delta$, we have $(u, x(\omega))\in \bigcap_{p\in \Delta}P'_{\omega}(x,y,p)$.
\end{proof}
Recall that, for any set $B\subset \Reals^{\ell}$, $\rc(B)$ denote the recession cone of $B$. By the proof of \cref{thm_attainable}, it is sufficient to show that $\rc(\bar{X})\cap (-\rc(\bar{X}))=\{0\}$, $\rc(\bar{X})\cap \rc(\bar{Y'})=\{0\}$ and $\bar{Y'}\cap (-\bar{Y'})=\{0\}$. As $\rc(\bar{X})\subset \NNReals^{\ell}$, we have $\rc(\bar{X})\cap (-\rc(\bar{X}))=\{0\}$. Note that $\bar{Y'}=\bar{Y}+\{E(m)\}$. As $\rc(\bar{X})\cap \rc(\bar{Y})=\{0\}$ and $\bar{Y}\cap (-\bar{Y'})=\{0\}$, we have $\rc(\bar{X})\cap \rc(\bar{Y'})=\{0\}$ and $\bar{Y'}\cap (-\bar{Y'})=\{0\}$.

By \cref{standardmain},
there is an equilibrium $(\bar{x}, \bar{y}, \bar{p})$ for $\mathcal{E}'$.
We now show that $(\bar{x}, \bar{y}(\mathcal{E}), \bar{p})$ is a quota equilibrium for $\mathcal{E}$:
\begin{enumerate}[leftmargin=*]
    \item Note that we have $\bar{p}\cdot \bar{y}(j)=\bar{p}\cdot \bar{y}(\mathcal{E})(j)+\proj{k}(\bar{p})\cdot m^{(j)}$. For every $j\in J$, we have $\bar{y}(j)\in \argmax_{z\in Y'_{j}}\bar{p}\cdot z$. As $\proj{k}(\bar{p})\cdot m^{(j)}$ is a constant over $Y_{j}$, we have $\bar{y}(\mathcal{E})(j)\in S^{m}_{j}(\bar{p})$ for all $j\in J$;
    \item As $\int_{\Omega}\bar{x}(\omega)\mu(\dee \omega)-\int_{\Omega}e(\omega)\mu(\dee \omega)-\sum_{j\in J} \bar{y}(j)=0$, we have 
    \[
    &\int_{\Omega}\bar{x}(\omega)\mu(\dee \omega)-\int_{\Omega}e(\omega)\mu(\dee \omega)-\sum_{j\in J} \bar{y}(\mathcal{E})(j) \nonumber\\ 
    &=\int_{\Omega}\bar{x}(\omega)\mu(\dee \omega)-\int_{\Omega}e(\omega)\mu(\dee \omega)-\sum_{j\in J} \bar{y}(j)+E(m)\in \mathcal{Z}(m). \nonumber
    \]
\end{enumerate}
\begin{claim}\label{derebatein}
$\bar{x}(\omega)\in D^{m}_{\omega}(\bar{x},\bar{y}(\mathcal{E}),\bar{p})$ for almost all $\omega\in \Omega$. 
\end{claim}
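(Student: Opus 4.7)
The plan is to show that the demand set for consumer $\omega$ in the derived economy $\mathcal{E}'$ at $(\bar{x}, \bar{y}, \bar{p})$ coincides with the quota demand set in $\mathcal{E}$ at $(\bar{x}, \bar{y}(\mathcal{E}), \bar{p})$, so that the conclusion follows directly from the fact that $(\bar{x},\bar{y},\bar{p})$ is an equilibrium for $\mathcal{E}'$. The argument reduces to two elementary identities: one for the budget sets and one for the preference maps.

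First I would compare budget sets. For every $j\in J$, since $\bar{y}(\mathcal{E})(j)=\bar{y}(j)-E(m^{(j)})$ and $\bar{p}\cdot E(m^{(j)})=\proj{t}(\bar{p})\cdot m^{(j)}$ (by the definition of $E(m^{(j)})$ as extending $m^{(j)}$ by zeros), we have
\[
\bar{p}\cdot \bar{y}(\mathcal{E})(j)+\proj{t}(\bar{p})\cdot m^{(j)}=\bar{p}\cdot \bar{y}(j).
\]
Summing against shareholdings $\theta_{\omega j}$ and adding $\bar{p}\cdot e(\omega)$ shows that the quota budget set $B^{m}_{\omega}(\bar{y}(\mathcal{E}),\bar{p})$ of $\mathcal{E}$ coincides, as a subset of $X_{\omega}$, with the standard budget set $B_{\omega}(\bar{y},\bar{p})$ of $\mathcal{E}'$.

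Next I would observe that by construction $P'_{\omega}(x,y,p)=P_{\omega}(x,y(\mathcal{E}),p)$ for all $(x,y,p)$. Hence the strict-preference relation driving the quota demand set in $\mathcal{E}$ at $(\bar{x},\bar{y}(\mathcal{E}),\bar{p})$ is literally the same binary relation on $X_{\omega}$ as the one driving the standard demand set in $\mathcal{E}'$ at $(\bar{x},\bar{y},\bar{p})$. Combined with the budget-set identity of the previous paragraph, this yields
\[
D_{\omega}(\bar{x},\bar{y},\bar{p})=D^{m}_{\omega}(\bar{x},\bar{y}(\mathcal{E}),\bar{p}).
\]
Since $(\bar{x},\bar{y},\bar{p})$ is an equilibrium for $\mathcal{E}'$, $\bar{x}(\omega)\in D_{\omega}(\bar{x},\bar{y},\bar{p})$ for almost all $\omega\in\Omega$, so the claim follows.

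I do not anticipate a genuine obstacle here; the entire content of the claim is already encoded in the definitions of $Y'_j$, $P'_{\omega}$, and the quota budget/profit accounting. The only thing to watch for is keeping the bookkeeping straight between the two economies: the $\proj{t}(\bar{p})\cdot m^{(j)}$ term appears on the profit side of the budget inequality in $\mathcal{E}$ but is absorbed into the production vector $\bar{y}(j)=\bar{y}(\mathcal{E})(j)+E(m^{(j)})$ in $\mathcal{E}'$, and these two accountings cancel exactly because $E(m^{(j)})$ is $m^{(j)}$ extended by zeros on the unregulated coordinates.
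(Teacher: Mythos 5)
Your proposal is correct and follows essentially the same route as the paper's own proof: both establish the identity $\bar{p}\cdot \bar{y}(j)=\bar{p}\cdot \bar{y}(\mathcal{E})(j)+\proj{t}(\bar{p})\cdot m^{(j)}$ to identify the two budget sets, invoke $P'_{\omega}(x,y,p)=P_{\omega}(x,y(\mathcal{E}),p)$ to identify the preference relations, and conclude that the demand sets coincide so the equilibrium property of $(\bar{x},\bar{y},\bar{p})$ in $\mathcal{E}'$ transfers directly. Your explicit remark that the cancellation works because $E(m^{(j)})$ extends $m^{(j)}$ by zeros on the unregulated coordinates is a helpful piece of bookkeeping that the paper leaves implicit.
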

\begin{proof}
Note that $\bar{p}\cdot \bar{y}(j)=\bar{p}\cdot \bar{y}(\mathcal{E})(j)+\proj{k}(\bar{p})\cdot m^{(j)}$ for all $j\in J$. 
Thus, for all $\omega\in \Omega$, the budget set $B'_{\omega}(\bar{y}, \bar{p})$ for consumer $\omega$ of the economy $\mathcal{E}'$ can be written as: 
\[
\left\{z\in X_{\omega}: \bar{p}\cdot z \leq \bar{p}\cdot e(\omega)+\sum_{j\in J}\theta_{\omega j}\big(\bar{p}\cdot \bar{y}(\mathcal{E})(j)+\proj{k}(\bar{p})\cdot m^{(j)}\big)\right\}, \nonumber
\]
which is the same as the quota budget set $B^{m}_{\omega}(\bar{y}(\mathcal{E}),\bar{p})$ of the economy $\mathcal{E}$. As $P_{\omega}(\bar{x},\bar{y}(\mathcal{E}),\bar{p})=P'_{\omega}(\bar{x},\bar{y},\bar{p})$ for all $\omega\in \Omega$, the quota demand set $D'_{\omega}(\bar{x},\bar{y}, \bar{p})$ for consumer $\omega$ of the economy $\mathcal{E}'$ is the same as the quota demand set $D^{m}_{\omega}(\bar{x},\bar{y}(\mathcal{E}),\bar{p})$ of the economy $\mathcal{E}$. We conclude that $\bar{x}(\omega)\in D^{m}_{\omega}(\bar{x},\bar{y}(\mathcal{E}),\bar{p})$ for almost all $\omega\in \Omega$.
\end{proof}
By \cref{derebatein},  $(\bar{x}, \bar{y}(\mathcal{E}), \bar{p})$ is a quota equilibrium for $\mathcal{E}$.
\end{proof}

\end{document}